\documentclass[superscriptaddress,aps,twocolumn,10pt,longbibliography,nofootinbib]{revtex4-2}
\usepackage[utf8]{inputenc}
\usepackage[T1]{fontenc}
\usepackage{amsmath}
\usepackage{amssymb}
\usepackage{graphicx}
\usepackage{bm}
\usepackage{float}
\usepackage{multirow}
\usepackage{tabu}
\usepackage{longtable} 
\usepackage[dvipsnames]{xcolor}
\usepackage{soul}
\sethlcolor{yellow}
\usepackage[normalem]{ulem}
\usepackage{ragged2e}
\usepackage{subfigure}
\usepackage{epstopdf}
\usepackage{etoolbox}
\usepackage{booktabs}
\usepackage{braket}
\usepackage{bbm}
\usepackage{amsthm}

\usepackage{natbib}
\usepackage{qcircuit}
\usepackage{wasysym}
\usepackage{array}

\newcolumntype{L}[1]{>{\raggedright\let\newline\\\arraybackslash\hspace{0pt}}m{#1}}
\newcolumntype{C}[1]{>{\centering\let\newline\\\arraybackslash\hspace{0pt}}m{#1}}
\newcolumntype{R}[1]{>{\raggedleft\let\newline\\\arraybackslash\hspace{0pt}}m{#1}}

\newcommand{\tr}{\mathrm{Tr}}
\newcommand{\rhosep}{\rho_{\mathrm{sep}}}
\newcommand{\Zn}{Z^{\otimes N}}
\newcommand{\Xn}{X^{\otimes N}}
\newcommand{\Yn}{Y^{\otimes N}}
\newcommand{\norm}[1]{\left\lVert #1 \right\rVert}

\newtheorem{theorem}{Theorem}
\newtheorem{definition}{Definition}
\newtheorem{lemma}{Lemma}

\begin{document}
\bibliographystyle{plainnat}

\title{Practical Verification of Quantum Properties\\ in Quantum Approximate Optimization Runs}

 \author{M. Sohaib Alam}
 \email{sohaib@rigetti.com}
 \affiliation{Rigetti Computing, Berkeley, CA, 94701, USA}
 \author{Filip A. Wudarski}
 \email{filip.a.wudarski@nasa.gov}
 \affiliation{Quantum Artificial Intelligence Laboratory (QuAIL), NASA Ames Research Center, Moffett Field, CA, 94035, USA}
 \affiliation{USRA Research Institute for Advanced Computer Science (RIACS), Mountain View, CA, 94043, USA}
 \author{Matthew J. Reagor}
 \email{matt@rigetti.com}
 \affiliation{Rigetti Computing, Berkeley, CA, 94701, USA}
\author{James Sud}
\affiliation{Quantum Artificial Intelligence Laboratory (QuAIL), NASA Ames Research Center, Moffett Field, CA, 94035, USA}
 \affiliation{USRA Research Institute for Advanced Computer Science (RIACS), Mountain View, CA, 94043, USA}
\author{Shon Grabbe}
\affiliation{Quantum Artificial Intelligence Laboratory (QuAIL), NASA Ames Research Center, Moffett Field, CA, 94035, USA}

\author{Zhihui Wang}
\affiliation{Quantum Artificial Intelligence Laboratory (QuAIL), NASA Ames Research Center, Moffett Field, CA, 94035, USA}
 \affiliation{USRA Research Institute for Advanced Computer Science (RIACS), Mountain View, CA, 94043, USA}
\author{Mark Hodson}
\affiliation{Rigetti Computing, Berkeley, CA, 94701, USA}

\author{P. Aaron Lott}
\affiliation{Quantum Artificial Intelligence Laboratory (QuAIL), NASA Ames Research Center, Moffett Field, CA, 94035, USA}
 \affiliation{USRA Research Institute for Advanced Computer Science (RIACS), Mountain View, CA, 94043, USA}
\author{Eleanor G. Rieffel}
\affiliation{Quantum Artificial Intelligence Laboratory (QuAIL), NASA Ames Research Center, Moffett Field, CA, 94035, USA}
\author{Davide Venturelli}
\email{dventurelli@usra.edu}
\affiliation{Quantum Artificial Intelligence Laboratory (QuAIL), NASA Ames Research Center, Moffett Field, CA, 94035, USA}
\affiliation{USRA Research Institute for Advanced Computer Science (RIACS), Mountain View, CA, 94043, USA}

\begin{abstract}

In order to assess whether quantum resources can provide an advantage over classical computation, it is necessary to characterize and benchmark the non-classical properties of quantum algorithms in a practical manner. In this paper, we show that using measurements in no more than 3 out of the possible $3^N$ bases, one can not only reconstruct the single-qubit reduced density matrices and measure the ability to create coherent superpositions, but also possibly verify entanglement across all $N$ qubits participating in the algorithm. We introduce a family of generalized Bell-type observables for which we establish an upper bound to the expectation values in fully separable states by proving a generalization of the Cauchy-Schwarz inequality, which may serve of independent interest. We demonstrate that a subset of such observables can serve as entanglement witnesses for QAOA-MaxCut states, and further argue that they are especially well tailored for this purpose by defining and computing an {\it entanglement potency} metric on witnesses. A subset of these observables also certify, in a weaker sense, the entanglement in GHZ states, which share the $\mathbb{Z}_2$ symmetry of QAOA-MaxCut. The construction of such witnesses follows directly from the cost Hamiltonian to be optimized, and not through the standard technique of using the projector of the state being certified. It may thus provide insights to construct similar witnesses for other variational algorithms prevalent in the NISQ era. We demonstrate our ideas with proof-of-concept experiments on the Rigetti Aspen-9 chip for ansatze containing up to 24 qubits.

\end{abstract}

\maketitle

\section{\label{sec:introduction}Introduction}

At present, there is great interest in designing and validating the implementation of solvers of computational problems that can achieve a \emph{quantum advantage}, i.e. superior performance with respect to other known classical methods. There is considerable research activity in the study of speedup metrics~\cite{ronnow2014defining} and in the certification and quantification of quantum properties (which we later also refer to as {\it quantumness}) in the context of variational quantum algorithms \cite{Woitzik_2020,Wiersema_2020,diezvalle2021quantum}.
It is however of utmost importance in the NISQ era to understand the role of coherence and entanglement, as well as to guarantee that the software-hardware systems that we use for the empirical exploration exploit these resources to outperform classical competitors.
This task is distinct from evaluating the fidelity of the experimental results with respect to a simulation, and instead focuses on verifying the key non-classical properties of a solver that may achieve quantum advantage on noisy hardware whose exact mechanics may not be easily described through analytical expressions or even numerical simulations.

\begin{figure}
    \centering
    \includegraphics[width=0.48\textwidth]{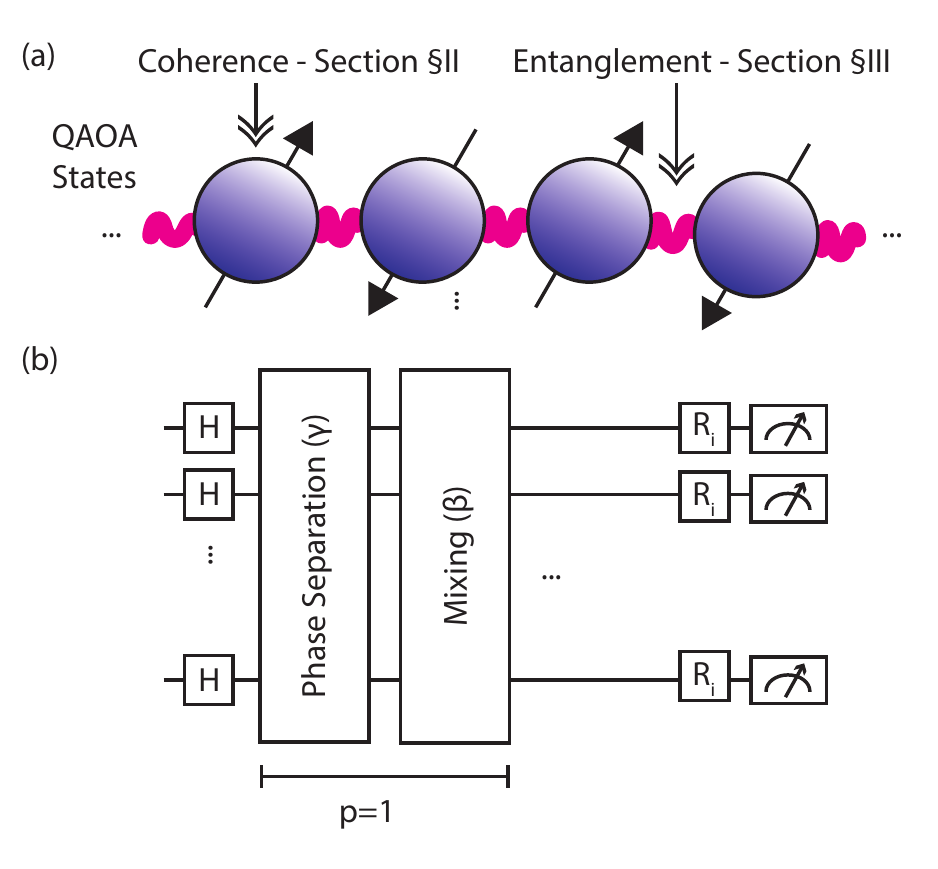}
    \caption{Practical Verification of Quantum Properties in QAOA: (a) QAOA leverages quantum coherent superposition and entanglement to perform combinatorial optimization. The goal of this work is to develop probes of Coherence (Sect.~\ref{sec:coherences}) and Entanglement (Sect.~\ref{sec:entanglement}) in states generated by the QAOA algorithm. (b) The specific ansatz studied here consists of an initial register state preparation via transversal Hadamard gates, followed by $p$ applications of phase separation and mixing Hamiltonians. To reveal coherence and entanglement properties, additional single-qubit gates rotate the measurement basis before reading out the state of the register. We analyze experimental results for linear chain problems up to twenty-four qubits and a single round ($p=1$) of QAOA.}
    \label{fig:overview}
\end{figure}

Bearing in mind the lively debate over approaches to verify the presence of quantum properties in experimental runs on quantum annealers~\cite{job2018test}, as well as the recent literature on quantum volume \cite{Moll_2018,PhysRevA.100.032328} and on entanglement characterization and detection in NISQ devices \cite{wang2020detecting,cornelissen2021scalable}, in this paper we take a pragmatic approach trying to answer the question of whether any quantum resources survive following the execution of a specific quantum algorithm, applied to a specific problem, run on a specific quantum processing device. Our proposed tests do not necessarily guarantee that the detected quantumness was exploited computationally prior to detection, however the analysis can be coupled with complementary simulations that can evaluate the computational value of the measured resource. While our algorithm of choice is the quantum approximate optimization algorithm \cite{farhi2014quantum} (QAOA), applied to the MaxCut problem, the operational character of the study will illuminate what needs to be done in practice for other choices of algorithms, problems and devices.

The QAOA for MaxCut problem (QAOA-MaxCut) can be regarded as the minimization of a 2-local $N$-qubit cost Hamiltonian $C = \sum_{\langle i, j \rangle} Z_i Z_j$, defined over some set of edges $\langle i,j \rangle$ in a circuit ansatz given by
\begin{equation}
    \ket{\psi_{QAOA}} = \left( \prod_{j=1}^{p} e^{-i\beta_j B} e^{-i\gamma_j C} \right) \ket{+}^{\otimes N},
\label{eqn:QAOA-ansatz}
\end{equation}
where $B = \sum_{i=1}^{N} X_i$ is the \textit{mixer}, and the ansatz parameters $\beta_1, \dots, \beta_p, \gamma_1, \dots, \gamma_p$ are to be optimized to minimize the expectation value of $C$ in the state $\vert \psi_{QAOA} \rangle$. The quantum device used is the Rigetti Apsen-9 system \cite{Caldwell2018,Reagor2018}, comprising 31 operational transmon qubits with median two-qubit gate fidelity of 97\%. Figure~\ref{fig:overview} provides an overview of our overall approach.

The $\mathbb{Z}_{2}$ symmetry of the MaxCut cost Hamiltonian ensures that QAOA-MaxCut states prepared via Eq. \eqref{eqn:QAOA-ansatz} inherit the symmetry so that the amplitude of some computational basis state is the same as that of its ones' complement, i.e. $\langle x \vert \psi_{QAOA}\rangle = \langle \bar{x} \vert \psi_{QAOA}\rangle$, where $\vert \bar{x}\rangle=\Xn\vert x \rangle$.
Although non-entangled states can exhibit such a symmetry as well, Eq. \eqref{eqn:QAOA-ansatz} with a $2$-local cost Hamiltonian $C$ ensures that the state $\vert \psi_{QAOA}\rangle$ can possess entanglement at every value of $p$. In the limit of very large $p$ and assuming a single MaxCut solution, QAOA would yield a solution of the form $\frac{1}{\sqrt{2}} \left( \ket{x} + \ket{\bar{x}} \right)$, where both the bitstrings $x$ and $\bar{x}$ specify the same `cut' through the set of edges, and just differ in how they label each partition. This kind of entanglement is strongly reminiscent of the GHZ state, and in Section \ref{sec:entanglement} the similarity between the witnesses constructed for GHZ states as well as QAOA-MaxCut states becomes obvious. Interestingly, we observe that there is a correlation between the value of $p$ for which QAOA-MaxCut saturates the maximum expectation value of the entanglement witness, and the value of $p$ for which it saturates the maximum expectation value of the cost Hamiltonian, suggesting that the entanglement certified by these observables is actively participating in the computational optimization. We note that earlier works \cite{Bravyi_2020,shaydulin2020classical} have considered the relationship of the underlying $\mathbb{Z}_{2}$ symmetry of the cost Hamiltonian and QAOA performance, and hope that our present work also sheds light on how entanglement is employed as a resource in QAOA.

By probing single-qubit reduced density matrices (SQRDMs), we are able to identify which qubits in the chip participate in the computation with the ability to create coherent superpositions of the classical bit values 0 and 1. Moreover, using the same measurement data as required to construct the SQRDMs, we are able to infer the expectation values of observables that serve as entanglement witnesses for some quantum state(s). The ease with which these witnesses can be measured make them a practical toolkit for a quantum programmer to quickly verify if any entanglement has been generated in the executed circuit. Although such a procedure could work for arbitrary circuits in principle, we demonstrate that the introduced family of witnesses is particularly well suited to detecting entanglement in QAOA circuits. We do so by explicitly showing that such observables do indeed serve as witnesses for QAOA-MaxCut states, and by introducing and computing an {\it entanglement potency} metric on witnesses, which quantifies the fractional volume of states whose entanglement is detected by the given witness. We compute this metric for QAOA and Haar random states.

A standard method to construct an entanglement witness \cite{entanglement-detection-review} for some state $\ket{\Psi}$ is to construct an operator out of a projector
\begin{equation}
    W = \alpha \mathbbm{1} - \ket{\Psi}\bra{\Psi}
\label{eqn:standard-EW-construction}
\end{equation}
where $\alpha$ is the maximum fidelity of the state $\ket{\Psi}$ with those in some set of states that we wish to certify against. However, for a parametric family of states such as QAOA-MaxCut states, the construction in Eq. \eqref{eqn:standard-EW-construction} not only depends on the choice of parameters, but may also require measurements in a large number of different bases as well as terms to estimate the expectation values of. In previous work, it has been shown that for certain $N$-qubit pure states, there exists a witness that requires $2N-1$ measurements \cite{PhysRevA.76.022330,PhysRevA.76.030305}. Here, we construct observables that require no more than 3 bases measurements for any value of $N$, and at most a polynomial number of terms to be estimated given the measurement data. While the theorems we prove only establish that bounds by fully separable states can be violated, we also provide numerical evidence that a subset of such observables can certify against bi-separable (and by extension $k$-separable, for $k \geq 2$) states, and therefore witness genuine $N$-partite entanglement.

The remainder of this article is organized as follows. In Section \ref{sec:coherences}, we discuss the measurement of coherences at the single-qubit level, and their relevance as a measure of quantumness. We also present experimental results from the Rigetti Aspen-9 quantum device. Section \ref{sec:entanglement} contains the main focus of our paper, wherein we discuss the construction of a family of entanglement witnesses, and establish several of their properties using a mix of analytical and numerical results. In particular, we analytically prove an upper bound to the absolute value of the expectation of a large class of observables in fully separable states using a generalized Cauchy-Schwarz inequality, which we also prove as a lemma, as well as a lower bound to the maximum eigenvalue for a subset of such observables, thus proving that they can serve as witnesses for some entangled states. We also identify a particular class of observables that are especially relevant for QAOA-MaxCut, and prove that they serve as witnesses for such states prepared via the ring cost Hamiltonian. In addition, we also provide numerical evidence for the view that similar observables constructed for other cost Hamiltonians also serve as witnesses for the corresponding QAOA-MaxCut states. We also demonstrate the relevance of the constructed observables to QAOA-MaxCut through the calculation of an entanglement potency metric, which we introduce. We finish this section by presenting proof-of-concept experimental results from the Rigetti Aspen-9 quantum chip. We conclude in Section \ref{sec:Conclusions}, discussing future avenues of exploration that our work opens up.

\section{Quantum Coherence}\label{sec:coherences}

The most iconic property of a quantum system is the ability to be in a \emph{coherent superposition} of multiple reference states. This basic distinctive property of quantum information processing and its direct manifestations (e.g. tunneling) is believed to be essential to achieving any kind of quantum advantage~\cite{DiVincenzo_1999}. Formally, a system is said to be in coherent superposition with respect to a basis of states, if its density matrix is non-diagonal when expressed in that basis~\cite{cohen1962theorie}. In combinatorial optimization, the specification of both the problem variables as well as the solution, or an algorithm's approximation to it, are efficiently representable with a polynomial number of classical bits. A quantum algorithm for combinatorial optimization might in general exploit the coherent superposition of all $2^N$ computational basis states. However, probing these coherences require a prohibitively exponential experimental cost.

Without requiring a full tomographic reconstruction of the many-body density matrix of a quantum state, we propose to use a method which allows us to extract information about off-diagonal elements of all single-qubit reduced density matrices (SQRDMs) expressed in the computational basis, which we later use as a threshold figure of merit to determine the extent of quantumness. If sizeable values of these elements are measured at any time during the execution of an algorithm, it can be concluded that the individual qubits at that time still hold the potential to exploit single-particle quantum effects for the purpose of information processing. A generalization of the arguments to multi-particle coherent effects (i.e. co-tunneling) would be straightforward in principle, though with an exponentially greater experimental cost.

\subsection{Efficient SQRDM Tomography}
Suppose that we have a single qubit quantum state, either pure or mixed, represented by the density matrix
\begin{equation}
    \rho = \left(\begin{array}{cc}\rho_{11} & \rho_{12} \\\rho_{21} & \rho_{22}\end{array}\right),
\end{equation}
for which we identify the off-diagonal elements $\rho_{12}=\rho_{21}^*$ as the ``coherences''.
Here, we choose to use the absolute value of coherence, i.e. $C_\rho = \vert\rho_{12}\vert$, as our metric of interest. Since $\rho\geq 0$ and $\tr{(\rho)}=1$, the maximum possible value attainable for $C_{\rho}$ is $\frac{1}{2}$, and is saturated for example by the $\ket{+}$ state. Each single qubit state can be experimentally reconstructed via state tomography \cite{D_Ariano_2002,Blume_Kohout_2010,d2003quantum} using measurements in the Pauli $X, Y$ and $Z$ bases, which allow us to determine expectation values of Pauli matrices from the outcome statistics and represent the state as 
\begin{equation}
    \rho  = \frac{1}{2}(\mathbbm{1}+\langle X\rangle X + \langle Y\rangle Y + \langle Z\rangle Z).
\label{eq:bloch_rep}
\end{equation}
 One could argue that only two measurements suffice to measure $C_\rho = \frac{1}{2}|\langle X\rangle + i  \langle Y\rangle|$. However, since the measurement protocol is inevitably noisy, it is more reasonable to first tomographically reconstruct the state using maximum likelihood estimation (MLE) \cite{vasconcelosextending}, which would guarantee that the measured state is a legitimate quantum object, i.e. being normalized $\tr(\rho) = 1$, and positive semi-definite $\rho\ge0$, and then compute the required expectation values in this state to measure $C_p$.

The above protocol easily generalizes\footnote{With the assumption that the reduced density matrices can be reliably estimated via single-qubit MLE.} - without additional measurement overhead - to determine the SQRDMs for an $N$-qubit state produced by a quantum algorithm, such as QAOA. In order to achieve this, we measure all (algorithmically relevant) qubits simultaneously in the $X, Y$ and $Z$ bases and determine $\rho_k$ as in Eq.~\eqref{eq:bloch_rep} where now $\langle Q_k\rangle\equiv\langle \mathbbm{1}_1\otimes\ldots Q_k\otimes\ldots\rangle$ denotes the expectation value of $Q = X,Y,Z$ acting on the $k$th qubit. In this, way we partially reconstruct the final state after the measurements and the MLE procedure as 
$$
\rho_{\mathrm{single-particle}}= \bigotimes_{k=1}^N \rho_k,  
$$
with all $\rho_k$ being legitimate states. It is important to note that 
$\rho_{\mathrm{single-particle}}$ 
is not a reconstruction of the density matrix of the state of the quantum processor before the measurements.
Most notably, 
it is a product state, and caries no information about other quantum features such as entanglement \cite{HorodeckiQuantumEntanglement} or discord \cite{Bera_2017}, which have been washed out in the course of local measurements.
Nevertheless, the detection of a non-zero coherence in a $\rho_k$ would indicate that superposition was still possible for qubit $k$ before measurement, so that some single-particle quantumness was exploitable.
A figure of merit for this kind of limited quantumness could be identified in the ``maximum observed single-particle coherence'', i.e. the largest measured value of $C_{\rho_k}$ over both the qubits and the parameters that define the circuit.

\subsection{SQRDMs in QAOA for MaxCut}
Now let us demonstrate how the above protocol works for a QAOA algorithm. Let $d$ denote the degree of the qubit (node) in the underlying MaxCut problem graph. The QAOA ansatz is $\vert \psi(\gamma, \beta) \rangle = e^{-i\beta B} e^{-i\gamma C} \ket{+}^{\otimes N}$ with the mixer $B = \sum_{j=1}^{N} X_i$ and the cost function $C = \sum_{\langle ij \rangle} Z_i Z_j$. Then, for a single layer ($p=1$) the SQRDMs can be analytically calculated to be
\begin{equation}
    \rho_{QAOA} = \frac{1}{2} \begin{pmatrix}
    1 & \cos^{d}{(2\gamma)} \\
    \cos^{d}{(2\gamma)} & 1
    \end{pmatrix},
\label{eqn:QAOA-SQRDM}
\end{equation}
Note the independence from $\beta$ and the total number of qubits $N$. Eq. \eqref{eqn:QAOA-SQRDM} tells us that $C_{\rho_{QAOA}} = \frac{1}{2} \vert \tr{\left(\rho_{QAOA} (X + iY) \right)} \vert = \frac{1}{2} \vert \cos^d(2\gamma) \vert$. Moreover, the fidelity \cite{JozsaFidelity} of this state with an arbitrary single-qubit state of the form Eq. (\ref{eq:bloch_rep}) is found to be
\begin{eqnarray}
    F(\rho_{QAOA}, \rho) = \frac{1}{2} \left( 1 + \langle X \rangle \cos^{d}{2\gamma} \right) + \nonumber \\
    \frac{1}{2} \left( 1 - \cos^{2d}{2\gamma}\right)^{1/2} \cdot \left( 1 - \vert \vec{P} \vert^{2} \right)^{1/2},
\label{eq:fidelity-qaoa-sqrdm}
\end{eqnarray}
where $\vec{P} = \left( \langle X \rangle, \langle Y \rangle, \langle Z \rangle \right)$. Note that this fidelity explicitly depends on $\gamma$. Of course, when $\vec{P} = \left(cos^{d}{2\gamma}, 0, 0 \right)$, the above expression equals one for all $\gamma$.

We can similarly compute the fidelity between the ideal QAOA state in Eq. \eqref{eqn:QAOA-SQRDM} and a classical probabilistic bit
\begin{equation}
\rho_{cl}(\theta) = \begin{pmatrix}
\cos^{2}{\theta} & 0 \\
0 & \sin^{2}{\theta}
\end{pmatrix},
\label{eq:prob-bit}
\end{equation}
In particular, we consider the maximum fidelity over all such probabilistic bits (max over $\theta$), given by
\begin{equation}
    \max_{\theta} F \left( \rho_{QAOA}, \rho_{cl} \right) = \frac{1}{2} + \frac{1}{2} \sqrt{1 - \cos^{2d}{2\gamma}}.
\label{eq:max-fid-classical-bit}
\end{equation}
States for which the difference between Eqs. (\ref{eq:fidelity-qaoa-sqrdm}) and (\ref{eq:max-fid-classical-bit}) is greater than zero represent resources that cannot be reproduced or employed by any classical device. We can therefore use this difference as a measure of whether quantum resources have been employed by a physical quantum device when running QAOA.

\subsection{Results - Coherences}\label{subsec:res_coherence}

We use the methods described above to report on the amount of coherences each single qubit can build during the circuit execution for a fixed $\beta=\frac{\pi}{8}$ angle \footnote{While the coherence for a single qubit does not depend on our choice of $\beta$, this particular choice was used to maximize the ideal expectation value of the entanglement witness we describe for the linear chain QAOA-MaxCut circuit, see Sections. \ref{subsectn:qaoa-witness} and \ref{sec:results}.} and across different $\gamma$ angles. Fig. \ref{fig:coherences} shows the coherences $C_{\rho_{QAOA}}$ for each SQRDM in 2, 8, 16 and 24-qubit linear chain QAOA-MaxCut circuits. In the ideal noiseless case, these should follow precisely the form $\frac{1}{2}\vert \cos^{d}{2\gamma}\vert$ where $d$ is the degree of the qubit whose coherence is being plotted. Experimentally, we observe a qualitatively similar pattern, except scaled down from its maximum value of $\frac{1}{2}$.

\begin{figure}
    \centering
    \includegraphics[width=0.48\textwidth]{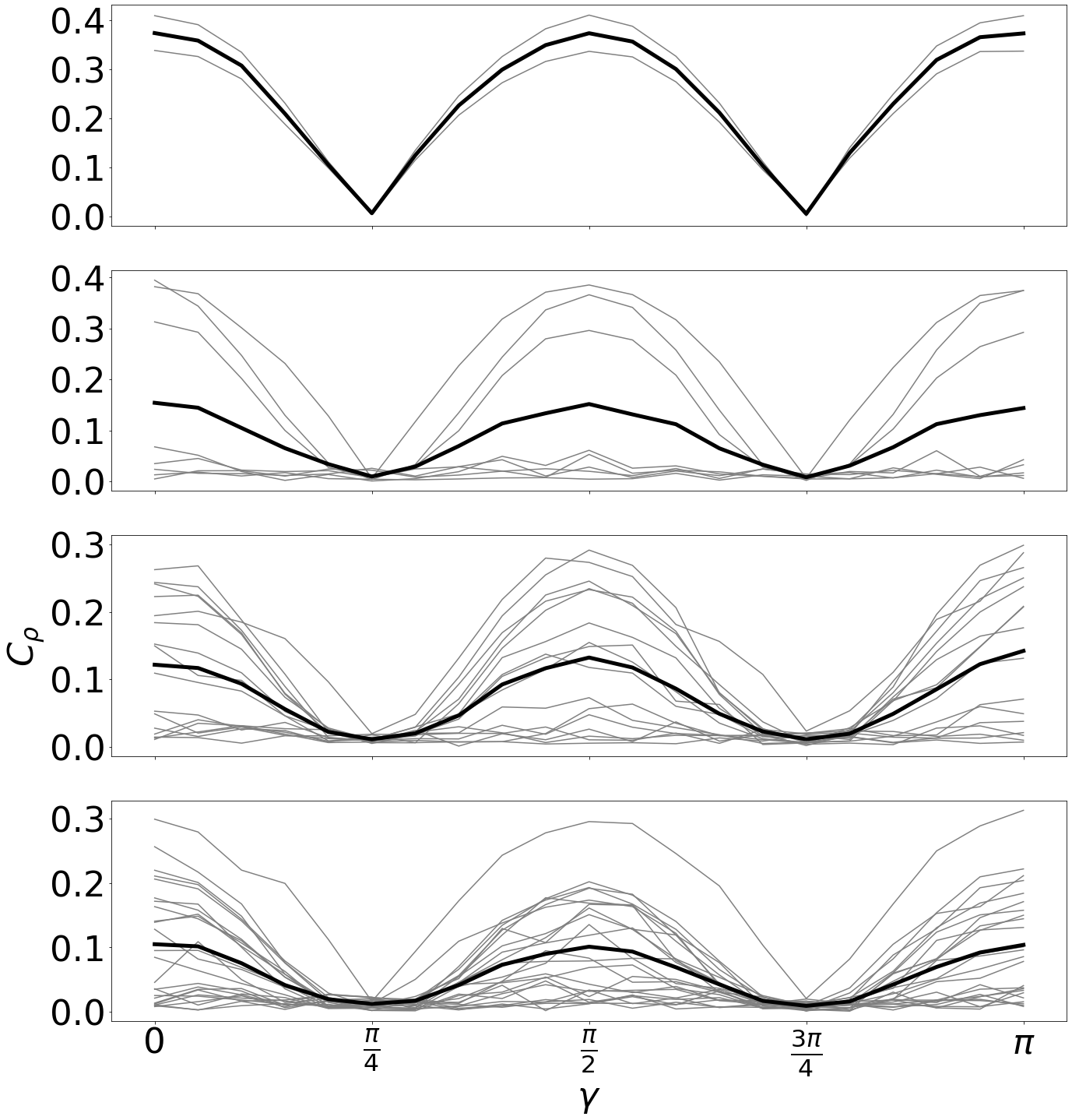}
    \caption{Coherences $C_{\rho}$ as a function of $\gamma$ for a fixed value of $\beta=\frac{\pi}{8}$ for qubits (thin, gray lines) in a linear chain QAOA-MaxCut circuit of (top to bottom) 2, 8, 16 and 24 qubits. The thick (black) line represents the mean coherence across all the qubits.
    }
    \label{fig:coherences}
\end{figure}

In Fig. \ref{fig:coherence_sqrdm_n24}, we compute the fidelity of the experimental SQRDM of a qubit (No. 0 - see Appendix~\ref{secn:experiment} for the chip's layout) after having ran a 24-qubit linear chain QAOA-MaxCut circuit with the ideal SQRDM for that qubit, calculated using Eq. (\ref{eq:fidelity-qaoa-sqrdm}).
As a comparison, we also plot the maximum fidelity achievable by a classical probabilistic bit. The difference between these two plots provides a measure of the non-classical coherent superposition effects that this qubit can experience during the course of a subsequent quantum computation.
If this difference is zero, then at the time of measurement, the qubit offered no more computational power than that offered by a probabilistic classical bit. Fig. \ref{fig:coherences-rainbow} in Appendix \ref{secn:experiment} provides more information on the coherences in each of the qubits in the executed circuits, as well as their physical locations on the Aspen-9 chip.

\begin{figure}
    \centering
    \includegraphics[width=0.48\textwidth]{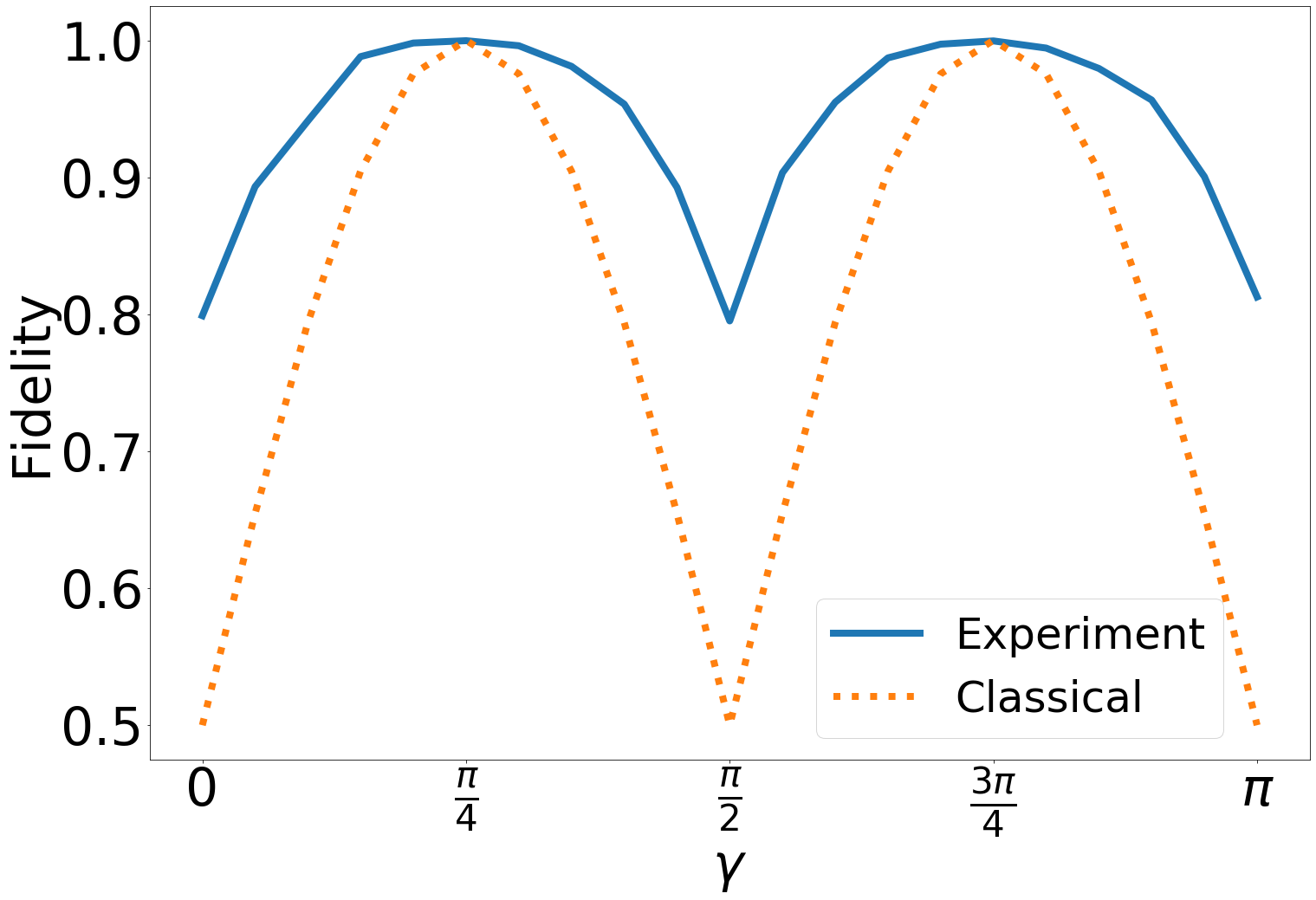}
    \caption{Experimental fidelity with the ideal SQRDM for a qubit (No. 0 - see Appendix~\ref{secn:experiment} for chip's layout) in a 24 linear chain QAOA-MaxCut circuit, compared with the maximum achievable fidelity by a classical probabilistic bit, or a fully decohered qubit.
    }
    \label{fig:coherence_sqrdm_n24}
\end{figure}

\section{\label{sec:entanglement}Entanglement}

\emph{Quantum entanglement}~\cite{HorodeckiQuantumEntanglement}  is a property of a physical system that relates to the intrinsic non-decomposable character of its state. It is famously manifested as a type of correlation that distinguishes classical physics, which is local and deterministic, from quantum mechanics. While it is still debated to what extent non-local correlations are necessary for quantum advantage~\cite{Jozsa_2003, datta2008quantum}, it is certainly a signature feature of the QAOA protocol and of its relationship with adiabatic quantum computation.

Methods to estimate, quantify, and detect entanglement are highly valuable assets for benchmarking quantum hardware. Measures such as bipartite and multipartite concurrence  \cite{PhysRevLett.78.5022,Wootters_1998, conc}, entanglement of formation \cite{Wootters_1998}, von Neumann entropy and many more exist \cite{HorodeckiQuantumEntanglement,plenio2014introduction,horodecki2001entanglement}. However they are typically difficult to compute and experimentally costly to estimate.

In this work, we construct a family of observables that act as \emph{entanglement witnesses} \cite{chruscinski2014entanglement,Bae_2020} for some quantum state(s). These are observables whose expectation value is bounded for separable states, so that a violation of such a bound provides a signature of entanglement.
Although we focus our attention on QAOA-MaxCut states, the observables we introduce below may also serve as witnesses for other types of states, such as parametric families of states found in variational quantum algorithms, e.g. variational quantum eigensolver (VQE) \cite{peruzzo2014variational,mcclean2016theory}, quantum machine learning \cite{biamonte2017quantum,farhi2018classification}, or variational Hamiltonian ansatz \cite{Wiersema_2020, Wecker_2015}.
From the experimentalist's perspective, it is a simple matter to run a circuit and obtain measurements in the $X$, $Y$ and $Z$ bases, and then to simply check whether an observable constructed as described below violates a bound (see Theorem \ref{thm:sep-W_M_k}). If it does, the experimentalist has verified the presence of entanglement in the circuit, regardless of whether that circuit runs QAOA or some completely different algorithm.\footnote{If it fails, this method cannot say anything about whether the state is entangled or not.} In this sense, the methods we describe below can be perceived as algorithm agnostic, and can be considered to be a general purpose benchmarking tool that can be easily and efficiently deployed for an arbitrary algorithm and system size. Additionally, this method can also be deployed to noisy circuits that are prevalent in the NISQ era, and is not just restricted to detecting entanglement in pure states.

In this section, we first describe general properties of a family of observables and show that they can serve as entanglement witnesses \cite{chruscinski2014entanglement} for some quantum states. We then show that a subset of such observables are capable of certifying entanglement in QAOA-MaxCut states. Additionally, we discuss the separability properties of these witnesses. Lastly, we introduce an {\it entanglement potency} metric on a witness that quantifies the relevance of a witness to a family of states. We conclude the section with experimental results from the Aspen-9 quantum chip.

\subsection{\label{subsec:bell_observables}Generalized Bell-type Observables}
Here, we introduce observables that are sums of $k$-local Pauli operators and establish bounds on the expectation values of these observables achievable by fully separable and general quantum states. Our aim is to demonstrate that a large family of such observables can serve as entanglement witnesses up to a spectral shift \cite{chruscinski2014entanglement,Bae_2020}.
Inspired by both Bell-type inequalities \cite{Bell2004-BELSAU, Aspect1982, Aspect1982_2}, we introduce a class of observables $W$ with the property
\begin{eqnarray}
\forall_{\rho_{\mathrm{sep}}}  &\qquad& \Lambda_{\min} \le \tr(W \rhosep) \le \Lambda_{\max}
\label{eq:W_upper_lower},
\end{eqnarray}
i.e. bounded by separable thresholds ($\Lambda_{\min}, \Lambda_{\max}$).\footnote{Note that both the upper and lower bounds can be easily transformed to form standard entanglement witness bounds through the addition/subtraction of a scaled identity operator.}
The key ingredient of our construction is the shared structure composed of $k$-local Pauli operators, and which for an exponentially large subset requires only three types of measurements ($X,Y$ and $Z$) in order to infer the expectation value and determine if a state violates any of the separable thresholds.
In the remainder of this paper, we will mostly focus on violations of the upper bound in Eq. \eqref{eq:W_upper_lower}, though one could also similarly analyze violations of the lower bound and the resulting signatures of entanglement.

To motivate the construction of our observables, we recall the usual Bell violation. In that scenario, given
\begin{eqnarray}
Q = Z \otimes I ,\; & S = \frac{1}{\sqrt{2}} \left( I \otimes Z + I \otimes X, \right),\nonumber \\
R = X \otimes I ,\;\;\; & T = \frac{1}{\sqrt{2}} \left( - I \otimes Z + I \otimes X \right), \nonumber
\label{eq:bell-2q-qrst}
\end{eqnarray}
the observable $W_B = (R+Q)S + (R-Q)T = \sqrt{2} \left( X \otimes X + Z \otimes Z\right)$ serves as a witness for the Bell state $\vert \psi \rangle = \frac{1}{\sqrt{2}} \left( \vert 00 \rangle + \vert 11 \rangle \right)$. Below, we describe successive generalizations of this observable.

A particularly simple and immediate one over $N$ qubits is defined as
\begin{eqnarray}
W_{XZ}^{(2,N,\mathcal{G})} &=& \sum_{\langle i,j \rangle} \left( X_i \otimes X_j + Z_i \otimes Z_j\right),
\label{eqn:W_XZ_2}
\end{eqnarray}
where the sum over $\langle i,j \rangle$ runs over some subset $E_2$ of all possible ${N \choose 2}$ edges in some graph $\mathcal{G}$ of $N$ nodes. We can generalize this further to similarly structured $k$-local observables of the following form
\begin{eqnarray}
W_{PQ}^{(k,N,\mathcal{G})} &=& \sum_{\langle i_1, \dots, i_k \rangle} \left( \bigotimes_{j=i_1}^{i_k}P_j + \bigotimes_{j=i_1}^{i_k} Q_j\right),
\label{eqn:W_PQ_k}
\end{eqnarray}
where $P,Q\in \{X,Y,Z\}$ and $P\neq Q$, the sum  over $\langle i_1, \dots, i_k \rangle$ similarly runs over some subset of all possible ${N \choose k}$ tuples $E_k$ (generalized edges) of some (generalized) graph $\mathcal{G}$ of $N$ nodes.\footnote{It is understood that in the local terms appearing in Eqs. (\ref{eqn:W_XZ_2}) and (\ref{eqn:W_PQ_k}), the identity operator acts on any of the $(N-k)$ qubits that do not appear in the tensor product. Thus, for example, the $2$-local $5$-qubit observable $X_1 \otimes X_4$ would really denote $X_1 \otimes I_2 \otimes I_3 \otimes X_4 \otimes I_5$.}

We can generalize this construction to observables of the form
\begin{equation}
    W_{XYZ}^{(k,N,\mathcal{G})} = \sum_{\langle i_1,\dots,i_k \rangle} \left( \bigotimes_{j=i_1}^{i_k}  X_j + \bigotimes_{j=i_1}^{i_k} Y_j  + \bigotimes_{j=i_1}^{i_k} Z_j \right)
\label{eqn:W_XYZ_k}
\end{equation}
where we have used the same notation as in Eq. (\ref{eqn:W_PQ_k}). 
Both these types of observables themselves belong to a much larger family, with variable coefficients as well as Pauli operators, as described in Appendix \ref{secn:general-obs}. This entire family of observables admits an upper and a lower bound in the expectation value achievable by any fully separable state, as shown in Theorem \ref{thm:sep-W_M_k}. In order to prove this theorem, we first introduce the following lemma.

\begin{lemma}\textbf{Generalized Cauchy-Schwarz inequality:}
Given a collection of vectors $\vec{x}^{(1)}, \dots, \vec{x}^{(k)}$, where $\vec{x}^{(i)} \in \mathbb{R}^{n}$ for $i \in \{1, \dots, k\}$, we have the following inequality
\begin{equation}
\left\vert \sum_{i=1}^{n} \left( \odot_{j=1}^{k} \vec{x}^{(j)} \right)_i \right\vert \leq \prod_{j=1}^{k} \norm{\vec{x}^{(j)}}
\label{eqn:lemma-cs-inequality}
\end{equation}
where $\vert \cdot \vert$ denotes the absolute value, $\norm{\cdot}$ denotes the Euclidean norm, and $\odot$ denotes the Hadamard product.
\label{thm:cauchy-schwarz}
\end{lemma}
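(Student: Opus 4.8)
The plan is to prove the inequality by induction on the number $k$ of vectors, with ordinary Cauchy--Schwarz doing all of the work. The base case $k=2$ is literally Cauchy--Schwarz: $\left\vert\sum_i x^{(1)}_i x^{(2)}_i\right\vert \le \norm{\vec{x}^{(1)}}\,\norm{\vec{x}^{(2)}}$. Before starting I would remark that the statement genuinely requires $k\ge 2$: for $k=1$, taking $\vec{x}^{(1)}$ to be the all-ones vector in $\mathbb{R}^n$ makes the left side equal $n$ while the right side equals $\sqrt{n}$, so I would state the lemma for $k\ge 2$.

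For the inductive step, suppose the bound holds for every collection of $k-1$ vectors, with $k\ge 3$. Given $\vec{x}^{(1)},\dots,\vec{x}^{(k)}\in\mathbb{R}^n$, I would peel off the first factor and apply Cauchy--Schwarz to the pair $\big(x^{(1)}_i\big)_i$ and $\big(\prod_{j=2}^k x^{(j)}_i\big)_i$:
\begin{equation}
\left\vert \sum_{i=1}^n \prod_{j=1}^k x^{(j)}_i \right\vert \;\le\; \norm{\vec{x}^{(1)}} \left( \sum_{i=1}^n \prod_{j=2}^k \big(x^{(j)}_i\big)^2 \right)^{1/2}.
\end{equation}
The sum inside the square root is the Hadamard-product sum of the $k-1$ entrywise-nonnegative vectors $\vec{y}^{(j)}$ defined by $y^{(j)}_i=\big(x^{(j)}_i\big)^2$, so the inductive hypothesis applies (the absolute value there being vacuous) and gives $\sum_i \prod_{j=2}^k y^{(j)}_i \le \prod_{j=2}^k \norm{\vec{y}^{(j)}}$, where $\norm{\vec{y}^{(j)}} = \big(\sum_i (x^{(j)}_i)^4\big)^{1/2}$.

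To finish, I would dominate $\norm{\vec{y}^{(j)}}$ by $\norm{\vec{x}^{(j)}}^2$, which is just $\sum_i a_i^2 \le \big(\sum_i a_i\big)^2$ for nonnegative $a_i$ (take $a_i=(x^{(j)}_i)^2$), i.e. monotonicity of $\ell^p$ norms. Composing the three estimates yields $\left\vert\sum_i\prod_{j=1}^k x^{(j)}_i\right\vert \le \norm{\vec{x}^{(1)}}\prod_{j=2}^k \norm{\vec{x}^{(j)}} = \prod_{j=1}^k\norm{\vec{x}^{(j)}}$, closing the induction. The only point demanding care is the bookkeeping in the inductive step — tracking that the recursion lands on \emph{squared} coordinates, so that what appears is an $\ell^4$ norm squared, which must then be controlled by the $\ell^2$ norm squared — rather than any real analytic obstacle. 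I would also note the slicker non-inductive route: Hölder's inequality with all $k$ conjugate exponents equal to $k$ gives $\prod_j\norm{\vec{x}^{(j)}}_k$ directly, and $\norm{\cdot}_k\le\norm{\cdot}_2$ for $k\ge 2$ finishes it; the inductive argument is preferable here only for being self-contained in Cauchy--Schwarz.
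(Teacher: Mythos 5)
Your proof is correct, and your preliminary remark that the inequality genuinely requires $k\ge 2$ is a fair one (the paper's own induction also starts its base case at $k=2$, so the restriction is implicit there as well). Your route differs from the paper's in how the induction is organized. The paper first isolates, and proves by a separate induction, the norm submultiplicativity of the Hadamard product, $\norm{\odot_{j=1}^{k}\vec{x}^{(j)}} \leq \prod_{j=1}^{k}\norm{\vec{x}^{(j)}}$, whose inductive step needs only the elementary componentwise bound $\vert b_i\vert \le \norm{\vec{b}}$; the lemma then follows in one stroke by writing the sum as the dot product of $\vec{w}=\odot_{j=1}^{k-1}\vec{x}^{(j)}$ with $\vec{x}^{(k)}$ and applying ordinary Cauchy--Schwarz exactly once. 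You instead induct directly on the target inequality, peeling off $\vec{x}^{(1)}$ with Cauchy--Schwarz; this forces the recursion onto the squared coordinates $\bigl(x^{(j)}_i\bigr)^2$ and obliges you to close the loop with the additional comparison $\bigl(\sum_i a_i^2\bigr)^{1/2}\le\sum_i a_i$ for $a_i\ge 0$, i.e. an $\ell^4$-versus-$\ell^2$ step. Both arguments are sound and of comparable length: the paper's buys a cleaner inductive step and a reusable standalone fact about Hadamard products, while yours is a single self-contained induction whose only delicate point is, as you say, the bookkeeping of which powers of the coordinates appear at each stage. Your closing observation---generalized H\"older with all conjugate exponents equal to $k$ followed by $\norm{\cdot}_k\le\norm{\cdot}_2$---is indeed the shortest correct route, at the cost of importing a stronger off-the-shelf inequality than the paper wishes to assume.
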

\begin{proof}
See Appendix \ref{secn:proofs}.
\end{proof}
Note that Lemma \ref{thm:cauchy-schwarz} reduces to the usual Cauchy-Schwarz inequality \cite{cauchy-schwarz} for the Euclidean dot product in the base case of $k=2$. Based on this lemma, we show that the expectation value with respect to any fully separable state of any observable of the form Eq. (\ref{eqn:W_PQ_k}) or (\ref{eqn:W_XYZ_k}) or in fact a much larger family of observables described in Appendix \ref{secn:general-obs}, satisfies an upper and a lower bound given by (up to a sign factor) the number of generalized edges.
\begin{theorem}
The absolute value of the expectation value of any $N$-qubit $k$-local ($k \leq N$) observable $W_{M}^{(k,N,\mathcal{G})}$ of the form Eq. (\ref{eqn:W_M_kN}) in any fully separable quantum state is upper bounded as
\begin{eqnarray}
\left\vert \langle W_{M}^{(k,N,\mathcal{G})} \rangle_{\text{sep}} \right\vert \leq \vert E_k \vert
\end{eqnarray}
where $\vert E_k \vert$ denotes the number of $k$-tuples $\langle i_1, \dots, i_k\rangle$ (or generalized edges of the generalized graph $\mathcal{G}$) being summed over, and $M\in\{1,2,3\}$ denotes the number of Pauli operators defined on a single $k$-tuple (see Appendix~\ref{secn:general-obs} for notational convention).
\label{thm:sep-W_M_k}
\end{theorem}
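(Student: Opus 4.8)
The plan is to localize the problem to a single product state and then to a single generalized edge, and to recognize the resulting quantity as precisely the object controlled by the generalized Cauchy--Schwarz inequality of Lemma~\ref{thm:cauchy-schwarz}, with the single-qubit Bloch-vector norm bound supplying a factor of one per edge. First I would write a fully separable state as $\rhosep=\sum_\alpha p_\alpha\bigotimes_{i=1}^N\rho_i^{(\alpha)}$ with $p_\alpha\ge 0$ and $\sum_\alpha p_\alpha=1$; since $\tr(W\rho)$ is linear in $\rho$ and $|\cdot|$ is convex, $\big|\tr(W_M^{(k,N,\mathcal G)}\rhosep)\big|\le\sum_\alpha p_\alpha\,\big|\tr\big(W_M^{(k,N,\mathcal G)}\bigotimes_i\rho_i^{(\alpha)}\big)\big|$, so it suffices to prove $\big|\tr(W_M^{(k,N,\mathcal G)}\rho)\big|\le|E_k|$ for an arbitrary product state $\rho=\bigotimes_{i=1}^N\rho_i$. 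For such a state every $k$-local Pauli string factorizes, $\tr\big((\bigotimes_j P_j)\rho\big)=\prod_j\langle P_j\rangle_{\rho_j}$. Writing $W_M^{(k,N,\mathcal G)}=\sum_{e\in E_k}W_e$, where $W_e$ gathers the at most $M$ Pauli strings supported on the generalized edge $e=\langle i_1,\dots,i_k\rangle$ together with their coefficients, the triangle inequality gives $\big|\tr(W_M^{(k,N,\mathcal G)}\rho)\big|\le\sum_{e\in E_k}\big|\tr(W_e\rho)\big|$, so the theorem reduces to showing $\big|\tr(W_e\rho)\big|\le 1$ for each edge.

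For a fixed edge $e=\langle i_1,\dots,i_k\rangle$ I would, for each $l=1,\dots,k$, assemble a vector $\vec x^{(l)}$ whose entries are the expectation values $\langle P\rangle_{\rho_{i_l}}$ of the Pauli operators $P$ occurring at position $l$ in the various terms of $W_e$ (signs being absorbed by flipping the corresponding components of, say, $\vec x^{(1)}$, which leaves $\norm{\vec x^{(1)}}$ unchanged; variable coefficients of the general family of Appendix~\ref{secn:general-obs} are absorbed into a single vector in the same way). Then $\tr(W_e\rho)=\sum_c\big(\odot_{l=1}^k\vec x^{(l)}\big)_c$, so Lemma~\ref{thm:cauchy-schwarz} yields $\big|\tr(W_e\rho)\big|\le\prod_{l=1}^k\norm{\vec x^{(l)}}$. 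Each $\vec x^{(l)}$ is a sub-vector of the Bloch vector $(\langle X_{i_l}\rangle,\langle Y_{i_l}\rangle,\langle Z_{i_l}\rangle)$ of the legitimate single-qubit state $\rho_{i_l}$, hence $\norm{\vec x^{(l)}}\le(\langle X_{i_l}\rangle^2+\langle Y_{i_l}\rangle^2+\langle Z_{i_l}\rangle^2)^{1/2}\le 1$. Therefore $\big|\tr(W_e\rho)\big|\le 1$, and summing over the $|E_k|$ edges gives $\big|\langle W_M^{(k,N,\mathcal G)}\rangle_{\mathrm{sep}}\big|\le|E_k|$. For the explicit cases this specializes immediately: for $W_{XYZ}^{(k,N,\mathcal G)}$ one takes $\vec x^{(l)}=(\langle X_{i_l}\rangle,\langle Y_{i_l}\rangle,\langle Z_{i_l}\rangle)\in\mathbb R^3$, and for $W_{PQ}^{(k,N,\mathcal G)}$ one takes $\vec x^{(l)}=(\langle P_{i_l}\rangle,\langle Q_{i_l}\rangle)\in\mathbb R^2$; the $M=1$ case is trivial.

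The two reductions are routine; the one place that needs genuine care is the assembly of the vectors $\vec x^{(l)}$ for the most general observable $W_M^{(k,N,\mathcal G)}$ of Appendix~\ref{secn:general-obs}, where one must check that absorbing the variable coefficients into a single vector still leaves its Euclidean norm bounded by one — this is exactly where the admissibility constraints on the coefficients, and the fact that at most the three Pauli expectations (subject to $\langle X\rangle^2+\langle Y\rangle^2+\langle Z\rangle^2\le1$) can enter at any given position, are used. I do not anticipate a real obstacle beyond this bookkeeping, since Lemma~\ref{thm:cauchy-schwarz} carries all of the analytic weight once the estimate has been localized to one edge of one product state.
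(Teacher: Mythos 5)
Your proposal is correct and follows essentially the same route as the paper's proof: reduce to product states by convexity, split the observable over generalized edges via the triangle inequality, apply Lemma~\ref{thm:cauchy-schwarz} edgewise to the Hadamard product of per-qubit Pauli-expectation vectors, and bound each such vector's norm by one using the Bloch-sphere constraint. The only cosmetic difference is that you handle mixed single-qubit factors directly via $\langle X\rangle^2+\langle Y\rangle^2+\langle Z\rangle^2\le 1$, whereas the paper first proves the bound for pure product states and then invokes convexity for fully separable mixed states.
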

\begin{proof}
See Appendix \ref{secn:proofs}.
\end{proof}

Theorem \ref{thm:sep-W_M_k} applies to all the observables that we have introduced so far, as well as a much larger set. It allows us to fulfill one of the criteria, Eq. \eqref{eq:W_upper_lower}, for detecting entanglement in the system. However, in order to establish these observables as genuine entanglement witnesses, one needs to show that the absolute expectation value can exceed the separable threshold, i.e. $\left\vert \langle W^{(k,N,\mathcal{G})}_M\rangle \right\vert > |E_k|$ for any $M \in \{2,3\}$ (see Appendix \ref{secn:general-obs} for notational conventions). In general, it is difficult to analytically establish lower bounds for an arbitrary observable in this family. Indeed, some may not even serve as entanglement witnesses.

A straightforward way to demonstrate that the bound from Theorem~\ref{thm:sep-W_M_k} can be violated is to provide an explicit construction of a state that does it. We consider an observable $W_{XYZ}^{(N,N,\mathcal{G})}$  of the form in Eq. (\ref{eqn:W_XYZ_k}) where the sum runs over a single tuple consisting of all $N$ qubits. Its expectation value in GHZ states $\vert \psi \rangle = \frac{1}{\sqrt{2}} \left( \vert 0 \rangle^{\otimes N} + \vert 1 \rangle^{\otimes N} \right)$ produces the maximum possible value of $3$ at $N = 4n$, and the minimum possible value of $-3$ at $N=4n-2$ for $n \in \mathbb{Z^{+}}$. Numerically, we observe that for $N=2n+1$ the largest and smallest eigenvalues are $\pm\sqrt{3}$.

We can similarly consider an observable of the form $W_{XZ}^{(N,N,\mathcal{G})}$ which for $N=2$ is the standard Bell observable $W_B$ up to a global factor.
Its expectation in GHZ states gives the maximum possible value of $2$ for $N=2n$, and whose expectation in (locally equivalent to GHZ) states of the form $\frac{1}{\sqrt{2}} \left( \vert 01\rangle^{\otimes N/2} - \vert 10 \rangle^{\otimes N/2} \right)$ gives the minimum possible value of $-2$ for $N=2n$ where $n \in \mathbb{Z^{+}}$. Similar statements could be constructed via local unitary operations about other observables of the form $W_{PQ}^{(N,N)}$, which are therefore seen to belong to the class of observables in Eq. \eqref{eq:W_upper_lower}, which are equivalent (up to a spectrum shift) to the entanglement witness 2.0 notion introduced in \cite{Bae_2020}. See Fig.~\ref{fig:Wxyz_witness} and Appendix \ref{secn:properties-W_XYZ_W_PQ} for more details.

\begin{figure}
    \centering
    \includegraphics[width=0.48\textwidth]{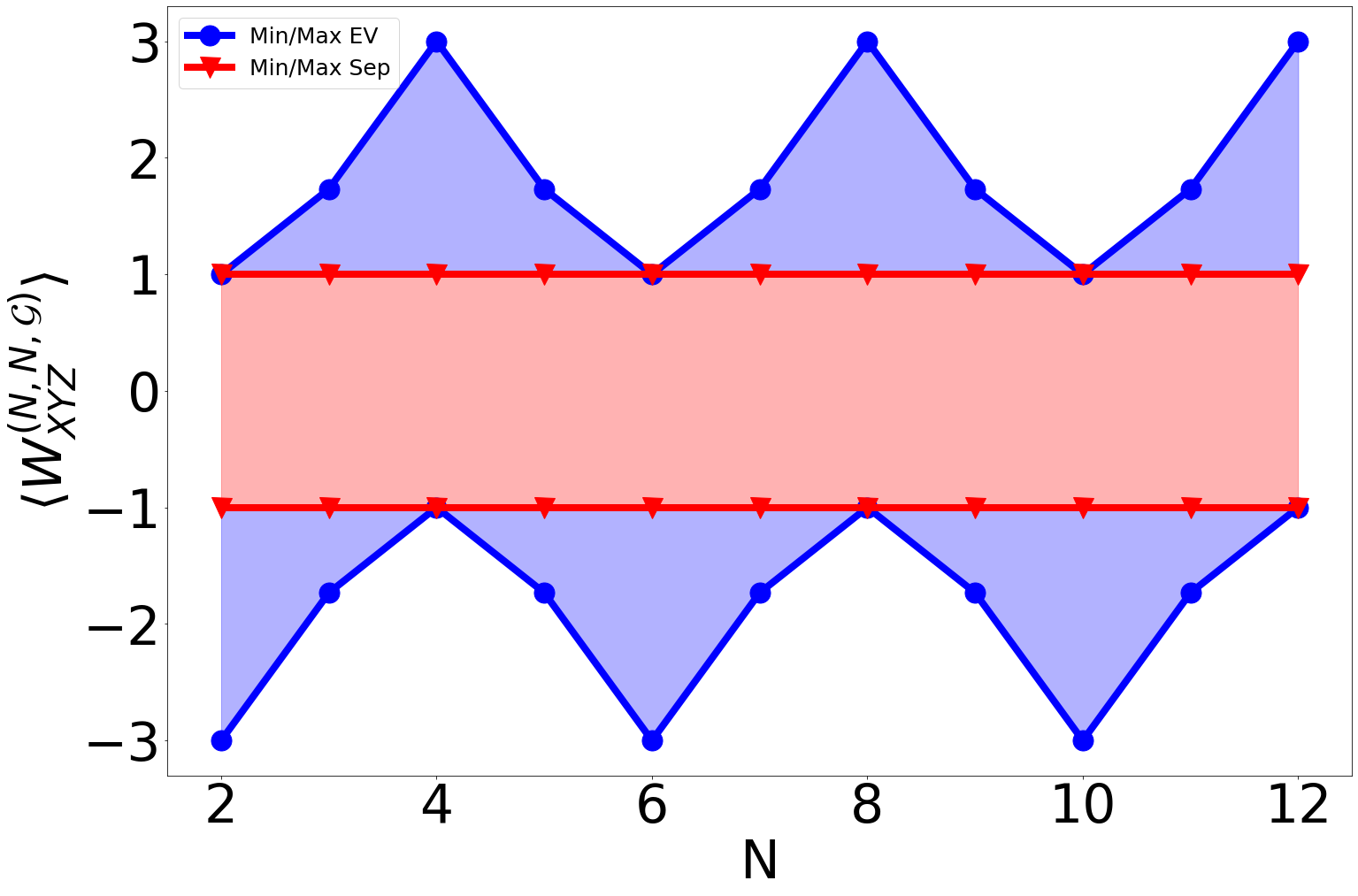}
    \includegraphics[width=0.48\textwidth]{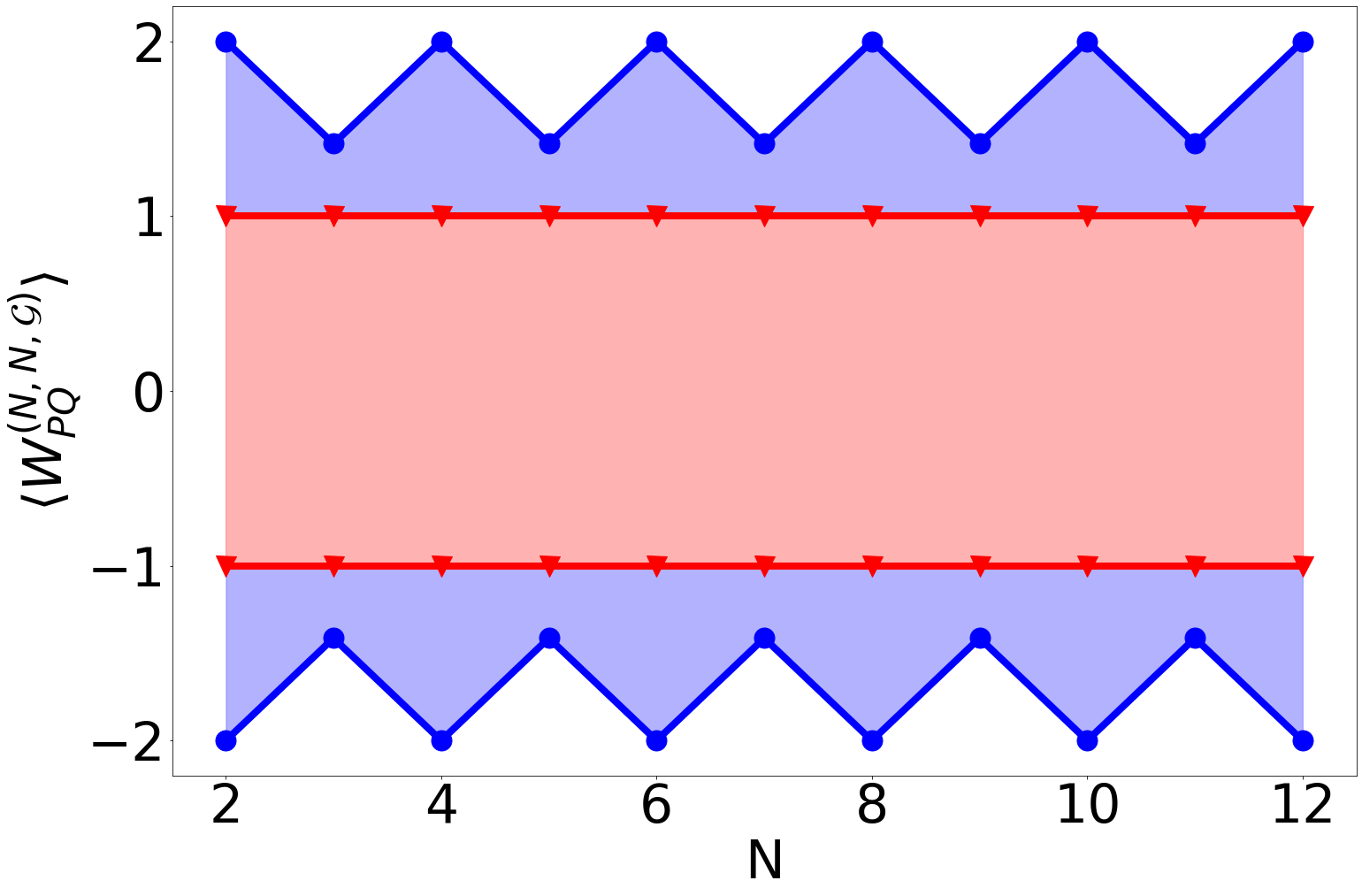}
    \caption{The maximum eigenvalue compared to the threshold for separable states for observables of the form  $W_{XYZ}^{(N,N,\mathcal{G})}$ (top) and $W_{PQ}^{(N,N,\mathcal{G})}$ (bottom), where $P \neq Q$. The blue region is where a violation of the separable threshold is possible, while the red denotes a region where these observables cannot certify, or witness, entanglement since the same expectation value could be produced by a separable state.}
    \label{fig:Wxyz_witness}
\end{figure}

In general for observables of the form $W_{PQ}^{(k,N,\mathcal{G})}$, we can demonstrate a lower bound for the maximal eigenvalue that is larger than the number of edges.
\begin{theorem}
The maximum eigenvalue $\lambda_{max}$ of any observable of the form $W_{PQ}^{(k,N,\mathcal{G})}$ (see Eq. \eqref{eqn:W_PQ_k}) is bounded below by $\lambda_{max} > \vert E_k \vert + O\left( \frac{1}{\vert E_k \vert^4} \right)$.
\label{thm:max_ev}
\end{theorem}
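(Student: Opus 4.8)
The plan is to construct an explicit trial state $\ket{\phi}$ whose expectation value $\braket{\phi | W_{PQ}^{(k,N,\mathcal{G})} | \phi}$ exceeds $|E_k|$, which immediately lower-bounds $\lambda_{\max}$. Since all $W_{PQ}^{(k,N,\mathcal{G})}$ are related by single-qubit Clifford rotations that permute $\{X,Y,Z\}$, it suffices without loss of generality to treat $W_{XZ}^{(k,N,\mathcal{G})} = \sum_{\langle i_1,\dots,i_k\rangle}\bigl(\bigotimes_j X_j + \bigotimes_j Z_j\bigr)$. The natural first guess is a product state in which each qubit sits in the $x$--$z$ plane of the Bloch sphere at angle $\theta$ from the $z$-axis, i.e. $\ket{\phi(\theta)} = \bigl(\cos\tfrac{\theta}{2}\ket{0} + \sin\tfrac{\theta}{2}\ket{1}\bigr)^{\otimes N}$. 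On each $k$-tuple this contributes $\cos^k\theta + \sin^k\theta$ from the $Z^{\otimes k}$ and $X^{\otimes k}$ terms respectively (up to the sign $\sin^k$ for the $X$ piece, which is nonnegative on $[0,\pi/2]$), so $\braket{\phi(\theta)|W_{XZ}|\phi(\theta)} = |E_k|\,(\cos^k\theta + \sin^k\theta)$. For $k=2$ this already gives $|E_k|\sqrt{2}$ at $\theta=\pi/4$, recovering the Bell bound, but for $k\ge 3$ the function $\cos^k\theta+\sin^k\theta$ is maximized at the endpoints with value $1$, so a pure product ansatz is too weak and a genuinely entangled trial state is needed.

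The key idea for $k \ge 3$ is to take a small perturbation of a computational-basis state that is already an eigenstate of the $Z^{\otimes k}$ terms with eigenvalue $+1$ — for instance $\ket{0}^{\otimes N}$, on which $\langle W_{XZ}\rangle = |E_k|$ exactly (every $Z$-term gives $1$, every $X$-term gives $0$). I would then consider $\ket{\phi_\epsilon} \propto \ket{0}^{\otimes N} + \epsilon\sum_{\langle i_1,\dots,i_k\rangle} X_{i_1}\cdots X_{i_k}\ket{0}^{\otimes N}$ (a GHZ-like superposition of $\ket{0}^{\otimes N}$ with all its "edge-flipped" partners), and expand $\braket{\phi_\epsilon|W_{XZ}|\phi_\epsilon}$ to second order in $\epsilon$. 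The cross terms between $\ket{0}^{\otimes N}$ and the flipped states pick up a contribution linear in $\epsilon$ from the $X^{\otimes k}$ part of $W_{XZ}$, giving a gain of order $2\epsilon|E_k|$ from the leading term, while the normalization and the degradation of the $Z$-terms cost order $\epsilon^2|E_k|^2$ (the $\epsilon^2$ appears because each flipped basis state differs from $\ket{0}^{\otimes N}$ in $k$ positions, so roughly $k|E_k|$ of the $Z$-terms flip sign on it, and there are $\sim|E_k|^2$ diagonal pairs after normalizing). Optimizing the resulting expression $|E_k|\bigl(1 + 2\epsilon - c\,\epsilon^2|E_k|^2 + \dots\bigr)/\bigl(1 + \epsilon^2|E_k| \bigr)$ over $\epsilon$ gives optimal $\epsilon \sim 1/|E_k|^2$ and a net gain of order $1/|E_k|^4$ beyond $|E_k|$, which is exactly the claimed bound $\lambda_{\max} > |E_k| + O(1/|E_k|^4)$.

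The main obstacle will be bookkeeping the combinatorics of the second-order term precisely enough to confirm that the correction is positive and of the stated order: one must track how many $Z$-stabilizer terms in $W_{XZ}$ are violated by each flipped basis state $X_{i_1}\cdots X_{i_k}\ket{0}^{\otimes N}$ (this depends on the overlap structure of the $k$-tuples of $\mathcal{G}$), how many pairs of flipped states are connected by an $X^{\otimes k}$ term, and how the normalization $1 + \epsilon^2|E_k| + O(\epsilon^4)$ feeds back in. A cleaner route that avoids graph-dependence is to restrict to a single $k$-tuple sub-block at a time: on any one generalized edge the relevant operator is $X^{\otimes k} + Z^{\otimes k}$ acting on $k$ qubits, whose top eigenvalue is strictly greater than $1$ for every $k$ (e.g. the state $\tfrac{1}{\sqrt 2}(\ket{0}^{\otimes k} + \ket{1}^{\otimes k})$ gives $1 + \tfrac{1\pm 1}{2}$ type values, and a generic superposition does slightly better), and then one argues that choosing the global trial state to be this optimal $k$-qubit state tensored appropriately, or a local perturbation supported on one edge, contributes the full $|E_k|$ from the untouched edges plus a strictly positive increment from the chosen one, with the increment shrinking like the inverse of the number of shared qubits — hence the polynomial-in-$1/|E_k|$ decay. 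I expect the final constant and the precise exponent $4$ to emerge from balancing the linear gain against the quadratic normalization cost, so the write-up should carry the expansion to the order needed to pin down that $\epsilon_{\mathrm{opt}} = \Theta(|E_k|^{-2})$.
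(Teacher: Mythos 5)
Your proposal is essentially correct but takes a genuinely different route from the paper. The paper never constructs a trial state: it applies a spectral lower bound of Van Mieghem for real symmetric matrices, $\lambda_{\max} \geq N_1/m + 2\left( N_3/2m - N_1N_2/m^2 + N_1^3/2m^3\right)\lambda_0^{-2} + O(t^{-4})$ with $N_j = u^T W^j u$, which amounts to computing the first three moments $\langle W\rangle, \langle W^2\rangle, \langle W^3\rangle$ in the uniform superposition $\ket{+}^{\otimes N}$ and showing the quadratic correction is at least $2^{-N}>0$; the $O(|E_k|^{-4})$ in the statement is just the truncation error of that expansion. Your perturbative ansatz $\ket{\phi_\epsilon}\propto\ket{0}^{\otimes N}+\epsilon\sum_e X_e\ket{0}^{\otimes N}$ is a cleaner and, if you carry the bookkeeping through, a \emph{stronger} argument: the cross term gives exactly $+2\epsilon\vert E_k\vert$, the only negative second-order contribution is $\epsilon^2\sum_{e,e'}(-1)^{\vert e\cap e'\vert}\geq -\epsilon^2\vert E_k\vert^2$, and the norm is $1+\epsilon^2\vert E_k\vert$, so $\langle W\rangle - \vert E_k\vert \geq \left(2\epsilon\vert E_k\vert - 2\epsilon^2\vert E_k\vert^2\right)/\left(1+\epsilon^2\vert E_k\vert\right)$, which at $\epsilon = 1/(2\vert E_k\vert)$ gives a \emph{constant} additive gain $2\vert E_k\vert/(4\vert E_k\vert + 1)\geq 2/5$, comfortably implying the theorem. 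Even without the quantitative optimization, the first derivative in $\epsilon$ at $\epsilon=0$ is $2\vert E_k\vert>0$, which already yields strict violation of the separable bound. Your approach buys an explicit witness state and a graph-independent constant gap; the paper's buys a uniform treatment via moments but only a $2^{-N}$ gap.

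Two caveats. First, your stated asymptotics are internally inconsistent: from your own expression $\vert E_k\vert(1+2\epsilon - c\epsilon^2\vert E_k\vert^2)/(1+\epsilon^2\vert E_k\vert)$ the optimum is not at $\epsilon\sim\vert E_k\vert^{-2}$ with gain $\sim\vert E_k\vert^{-4}$ (you appear to have reverse-engineered the theorem's error term, which actually has nothing to do with the size of the gap); the correct count of the $Z$-degradation is $-\epsilon^2\vert E_k\vert^2$ in the numerator, not $-\epsilon^2\vert E_k\vert^3$, and the resulting gain is $\Theta(1)$, not $\Theta(\vert E_k\vert^{-4})$. Second, your proposed ``cleaner route'' of perturbing a single edge fails in general: placing $\tfrac{1}{\sqrt2}(\ket{0}^{\otimes k}+\ket{1}^{\otimes k})$ on one $k$-tuple zeroes out the $Z$-term of every other edge that intersects it in an odd number of qubits (e.g.\ a star graph loses $N-2$ units to gain $\sqrt2-1$), so the net change can be negative for graphs with heavily overlapping edges. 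Stick with the global perturbative ansatz, whose linear-in-$\epsilon$ gain is graph-independent.
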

\begin{proof}
See Appendix \ref{secn:proofs}
\end{proof}
Theorem \ref{thm:max_ev} establishes that all observables of the form \eqref{eqn:W_PQ_k} can serve as entanglement witnesses, at least in the weak sense that they can distinguish fully separable states from (some) quantum states that have some entanglement, perhaps only over a subset of the entire $N$ qubits over which the observable is defined. Note however that this theorem does not necessarily apply to the larger family of operators consisting of two Pauli operators on each edge, $W_{2}^{(k,N,\mathcal{G})}$ (see Appendix \ref{secn:general-obs} for notational convention), with possibly different operators acting on each of the qubits in a $k$-local term, nor those consisting of three Pauli operators on each edge, $W_{XYZ}^{(k,N,\mathcal{G})}$. Indeed, we numerically observe that the largest eigenvalues of operators of the form $W_{XYZ}^{(2,N,\mathcal{G})}$ equal the number of edges, $\vert E_k \vert$, precisely the upper bound for separable states established in Theorem \ref{thm:sep-W_M_k}, and therefore such observables cannot serve as witnesses.

Nevertheless, the family of observables to which Theorem \ref{thm:max_ev} applies contains exponentially many elements, consisting of all non-trivial graph types, with different choices of Pauli operators, with each of these constituting a different entanglement witness. For a fixed $N$, the total number of such elements is $3 \sum_{k=2}^{N}\left( 2^{{N \choose k}} - 1 \right)$, where the exponent ${N \choose k}$ is the total possible number of $k$-tuples, the base 2 counts whether a given $k$-tuple/edge belongs to a (generalized) graph or not, the $-1$ subtracts the trivial graph with no edges (and which does not yield a witness), and 3 counts the possibilities $XZ$, $XY$ and $YZ$.

It is also instructive to look at the upper bound of the largest eigenvalue of these observables. Such an upper bound would tell us the extent of violation of the separable threshold provided by Theorem \ref{thm:sep-W_M_k} that is possible even in principle. The following lemma establishes a fairly generic upper bound applicable to all observables to which Theorem \ref{thm:sep-W_M_k} applies.

\begin{lemma}
The maximum eigenvalue $\lambda_{max}$ of any $N$-qubit $k$-local ($k \leq N$) observable of the form $W_{M}^{(k,N,\mathcal{G})}$ defined in Eq. \eqref{eqn:W_M_kN} (see Appendix \ref{secn:general-obs}) is bounded above by $\lambda_{max} \leq M \vert E_k \vert$.
\label{lemma:upper_bound}
\end{lemma}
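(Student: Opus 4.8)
The plan is to read off the bound directly from subadditivity of the operator norm, using the fact that the elementary building blocks of $W_{M}^{(k,N,\mathcal{G})}$ are Pauli strings, each of which has unit operator norm. First I would unpack the definition in Eq.~\eqref{eqn:W_M_kN}: the observable is a sum over the $\vert E_k \vert$ generalized edges $\langle i_1,\dots,i_k\rangle \in E_k$, and on each such edge it contributes exactly $M$ terms, each of which is (up to a sign) a tensor product of single-qubit Pauli operators acting on the $k$ qubits of that edge, with the identity on the remaining $N-k$ qubits. Denote these terms $P^{(e,m)}$ for $e \in E_k$ and $m \in \{1,\dots,M\}$. Each $P^{(e,m)}$ is both Hermitian and unitary, so its spectrum lies in $\{+1,-1\}$ and $\norm{P^{(e,m)}} = 1$ in the operator norm.

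Second, I would apply the triangle inequality for the operator norm:
\begin{equation}
\norm{W_{M}^{(k,N,\mathcal{G})}} = \norm{\sum_{e \in E_k}\sum_{m=1}^{M} \pm\, P^{(e,m)}} \le \sum_{e \in E_k}\sum_{m=1}^{M} \norm{P^{(e,m)}} = M\,\vert E_k \vert .
\end{equation}
Since $W_{M}^{(k,N,\mathcal{G})}$ is Hermitian, its largest eigenvalue obeys $\lambda_{max} \le \norm{W_{M}^{(k,N,\mathcal{G})}}$, which yields $\lambda_{max} \le M\,\vert E_k \vert$; the analogous lower bound $\lambda_{min} \ge -M\,\vert E_k \vert$ follows in the same way. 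One can note that this estimate is tight: for the single-tuple observable $W_{XYZ}^{(N,N,\mathcal{G})}$ one has $\vert E_k \vert = 1$ and $M=3$, and the GHZ state saturates the value $3$ at $N=4n$, as recorded earlier in the excerpt.

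There is essentially no obstacle to this argument; the one point requiring a small amount of care is to confirm, from the general construction in Appendix~\ref{secn:general-obs}, that every generalized edge indeed carries at most $M$ Pauli-string terms each with unit-modulus coefficient, so that the count of $M\,\vert E_k \vert$ unit-norm summands is correct. If the general family permits coefficients of modulus other than $1$, the same reasoning gives $\lambda_{max}$ bounded by $\vert E_k \vert$ times the sum of the per-edge coefficient magnitudes, and the stated form is the special case in which those magnitudes equal $1$.
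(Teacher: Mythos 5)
Your proposal is correct and is essentially the paper's own argument: the paper invokes Weyl's inequality $\lambda_{max}(A+B)\le\lambda_{max}(A)+\lambda_{max}(B)$ term by term together with $\lambda_{max}(P)=1$ for Pauli strings, which is just the Hermitian form of your operator-norm triangle inequality combined with $\lambda_{max}\le\norm{W}$. Your closing remark about coefficient magnitudes is also consistent with the paper's construction, where $\alpha_m\in\{0,1\}$, so the unit-modulus assumption holds.
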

\begin{proof}
Using successive applications of Weyl's inequality for the largest eigenvalue of the sum of two Hermitian matrices $A$ and $B$
\begin{eqnarray}
\lambda_{max}\left( A + B \right) &\leq& \lambda_{max} (A) + \lambda_{max} (B)
\end{eqnarray}
and the fact that $\lambda_{max}(P) = 1$ for any Pauli operator $P$, we readily obtain
\begin{eqnarray}
\lambda_{max}\left( W_{M}^{(k,N,\mathcal{G})}\right) &\leq& \sum_{\langle i_1, \dots, i_k \rangle} \sum_{m=1}^{M} \lambda_{max} \left( \alpha_m \bigotimes_{j=1}^{k}\sigma_{i_j}^{(a_{j,m})}\right) \nonumber \\
&\leq& \sum_{\langle i_1, \dots, i_k \rangle} \sum_{m=1}^{M} \lambda_{max} \left( \bigotimes_{j=1}^{k}\sigma_{i_j}^{(a_{j,m})}\right) \nonumber \\
&=& \sum_{\langle i_1, \dots, i_k \rangle} \sum_{m=1}^{M} (1) = M \vert E_k \vert
\end{eqnarray}
\end{proof}

The upper bound of $M \vert E_k \vert$ is saturated, though not uniquely and not always, by graphs with edges on disjoint pairs. This can be seen for $M=2$ by choosing the state $\vert \psi \rangle = \frac{1}{\sqrt{2}} \left( \vert 00 \rangle + \vert 11 \rangle \right)$ on each pair to give an expectation value of 2 to the observable $ZZ + XX$ defined on this pair, 2 being the maximum value of this observable, since both $\langle ZZ \rangle \leq 1$ and $\langle XX \rangle \leq 1$. For such graphs, the difference $\Delta W = \lambda_{max} - \max_{\rho_{sep}} \tr{\left(W\rho_{sep}\right)} = \vert E_2  \vert = \left\lfloor \frac{N}{2} \right\rfloor$.

Numerically, we find that this difference is the maximum value achievable by graphs of any connectivity for $W_{XZ}^{(2,N,\mathcal{G})}$.\footnote{If this is true, then we would have the sharper upper bound
$\lambda_{max} \leq \vert E \vert + \left\lfloor{\frac{N}{2}}\right\rfloor$
for any observable of the form $W_{PQ}^{(2,N,\mathcal{G})}$, since the observables $W_{XY}^{(2,N,\mathcal{G})}$ and $W_{YZ}^{(2,N,\mathcal{G})}$ have the same eigenspectrum as the corresponding observable $W_{XZ}^{(2,N,\mathcal{G})}$ once the graph is fixed.} In the next sections, we see that observables of the form $W_{XZ}^{(2,N,\mathcal{G})}$ have a special significance for QAOA-Maxcut states.

\subsection{\label{subsectn:qaoa-witness} Entanglement Witnesses for QAOA-MaxCut}
   In QAOA \cite{farhi2014quantum}, we seek to find the (approximate) ground state of diagonal Hamiltonians. For the particular case of MaxCut over a graph of equal weights, we seek to maximize the cost Hamiltonian $C^{\prime} = \frac{1}{2} \sum_{\langle i,j \rangle} \left( I - Z_i Z_j \right)$, or equivalently, minimize the cost Hamiltonian $C = \sum_{\langle i,j \rangle} Z_i Z_j$. In the basic formulation of QAOA, we start with an initial state of $H^{\otimes N} \vert 0 \rangle^{\otimes N}$, and then apply the unitary $e^{-i\beta_j B} e^{-i \gamma_j C}$ a total of $p$ times, where $B = \sum_{i=1}^{N} X_i$ and the parameters $\vec{\theta} = \left(\beta_1, \dots, \beta_p, \gamma_1, \dots, \gamma_p \right)$ are to be optimized to minimize the expectation value $\langle C \rangle$.
   
   For QAOA states built out of the 2-local MaxCut cost Hamiltonian, the 2-local observable $W_{XZ}^{(2,N,\mathcal{G})}$ defined over the same set of edges as the cost Hamiltonian is a natural candidate for an entanglement witness. Indeed, we can view this observable as being constructed directly out of the cost Hamiltonian $C$ as $W_{XZ}^{2,N,\mathcal{G}} = C + H^{\otimes N} C H^{\otimes N}$. Note that if $\rho (\vec{\theta}_p)$ represents the parametric QAOA state after $p$ rounds, and $A$ is any observable, then $\max_{\theta} \tr\left(A \, \rho(\vec{\theta}_{p+k})\right) \geq \max_{\theta} \tr \left( A \, \rho(\vec{\theta}_p) \right)$ for all $k \in \mathbb{Z}^{+}$. In other words, if $W$ is a witness for QAOA$_p$ states, then it is also a witness for QAOA$_{p+k}$ states. We now show that observables from the $W_{XZ}^{(2,N,\mathcal{G})}$ family can serve as entanglement witnesses for QAOA states, noting a lemma before we do so.

\begin{lemma}
Given $C = \sum_{\langle ij \rangle \in E_2} Z_i Z_j$ where $E_2$ is some edge set, and $B = \sum_i X_i$, the expectation values of the operators $X_u X_v$, $Y_u Y_v$ and $Z_u Z_v$ over some edge $\langle uv \rangle \in E_2$ in the $p=1$ QAOA-MaxCut state $\vert \psi(\gamma, \beta) \rangle = e^{-i\beta B} e^{-i \gamma C} H^{\otimes N} \vert 0 \rangle^{\otimes N}$ are given respectively as
\begin{eqnarray}
\langle X_{u}X_{v}\rangle &=& \frac{1}{2}  \cos^{d_u + d_v - 2f}{2\gamma} \left( 1 + \cos^{f}{4\gamma}\right) \nonumber \\
\langle Y_{u}Y_{v}\rangle &=& - \frac{1}{2}\sin{4\beta}\sin{2\gamma} \left( \cos^{d_u}{2\gamma} + \cos^{d_v}{2\gamma} \right) \nonumber \\
&& + \frac{1}{2} \cos^{2}{2\beta} \cos^{d_u + d_v - 2f}{2\gamma} \left( 1 - \cos^{f}{4\gamma}\right) \nonumber \\
\langle Z_{u}Z_{v} \rangle &=& \frac{1}{2} \sin{4\beta} \sin{2\gamma} \left( \cos^{d_u}{2\gamma} + \cos^{d_v}{2\gamma} \right) \nonumber \\
&& + \frac{1}{2} \sin^{2}{2\beta} \cos^{d_u + d_v - 2f}{2\gamma} \left( 1 - \cos^{f}{4\gamma} \right) \nonumber \\
\end{eqnarray}
where $d_u$ ($d_v$) is the number of neighbors of $u$ ($v$) excluding $v$ ($u$), and $f$ is the number of triangles in the graph that include the edge $\langle uv \rangle$.
\label{lemma:xx-zz-expects}
\end{lemma}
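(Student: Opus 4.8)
The plan is to work entirely in the Heisenberg picture, writing
\[
\langle P_u P_v \rangle = \langle +|^{\otimes N}\, e^{i\gamma C} e^{i\beta B}\, P_u P_v\, e^{-i\beta B} e^{-i\gamma C}\, |+\rangle^{\otimes N}
\]
for $P\in\{X,Y,Z\}$, and to push the operator outward through the two layers one at a time, using that at $p=1$ the conjugated operator stays supported on the union of the stars of $u$ and $v$. First I conjugate by the mixer, using the single-qubit identities $e^{i\beta X} X e^{-i\beta X}=X$, $e^{i\beta X} Z e^{-i\beta X}=\cos 2\beta\,Z+\sin 2\beta\,Y$ and $e^{i\beta X} Y e^{-i\beta X}=\cos 2\beta\,Y-\sin 2\beta\,Z$. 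This leaves $X_uX_v$ unchanged (so its value will be $\beta$-independent), while $Z_uZ_v$ and $Y_uY_v$ each become a fixed bilinear combination of the four monomials in $\{Z_u,Y_u\}\otimes\{Z_v,Y_v\}$, with coefficients $\cos^2 2\beta$, $\pm\cos 2\beta\sin 2\beta$, $\sin^2 2\beta$.

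Next I conjugate each resulting monomial by the phase separator $e^{i\gamma C}$. The key facts are: (i) any tensor product of $Z$'s commutes with $C$ and is unchanged; (ii) for a single $X_w$ or $Y_w$, the only terms of $C$ failing to commute with it are the edge terms $Z_wZ_j$ incident to $w$, which mutually commute, so $e^{i\gamma C} X_w e^{-i\gamma C}=X_w\prod_{j\sim w}\bigl(\cos 2\gamma - i\sin 2\gamma\,Z_wZ_j\bigr)$, and likewise for $Y_w$. Applied to a monomial sitting on the edge $\langle uv\rangle$ such as $X_uX_v$ or $Y_uY_v$, the $\langle uv\rangle$ term itself commutes and drops out, leaving a product over the edges at $u$ and at $v$ other than $\langle uv\rangle$; for the cross monomials $Z_uY_v$ and $Y_uZ_v$ the $\langle uv\rangle$ term does anticommute and does contribute — this is exactly what will produce the $\sin 4\beta$ pieces.

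Then I expand these products and take the expectation in $|+\rangle^{\otimes N}$, using $\langle X_w\rangle_+=1$, $\langle Y_w\rangle_+=\langle Z_w\rangle_+=0$ and factorization across qubits. Survival is governed by $Z$-parity on each qubit: a stray $Z_w$ turns an $X_w$ into $\propto Y_w$ and a $Y_w$ into $\propto X_w$, so a term survives only if every qubit ends up carrying $X$ or $I$. This forces: a neighbour of $u$ (or $v$) alone — "private" — to contribute only its $\cos 2\gamma$ factor, whereas a common neighbour of $u$ and $v$ (a vertex of a triangle through $\langle uv\rangle$, of which there are $f$) must be selected on both the $u$-side and the $v$-side or on neither, contributing $\cos^2 2\gamma$ or $-\sin^2 2\gamma$; and, for the cross monomials, the $\langle uv\rangle$ edge must be hit so the spurious $Z_u$ (or $Z_v$) squares to the identity. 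Summing over the $2^f$ triangle choices via $\sum_{s\ \mathrm{even}}\binom{f}{s}a^{f-s}b^{s}=\tfrac12[(a+b)^f+(a-b)^f]$ and its odd counterpart, with $a=\cos^2 2\gamma$, $b=\pm\sin^2 2\gamma$, gives the factors $\tfrac12(1\pm\cos^f 4\gamma)$; the private neighbours give the overall $\cos^{d_u+d_v-2f}2\gamma$; the cross monomials give $\tfrac12\sin 4\beta\,\sin 2\gamma\,(\cos^{d_u}2\gamma+\cos^{d_v}2\gamma)$; and the $Z_uZ_v$ monomial contributes nothing since $\langle Z_uZ_v\rangle_+=0$. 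Reassembling with the $\beta$-coefficients from the mixer step yields the three stated formulas, the $X_uX_v$ case being the same computation with the mixer step trivial.

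The part I expect to be fiddly, rather than deep, is the parity bookkeeping in the third step: cleanly separating private from common neighbours, and in particular not forgetting that for the cross monomials $Z_uY_v$ and $Y_uZ_v$ the edge $\langle uv\rangle$ itself must be selected so the leftover $Z$ on $u$ or $v$ cancels. This is where sign slips and over/under-counting are easiest; everything else reduces to the single-qubit rotation identities and a pair of binomial resummations. (One could instead expand $|\psi(\gamma,\beta)\rangle$ in the computational basis and sum amplitudes over the light cone of the edge, but the Heisenberg route is cleaner for exactly that reason.)
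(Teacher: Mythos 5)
Your proposal is correct, and it reaches the lemma by a genuinely different route than the paper. The paper does not compute $\langle Z_uZ_v\rangle$ from scratch (it imports the known $p=1$ expression from the QAOA-MaxCut literature and translates conventions), and it obtains $\langle Y_uY_v\rangle$ indirectly: it computes $\langle \mathrm{SWAP}_{uv}\rangle = \tfrac12\langle X_uX_v+Y_uY_v+Z_uZ_v+\mathbbm{1}\rangle$, exploiting that $\mathrm{SWAP}_{uv}$ commutes with the mixer $B$ and fixes $\vert+\rangle^{\otimes N}$, so the whole mixer layer drops out and only a $Z$-string exponential remains to be traced; $\langle Y_uY_v\rangle$ is then recovered by subtraction. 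You instead push each operator through the mixer first, reducing $Z_uZ_v$ and $Y_uY_v$ to the four monomials in $\{Z,Y\}_u\otimes\{Z,Y\}_v$ with $\cos^2 2\beta$, $\pm\tfrac12\sin 4\beta$, $\sin^2 2\beta$ coefficients, and then evaluate each monomial under phase-separator conjugation by parity bookkeeping. I checked the bookkeeping: the cross monomials $Z_uY_v$, $Y_uZ_v$ survive only when the $\langle uv\rangle$ edge factor is selected (yielding $\sin 2\gamma\,\cos^{d_v}2\gamma$ and $\sin 2\gamma\,\cos^{d_u}2\gamma$, with common neighbours forced unselected since $Z_u$ commutes with $C$ and cannot pair them), the $Y_uY_v$ monomial requires an odd number of triangles selected on both sides (giving $\tfrac12(1-\cos^f 4\gamma)$ after the extra $(iX_u)(iX_v)$ phase), and the $Z_uZ_v$ monomial dies in $\vert+\rangle^{\otimes N}$; reassembling reproduces all three stated formulas, including the relative signs of the $\sin 4\beta$ terms in $\langle Z_uZ_v\rangle$ versus $\langle Y_uY_v\rangle$. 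Your route is more uniform and self-contained (all three expectations from one mechanism, no external citation), at the cost of more case analysis; the paper's $\mathrm{SWAP}$ trick buys a shorter computation for $Y_uY_v$ by eliminating the $\beta$-dependence before any expansion.
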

\begin{proof}
See Appendix \ref{secn:proofs}.
\end{proof}
Note that for triangle free graphs $f=0$, we have that $\langle Z_u Z_v \rangle (\beta, \gamma) = \langle Y_u Y_v \rangle (-\beta, \gamma) = \langle Y_u Y_v \rangle (\beta, -\gamma)$, so that if an observable of the form $W_{XZ}^{(2,N,\mathcal{G})}$ is a witness for some QAOA state, then so is $W_{XY}^{(2,N,\mathcal{G})}$ defined over the same graph.
We can now show that the constructed observables can serve as entanglement witnesses for QAOA states prepared through MaxCut cost Hamiltonians on different graph topologies $\mathcal{G}$.

\subsubsection*{Ring graph}

\begin{theorem}
Given a ring Hamiltonian $C = \sum_i Z_i Z_{i+1}$ and the standard mixer $B = \sum_i X_i$, the observable defined on the ring graph $\mathcal{G}_\mathrm{ring}$
\begin{eqnarray}
W \equiv W_{XZ}^{(2,N,\mathcal{G}_\mathrm{ring})} = \sum_i X_i X_{i+1} + Z_i Z_{i+1}
\end{eqnarray}
serves as an entanglement witness for QAOA states $\Pi_{k=1}^{p} \left( e^{-i \beta_k B} e^{-i \gamma_k C} \right) \vert + \rangle^{\otimes N}$.

In particular, the gap $\Delta W = \max \langle W \rangle_{QAOA} - \max \langle W \rangle_{sep}$ between the maximum expectation value for $W$ achievable by QAOA states and that achievable by separable states is lower bounded as
\begin{eqnarray}
\Delta W \geq N \frac{\sqrt{2} - 1}{2}
\end{eqnarray}
\end{theorem}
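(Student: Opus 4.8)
\emph{Proof proposal.} The plan is to exhibit an explicit $p=1$ QAOA ring state whose expectation value of $W$ exceeds the separable threshold $|E_2| = N$ guaranteed by Theorem~\ref{thm:sep-W_M_k}, and to make this violation quantitative. Because the monotonicity observation recorded just above the statement gives $\max_{\vec\theta}\tr\!\left(W\rho(\vec\theta_{p+k})\right) \ge \max_{\vec\theta}\tr\!\left(W\rho(\vec\theta_p)\right)$, it suffices to work at $p=1$: any lower bound on $\max_{\beta,\gamma}\langle W\rangle$ for the state $e^{-i\beta B}e^{-i\gamma C}\ket{+}^{\otimes N}$ transfers to all $p\ge 1$. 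Combined with $\max_{\rho_{\mathrm{sep}}}\tr(W\rho_{\mathrm{sep}}) \le N$ from Theorem~\ref{thm:sep-W_M_k}, a bound of the form $\max_{\beta,\gamma}\langle W\rangle \ge N(1+\sqrt2)/2$ immediately yields $\Delta W \ge N(\sqrt2-1)/2 > 0$, which is exactly the claimed gap and establishes $W$ as an entanglement witness for QAOA ring states.

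For the ring graph $\mathcal{G}_{\mathrm{ring}}$ with $N\ge 4$, every vertex has degree two and the graph is triangle-free, so the local data entering Lemma~\ref{lemma:xx-zz-expects} are $d_u = d_v = 1$ and $f = 0$ for every edge $\langle u,v\rangle$. Substituting these values gives $\langle X_uX_v\rangle = \cos^2 2\gamma$ and $\langle Z_uZ_v\rangle = \tfrac12\sin 4\beta\,\sin 4\gamma$, which are the same for each of the $N$ edges, so
\begin{equation}
\langle W\rangle_{\mathrm{QAOA},\,p=1} = N\!\left( \cos^2 2\gamma + \tfrac12\sin 4\beta\,\sin 4\gamma \right).
\end{equation}
Maximizing over $\beta$ fixes $\sin 4\beta = 1$, i.e. $\beta = \pi/8$ (the same value used for the coherence experiments); writing $\theta = 2\gamma$ and using $\cos^2\theta + \tfrac12\sin 2\theta = \tfrac12 + \tfrac{\sqrt2}{2}\sin(2\theta + \pi/4)$, the bracket attains its maximum $(1+\sqrt2)/2$ at $\gamma = \pi/16$. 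Hence $\max_{\beta,\gamma}\langle W\rangle \ge N(1+\sqrt2)/2$, and the gap bound follows from the first paragraph.

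The only point requiring separate attention — and the closest thing to an obstacle — is the small-$N$ regime in which the ring contains triangles, i.e. $N = 3$: there $f = 1$, and the $\langle Z_uZ_v\rangle$ formula of Lemma~\ref{lemma:xx-zz-expects} acquires the extra term $\sin^2 2\beta\,\sin^2 2\gamma \ge 0$, which can only increase the attainable value, so $\Delta W \ge N(\sqrt2-1)/2$ still holds (with room to spare). One may also wish to verify that the separable maximum actually equals $N$ and not merely is bounded by it — the product state with each qubit on the Bloch vector $(\cos\varphi,0,\sin\varphi)$ attains $\langle W\rangle = N$ — although for the stated inequality on $\Delta W$ only the upper bound from Theorem~\ref{thm:sep-W_M_k} is strictly needed. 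Everything else is a direct substitution into Lemma~\ref{lemma:xx-zz-expects} followed by an elementary single-variable optimization.
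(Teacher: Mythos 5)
Your proposal is correct and follows essentially the same route as the paper: substitute $d_u=d_v=1$, $f=0$ into Lemma~\ref{lemma:xx-zz-expects} to obtain the per-edge value $W_e=\cos^2 2\gamma+\tfrac12\sin 4\beta\,\sin 4\gamma$, show its maximum is $(1+\sqrt2)/2$, compare against the separable bound $N$ from Theorem~\ref{thm:sep-W_M_k}, and invoke monotonicity in $p$. Your single-sinusoid rewriting is a cleaner substitute for the paper's critical-point and second-derivative analysis (and suffices, since only a lower bound on the QAOA maximum is needed), and your explicit treatment of $N=3$ --- where $f=1$ and the extra nonnegative term $\sin^2 2\beta\,\sin^2 2\gamma$ only increases the value --- patches an edge case that the paper's blanket assertion that $f=0$ for every ring edge silently skips.
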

\begin{proof}
For ring graphs of $N$ nodes, we have $d_u = d_v = 1$, $f=0$ for each of $N$ edges. Therefore, the observable has expectation value $\langle W \rangle = N\, W_{e}$, where $W_e$ denotes the expectation value $\langle XX + ZZ \rangle$ over a single edge in the ring graph. From Lemma \ref{lemma:xx-zz-expects}, we have
\begin{eqnarray}\label{eq:ring_expect}
W_e &=& \cos^2{2\gamma} + \frac{\sin{4\gamma} \sin{4\beta}}{2}
\end{eqnarray}
At all local optima, we have $\partial_{\gamma} \langle W_e \rangle = \partial_{\beta} \langle W_e \rangle = 0$. Simultaneously solving these equalities gives us either $\gamma_a = \frac{n \pi}{4}$ or $\gamma_b = \frac{\pi}{8} \left( n - \frac{1}{2}\right)$, and correspondingly either $\sin{4\beta_a} = 0$ or $\sin{4\beta_b} = \pm 1$ ($+1$ for $n$ odd, and $-1$ for $n$ even) respectively.

Let $D(\gamma, \beta) = \partial_{\gamma\gamma} W_e \cdot \partial_{\beta\beta} W_e - \left( \partial_{\beta\gamma} W_e\right)^2$. Then, $D(\gamma_a, \beta_a) < 0$, indicating a saddle point. For $n=3,7,11,\dots$ or $n=2,6,10,\dots$, we have $D(\gamma_b, \beta_b) > 0$ and $\partial_{\gamma\gamma}W_e > 0$, $\partial_{\beta\beta}W_e > 0$ indicating a local minimum. For $n=1,5,9,\dots$ or $n=0,4,8,\dots$, we have $D(\gamma_b, \beta_b) > 0$ and $\partial_{\gamma\gamma}W_e < 0$, $\partial_{\beta\beta}W_e < 0$ indicating a local maximum. At all these local maxima, $W_e(\gamma_b, \beta_b) = \frac{1 + \sqrt{2}}{2}$. Since this is true of all local maxima, this also provides the global maximum.

For separable states, the maximum achievable value for $W_e$ is one. Since all edges contribute the same expectation value in a ring graph, the gap $\Delta W$ at $p=1$ is $N \frac{\sqrt{2} - 1}{2}$. For larger $p$ values, this gap can increase in value and the theorem follows.
\end{proof}

\subsubsection*{Regular triangle-free graphs}

More generally, we can consider regular triangle-free graphs $\mathcal{G}_{\Delta\mathrm{free}}$. This family includes the ring graph as a special case, but also other types such as bipartite graphs. For all such graphs, we have $d_u = d_v = d$ and $f=0$, so that we have $\langle W_{XZ}^{(2,N,\mathcal{G}_{\Delta\mathrm{free}})}\rangle = N_{edges} W_e$ where
\begin{eqnarray}
W_e &=& \cos^{2d}{2\gamma} + \sin{4\beta} \sin{2\gamma} \cos^{d}{2\gamma}
\label{eqn:regular-triangle-free}
\end{eqnarray}
This expression can be optimized over $\gamma$ and $\beta$ to find its maximum value. We find numerically that $max_{\gamma, \beta}  W_e > 1$ and therefore serves as a witness for arbitrary large $d$, as seen in Fig. \ref{fig:regular_triangle_free_qaoa}. However, it is greatest at $d=1$ and smoothly decays to 1 for large values of $d$. This asymptotic decay can be seen by considering Eq. (\ref{eqn:regular-triangle-free}) in the limit $d \rightarrow \infty$. In this limit, 
\begin{eqnarray}
\lim_{d \rightarrow \infty} \cos^{d}{2\gamma} &\rightarrow& \left\lbrace \begin{array}{cc}
+1, & \gamma = n\pi, \\
(-1)^{d}, & \gamma = \left( n + \frac{1}{2}\right) \pi,\\
0, & \text{otherwise}
\end{array}\right.
\label{eqn:large-cos-behavior}
\end{eqnarray}
while $\sin{2\gamma} \rightarrow 0$ at $\gamma = n\pi$ and $\gamma = \left( n + \frac{1}{2}\right) \pi$, for $n \in \mathbb{Z}$. Therefore, one term will asymptotically converges to 0 while the other converges to 1, which is the value saturated by separable states.

\begin{figure}
    \centering
    \includegraphics[width=0.48\textwidth]{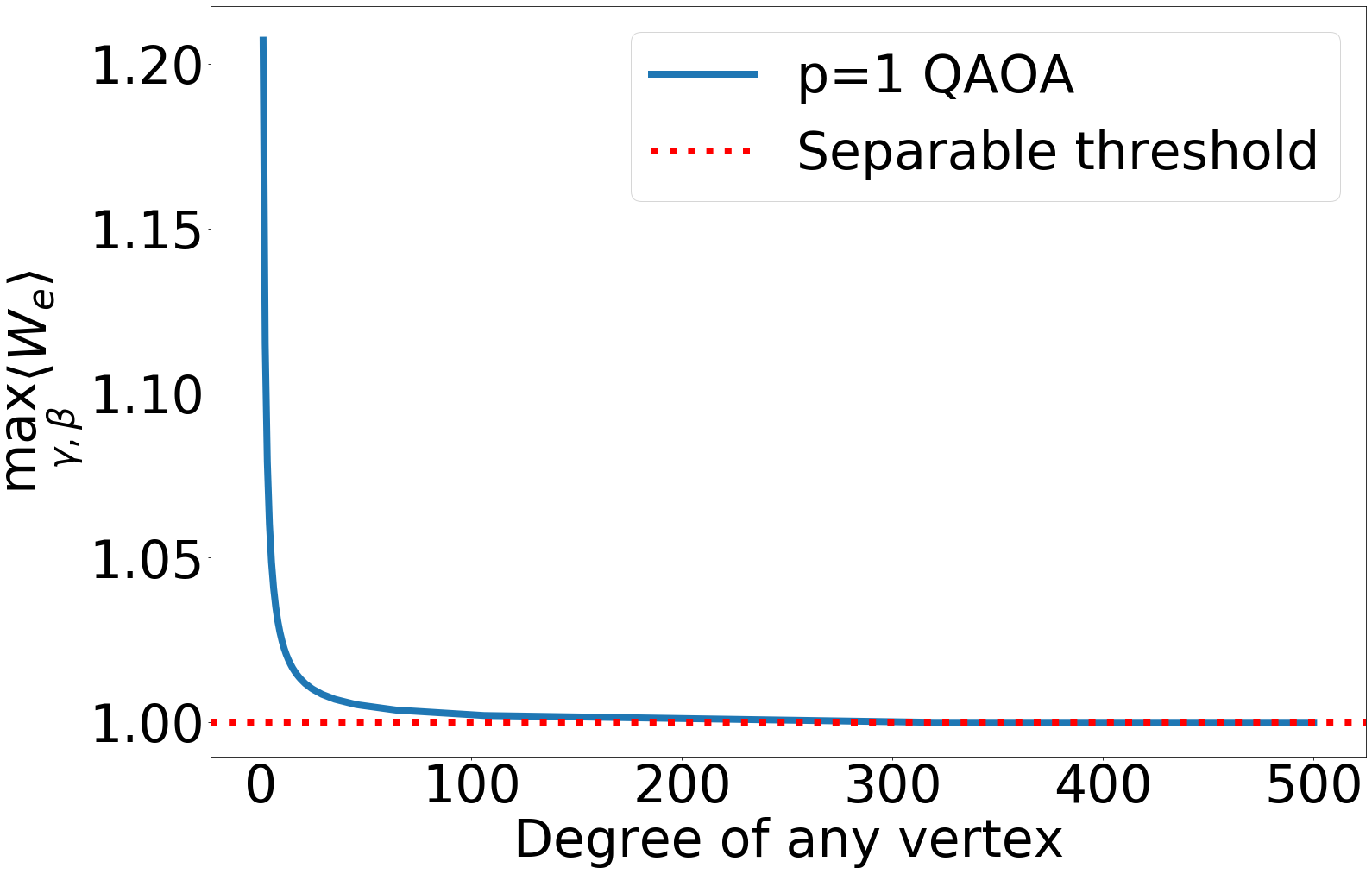}
    \caption{The maximum achievable expectation value, divided by the number of edges, of observables of the form $W_{XZ}^{(2,N,\mathcal{G}_{\Delta-\text{free}})}$ for regular, triangle-free graphs in $p=1$ QAOA. At asymptotically large values of the degree of any vertex in the regular graph, this observable fails to serve as an entanglement witness for $p=1$ QAOA-MaxCut states.}
    \label{fig:regular_triangle_free_qaoa}
\end{figure}

\subsubsection*{Linear chain graph}

A related graph is that of a linear chain, $\mathcal{G}_{\mathrm{line}}$ which unlike the ring, has open boundary conditions. This is the graph we use in our experiments (see sub-section \ref{sec:results}). Here, we can use the fact that $d_u=0,\ d_v=1$ (or vice versa) for the two edges at either end, $d_u=d_v=1$ for all other edges, $f=0$ for every edge, and the fact that there are $N-1$ total edges, where $N$ is the number of nodes in the linear chain graph. After some algebraic simplification, the expectation value of $W_{XZ}^{(2,N,\mathcal{G}_\mathrm{line})}$ becomes for $N \geq 3$
\begin{eqnarray}
\langle W_{XZ}^{(2,N,\mathcal{G}_\mathrm{line})} \rangle &=& \frac{\sin{(4\beta)}}{2} \left[ (N-2) \sin{(4\gamma)} + 2\sin{(2\gamma)} \right] \nonumber \\
&& +\, (N-3) \cos^2{(2\gamma)} + 2\cos{(2\gamma)}
\label{eqn:W_XZ_2_linear_chain}
\end{eqnarray}
while for $N=2$, it is $\langle W_{XZ}^{(2,N,\mathcal{G}_{\mathrm{line}})} \rangle = 1 + \sin{4\beta} \sin{2\gamma}$. It is found numerically that the quantity $\langle W_\mathrm{line} \rangle_{max} \equiv max_{\beta,\gamma} \langle W_{XZ}^{(2,N,\mathcal{G}_{\mathrm{line}})} \rangle$ grows linearly with the number of qubits for $N \geq 3$, roughly as
\begin{eqnarray}
\langle W_\mathrm{line} \rangle_{max} &\approx& 1.207 N - 1.019 \nonumber \\
&>& N - 1
\label{eqn:Wmax-qaoa-linear-chain}
\end{eqnarray}
while for $N=2$, $\langle W_{line} \rangle_{max} = 2$. Therefore, this observable serves as an entanglement witness for $p=1$ QAOA.

\subsubsection*{Fully connected graph}

Lastly, we may also consider the fully connected graph $\mathcal{G}_{\mathrm{full}}$. In this case, we use the fact that $d_u = d_v = f = N-2$ for every edge, and the fact that there are a total of $\vert E_2 \vert = {N \choose 2}$ edges. The expectation value of $W_\mathrm{full} \equiv W_{XZ}^{(2,N,\mathcal{G}_\mathrm{full})}$ becomes
\begin{eqnarray}
\langle W_\mathrm{full} \rangle &=& {N \choose 2} \left( \frac{1 + \cos^{N-2}{4\gamma}}{2} + \sin{4\beta} \sin{2\gamma} \cos^{N-2}{2\gamma}\right. \nonumber \\
&& \left. \qquad\qquad + \frac{\sin^{2}{2\beta} (1 - \cos^{N-2}{4\gamma})}{2} \right)
\end{eqnarray}
Let $W_{e}(\gamma, \beta, N) \equiv \langle W_\mathrm{full} \rangle / \vert E_2 \vert$ denote the term in the paranthesis above. Requiring that $W_{e} > 1$ and that therefore $W_\mathrm{full}$ serves as an entanglement witness for the $p=1$ QAOA state is equivalent to requiring that
\begin{eqnarray}
1 - \cos^{N-2}{4\gamma} &<& 2 \tan{2\beta} \sin{2\gamma} \cos^{N-2}{2\gamma}
\end{eqnarray}
Numerically, we find that this condition is satisfied by some $(\gamma, \beta)$ (e.g. by $\gamma = \pi/499$, $\beta=122\pi/499$) for up to $N=5000$. In the large $N$ limit however, the observable $W_{XZ}^{(2,N,\mathcal{G}_\mathrm{full})}$ asymptotically fails to be a witness as the maximum value it attains is no more than that achieved by separable states. We can see this by noting the asymptotic behavior of $\cos^{n}{x}$ in the limit of large $n$ as in Eq. \eqref{eqn:large-cos-behavior}, and the fact that $\sin{x} = 0$ whenever $\cos{x}=1$ to find
\begin{equation}
    \lim_{N\rightarrow\infty} \max_{\gamma,\beta} W_e (\gamma, \beta,N) \rightarrow 1
\end{equation}
which is precisely the bound for separable states.

\bigskip

In addition to the types of graphs analyzed above, we numerically checked that the observables $W_{XZ}^{(2,5,\mathcal{G})}$ serve as entanglement witnesses for $p=1$, $N=5$ QAOA-MaxCut states defined over any of the $2^{{5 \choose 2}} - 1$ non-trivial graphs, where the observable is defined over the same set of edges as the cost Hamiltonian. We conjecture that this holds true for any finite $N$.

The asymptotic decay we observe at large $N$ for complete graphs and large $d$ for regular triangle-free graphs for the (normalized by the number of edges) maximum expectation values of $W_{XZ}^{(2,N,\mathcal{G})}$ in $p=1$ QAOA states may reflect the fact that it becomes increasingly difficult to solve the corresponding MaxCut problem for larger values of $N$ and $d$ respectively at $p=1$, and that a larger number of QAOA rounds ($p$) may be required to solve it. It is generally difficult to study the performance of QAOA at large values of $p$, but for reasonably small values of fixed $N$ and $d$, we numerically find that $\max_{\gamma,\beta} \langle W_{XZ}^{(2,N,\mathcal{G})} \rangle$ saturates around the same $p$ as when the cost Hamiltonian saturates.

\subsection{XYZ witnesses}

In addition to the $W_{XZ}^{(2,N,\mathcal{G})}$ family of observables, we may also consider observables of the form $W_{XYZ}^{(N,N,\mathcal{G})}$. In this case, we can calculate the expectation values of $\braket{\Xn}$, $\braket{\Yn}$, $\braket{\Zn}$ rather easily. We note that QAOA-MaxCut states are  $\mathbb{Z}_2$ symmetric (see for example \cite{Bravyi_2020,shaydulin2020classical} for additional reference), and we can express all such states as 
\begin{eqnarray}
\ket{\Psi} &=& \sum_{x \in \mathcal{X}_0} c_x (\ket{x}+\ket{\bar{x}})
\end{eqnarray}
Where $\mathcal{X}_0$ is the set of bitstrings that start with $0$, and $\bar{x}$ represents the ones' complement of $x$. We then have
\begin{eqnarray}
\Xn \ket{\Psi} &=& \sum_{x \in \mathcal{X}_0} c_x \Xn(\ket{x}+\ket{\bar{x}}) \nonumber \\
&=& \sum_{x \in \mathcal{X}_0} c_x \Xn(\ket{\bar{x}}+\ket{x}) = \ket{\Psi}
\end{eqnarray}
so that $\braket{\Xn} = \braket{\Psi|\Psi}$ = 1. Moreover,
\begin{equation}
\Zn \ket{\Psi} = \sum_{x \in \mathcal{X}_0} c_x ((-1)^{|x|}\ket{x}+(-1)^{|\bar{x}|}\ket{\bar{x}}) 
\end{equation}
where $|x|$ denotes the hamming weight of $x$.
Note that for even $N$, $(-1)^{|x|}=(-1)^{|\bar{x}|}$ and for odd $N$, $(-1)^{|x|}=(-1)^{1 + |\bar{x}|}$. Therefore,
\begin{eqnarray}
\braket{\Zn} &=& \left\lbrace \begin{array}{cc}
0 & \text{odd N}, \\
\sum_{x \in \mathcal{X}_0} 2 (-1)^{|x|} |c_x|^2 & \text{even N}
\end{array}\right.
\end{eqnarray}
Furthermore,
\begin{eqnarray}
\braket{\Yn} &=& \braket{(-i)^N \Zn \Xn} \nonumber \\
&=& (-i)^N \braket{\Zn}
\end{eqnarray}
for even $N$.
Based on the above observations, we see that necessary conditions for a violation of the separable threshold of 1 are $N=2n$ for $W_{XZ}^{(N,N,\mathcal{G})}$ and $W_{XY}^{(N,N,\mathcal{G})}$, and $N=4n$ for $W_{YZ}^{(N,N,\mathcal{G})}$ and $W_{XYZ}^{(N,N,\mathcal{G})}$, where $n \in \mathbb{Z}^{+}$. However, these constraints alone do not guarantee that the separable threshold is violated, which may not always be possible for certain types of graphs. However, we do observe numerically for a few small allowed values of $N$ that such violations are indeed possible at least for some graphs.

\subsection{\label{subsecn:separability}Separability}
Having established an upper bound for the expectation value of a large family of observables in fully separable states, it is a natural to ask next whether the type of entanglement detected by certain witnesses is genuine $N$-partite entanglement, or could be achieved with entanglement over only a subset of those $N$ qubits. In other words, one can look at the $k$-separability properties of the observables described above. An $N$-partite pure quantum state $\vert \Psi_{k_{sep}} \rangle$ is $k$-separable iff it can be expressed in the form
\begin{eqnarray}
\vert \Psi_{k_{sep}} \rangle &=& \vert \Psi_1 \rangle \otimes \vert \Psi_2 \rangle \otimes \dots \otimes \vert \Psi_k \rangle
\end{eqnarray}
and likewise, a mixed state $\rho_{k_{sep}}$ is $k$-separable iff it can be expressed in the form
\begin{eqnarray}
\rho_{k_{sep}} &=& \sum_{i} p_i \vert \Psi_{k_{sep}}^{(i)} \rangle \langle \Psi_{k_{sep}}^{(i)} \vert
\end{eqnarray}
We know from theorem \ref{thm:sep-W_M_k} that the maximum expectation value achievable by a fully seperable state for any observable of the form $W_{M}^{(k,N,\mathcal{G})}$ is $\vert E_k \vert$. On the other hand, we have shown that this bound is violated by $N$-qubit GHZ states for observables of the form $W_{XYZ}^{(N,N,\mathcal{G})}$ whenever $N=4n$ and for those of the form $W_{XZ}^{(N,N,\mathcal{G})}$ whenever $N=2n$ for some $n \in \mathbb{Z}^{+}$. However, such violations do not necessarily certify genuine $N$-partite entanglement.

To illustrate, consider an $M$-separable product state of the form $\vert \Phi_{n}^{M} \rangle = \otimes_{m=1}^{M} \vert \psi_{n}^{(m)} \rangle$, where $\vert \psi_n \rangle = \frac{1}{\sqrt{2}}\left( \vert 0 \rangle^{\otimes n} + \vert 1 \rangle^{\otimes n} \right)$ is the $n$-qubit GHZ state. Then, the maximal expectation value of 3 for the observable $W_{XYZ}^{(N,N,\mathcal{G})}$ is achieved by both an $N$-qubit GHZ state $\vert \psi_{N} \rangle$, but also any $M$-separable state $\vert \Phi_{n}^{M} \rangle$ whenever $N=4nM$. Likewise, $\langle \psi_{N} \vert W_{XZ}^{(N,N,\mathcal{G})} \vert \psi_{N} \rangle = \langle \Phi_{n}^{M} \vert W_{XZ}^{(N,N,\mathcal{G})} \vert \Phi_{n}^{M} \rangle = 2$, the maximal value, whenever $N=2nM$.

For $W_{XZ}^{(2,N,\mathcal{G})}$, the upper bound in Lemma \ref{lemma:upper_bound} implies that if the edge set is given by $E_2 = \cup_{j=1}^{3} E_2^{(j)}$ where $E_2^{(1)}$ and $E_2^{(2)}$ are respectively the edges contained within each halves of some partition of the graph $\mathcal{G}$, and $E_2^{(3)}$ the edges that cross that partition, then any bi-separable state could achieve an expectation value of at most $2 (\vert E_2^{(1)} \vert + \vert E_2^{(2)} \vert) + \vert E_2^{(3)}\vert$, which can certainly be exceeded if the upper bound of $2(\vert E_2^{(1)} \vert + \vert E_2^{(2)} \vert + \vert E_2^{(3)} \vert)$ is saturated, in which case the observable would certify genuine $N$-partite entanglement\footnote{We could similarly extend this argument to $W_{M}^{(2,N,\mathcal{G})}$ (see Appendix \ref{secn:general-obs}), and also use the same reasoning to $W_{M}^{(k,N,\mathcal{G})}$ to establish the upper bound of $M \left(\sum_{j=1}^{k} \vert E_k^{(j)} \vert\right) + \vert E_k^{(k+1)} \vert$ for $k$-separable states, where the edge sets $E_k^{(1)}, \dots, E_k^{(k)}$ are restricted to each of $k$-partitions, and $E_{k}^{(k+1)}$ consists of $k$-tuples with one index in each of the $k$ partitions.}.
However, such a saturation of the upper bound is not guaranteed. In principle, the generalized Cauchy-Schwarz inequality (Lemma \ref{thm:cauchy-schwarz}) could be employed in conjunction with the inequalities proved earlier, as well as the physicality requirement $\sum_{\alpha=1}^{4^k} \langle P_{\alpha} \rangle^2 \leq 2^k$, where the sum runs over all $k$-qubit Pauli operators in any $k$-qubit state $\rho$, to obtain some upper bounds on the expectation value of such observables in $k$-separable states. However, in practice we can find tighter bounds using numerical techniques. For each $k \leq N$, we compute the largest expectation value achievable by any $k$-separable pure state, setting an upper bound to the largest expectation value achievable by any $k$-separable mixed state.

Concretely, we adopt a trigonometric parametrization of a classical probability vector $\vec{p} = (p_1, \dots, p_{2^N})$ as
\begin{eqnarray}
p_i &=& \sin^{2}{\theta_{i-1}} \prod_{j=i}^{N-1} \cos^{2}{\theta_j}
\end{eqnarray}
with $\theta_0 = \pi/2$, which naturally enforces the normalization $\sum_i p_i = 1$. A pure $N$-qubit quantum state can then be parametrized in terms of this probability vector and $2^N - 1$ relative phases
\begin{eqnarray}
\vert \psi \rangle_i &=& \sqrt{p_i} e^{i\theta_i}
\end{eqnarray}
where $\theta_0 = 0$.

As an example, numerically optimizing for the expectation value of $W_{XZ}^{(2,5,\mathcal{G})}$ in various $k$-separable states ($k \leq 5$), we observe in Fig. \ref{fig:W_XZ_2_sep} that the witness $W_{XZ}^{(2,5,\mathcal{G})}$ serves to certify genuine multi-partite entanglement in the linear chain QAOA-MaxCut state over $N=5$ qubits at $p\geq 2$, while at $p=1$ the witness can certify that the produced QAOA state is not 3-(or higher) separable, but cannot certify that it is not 2-separable. On the other hand, we also observe in Fig. \ref{fig:W_XZ_2_sep} that the corresponding witness $W_{XZ}^{(2,5,\mathcal{G})}$ serves to certify genuine multi-partite entanglement in the complete (all-to-all connected) graph QAOA-MaxCut state over $N=5$ qubits at all $p$ values.

In both these examples, there is a strict hierarchy of the maximum achievable expectation value $\langle W_{XZ}^{(2,N,\mathcal{G})} \rangle$ according to the separability of the class of states in which that expectation is computed, i.e. $\max_{\rho_{\text{$k$-sep}}} \langle W \rangle > \max_{\rho_{\text{$k^{\prime}$-sep}}} \langle W \rangle$ whenever $k < k^{\prime}$. This is not always the case, and the existence of such a strictly ordered hierarchy depends, at least in the case of the $W_{XZ}^{(2,N,\mathcal{G})}$, as well as their cousins in the $W_{PQ}^{(2,N,\mathcal{G})}$ family, on the structure of the graph whose edges are being summed over. In particular, if the graph $\mathcal{G}$ in $W_{PQ}^{(2,N,\mathcal{G})}$ is only non-trivially defined on $m \leq N$ qubits, then the maximum expectation value is the same for all $k$-separable states whenever $k \leq N-m+1$.

\begin{figure}
    \centering
    \includegraphics[width=0.48\textwidth]{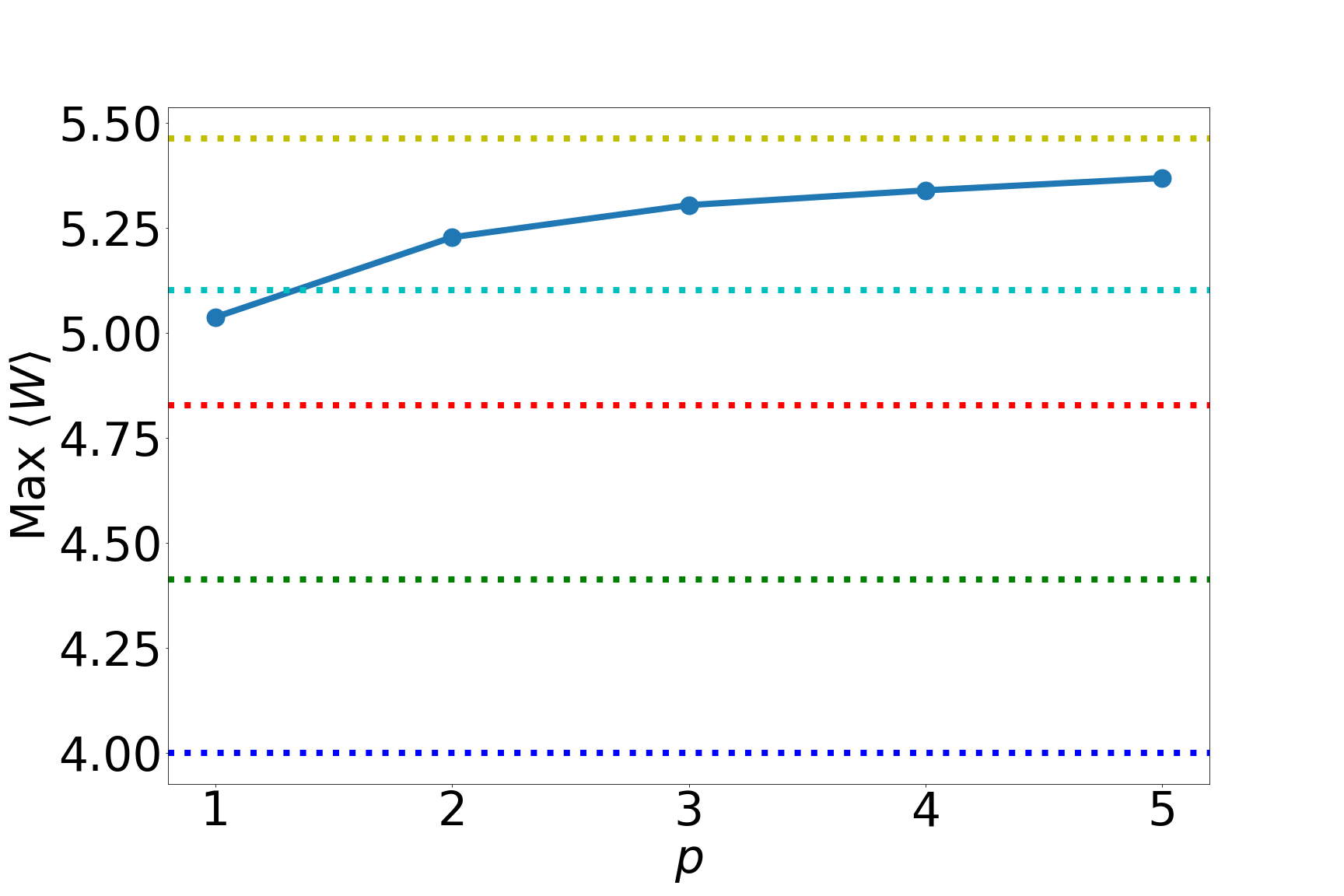}
    \includegraphics[width=0.48\textwidth]{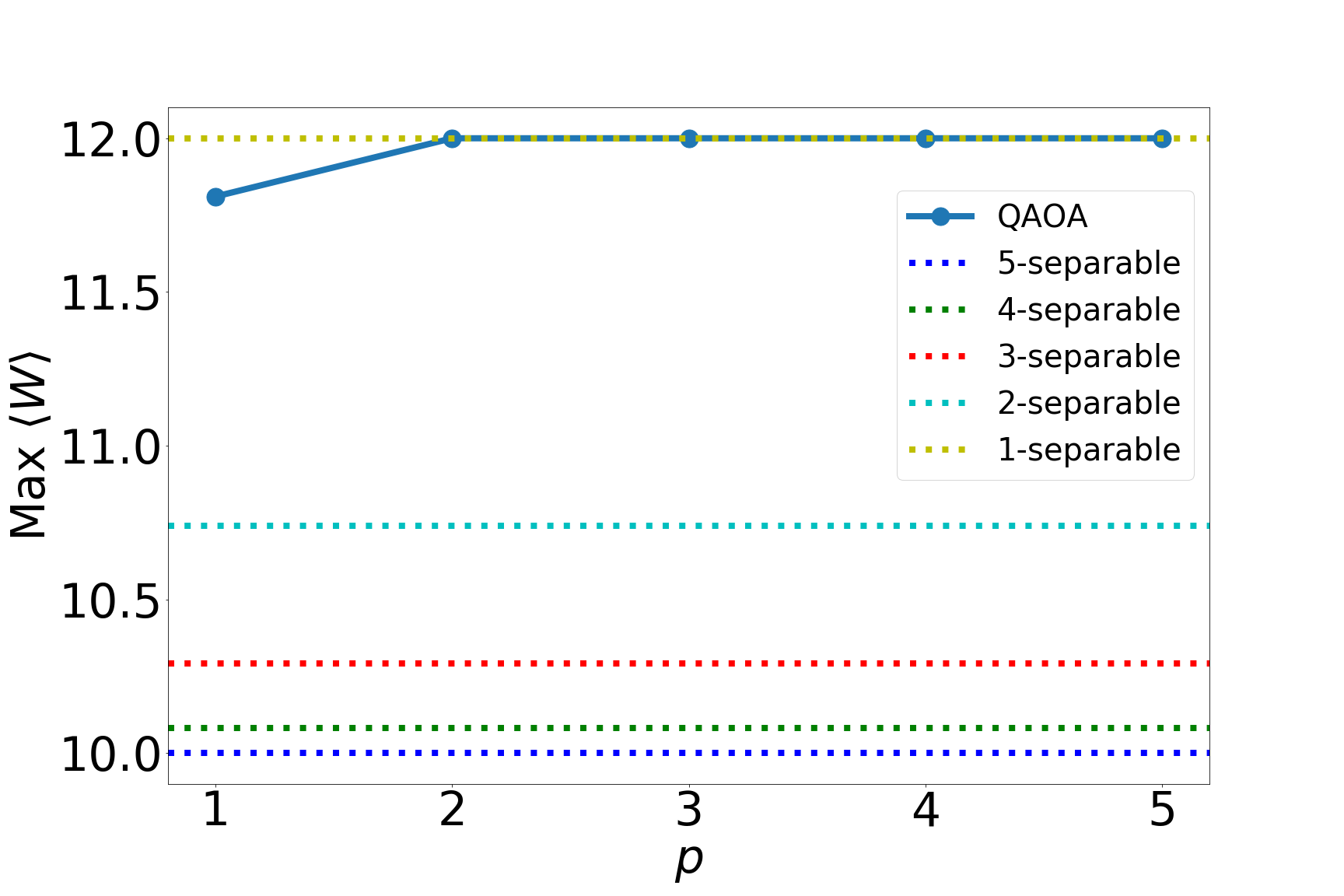}
    \caption{Separability properties of the entanglement witness $W=W_{XZ}^{(2,5,\mathcal{G})}$ for a linear chain graph (top) and complete graph (bottom) MaxCut problem. We numerically compute and report the maximum achievable expectation value of this observable in $k$-separable states for various values of $k\leq N = 5$, and compare it to the maximum expectaion value that QAOA states can achieve for various value of $p$.}
    \label{fig:W_XZ_2_sep}
\end{figure}

\subsection{Entanglement Potency}
Since the entanglement structure of multi-qubit states is very rich \cite{HorodeckiQuantumEntanglement, bengtsson2017geometry}, it is relevant to ask how many entangled states can the discussed Bell-type observables detect? In order to answer this question, we introduce a metric on witnesses, called {\it entanglement potency}, and compute its value on the space of QAOA-MaxCut states (according to a particular distribution) as well as Haar random states. We see that these observables have non-negligible entanglement potency for QAOA-MaxCut states, but close to zero potency for Haar random states, making them a suitable choice for detecting entanglement generated by this particular ansatz.
\begin{definition}
The {\it entanglement potency} of an entanglement witness $W$ with respect to some measurable set of states $\xi$ is given by
\begin{equation}
    P_\xi(W) = \frac{V^\xi_d}{V^\xi_T},
\end{equation}
where $V^\xi_T$ is the total volume of quantum states in the set $\xi$, and $V^\xi_d$ is the volume of such states whose entanglement is detected by $W$.
\end{definition}
The set $\xi$ could be a parametric family of states, such as QAOA or VQE, or it could also be the set of all $N$-qubit states (e.g. with Hurwitz parametrization \cite{Hurwitz1897, Zyczkowski_2001}). Since the set is measurable, the definition allows us the freedom to draw from arbitrary distributions over the chosen set of states. In practice, we would draw states according to some fixed distribution over some collection, and measure the fraction of drawn states that violate the separable threshold for the observable $W$. This provides an unbiased estimate of the entanglement potency defined above, and approaches its ideal value asymptotically as the number of samples becomes large.

One may expect that the potency of any $N$-qubit observable in Haar random states decreases with $N$ and perhaps vanishes at large $N$, since each observable can detect only a small subset of entangled states, while the volume of the entire set of states grows exponentially. This is similar to the vanishingly small volume of separable states at large $N$ \cite{Zyczkowski_1998}. We numerically observe that for $W_{XZ}^{(2,N,\mathcal{G})}$ on fully connected and ring graphs, the potency is negligible while for $W_{XZ}^{(N,N,\mathcal{G})}$ it decreases with $N$ (see Table~\ref{table:Z2_potency}). For QAOA-MaxCut, we observe a non-negligible amount of entangled states detected by the investigated observables (see Fig.~\ref{fig:MaxCut_potency},~\ref{fig:Z2_potency} and Table~\ref{table:Z2_potency}).


Applying this metric to QAOA-MaxCut, we consider states prepared via the ring graph Hamiltonian together with observables of the type $W_{XZ}^{(2,N,\mathcal{G}_\mathrm{ring})}$ defined on the same ring graph. The potency is then given by the volume of $(\gamma,\beta)$ for which the normalized expectation value of this witness given in Eq. \eqref{eq:ring_expect} is greater than one. This can be expressed as the following integral
\begin{eqnarray}
    && P_{\mathrm{QAOA}_1}(W_{XZ}^{(2,N,\mathcal{G}_\mathrm{ring})}) \nonumber \\
    &=& \frac{1}{(2\pi)^2} \int_0^{2\pi}d\beta \int_0^{2\pi}d\gamma \nonumber \\
    && \qquad \Theta\left(\cos^2(2\gamma) +\frac{1}{2}\sin(4\gamma)\sin(4\beta) - 1 \right)
\label{eq:heaviside_potency}
\end{eqnarray}
where $\Theta$ is the Heaviside step function. This integral evaluates to 1 only for angles that violate the separable bound, and is normalized by the total volume of $(2\pi)^2$ (i.e. the area of $[0,2\pi)\times[0,2\pi)$ of $\beta, \gamma$ intervals).

Although we could try to evaluate this integral numerically, in practice we use the Monte Carlo (MC) method to sample QAOA-MaxCut states with uniformly random values of $\gamma$ and $\beta$ from $[0,2\pi)$ for $p=1$ and $p=5$ and measure the fraction of states that violate the separable threshold for the witness, i.e. satisfy $\langle W_{XZ}^{(2,N,\mathcal{G}_\mathrm{ring})}\rangle > N$.
These results are depicted in the bottom figure of Fig.~\ref{fig:MaxCut_potency}. 
\begin{figure}
    \centering
    \includegraphics[width = 0.48\textwidth]{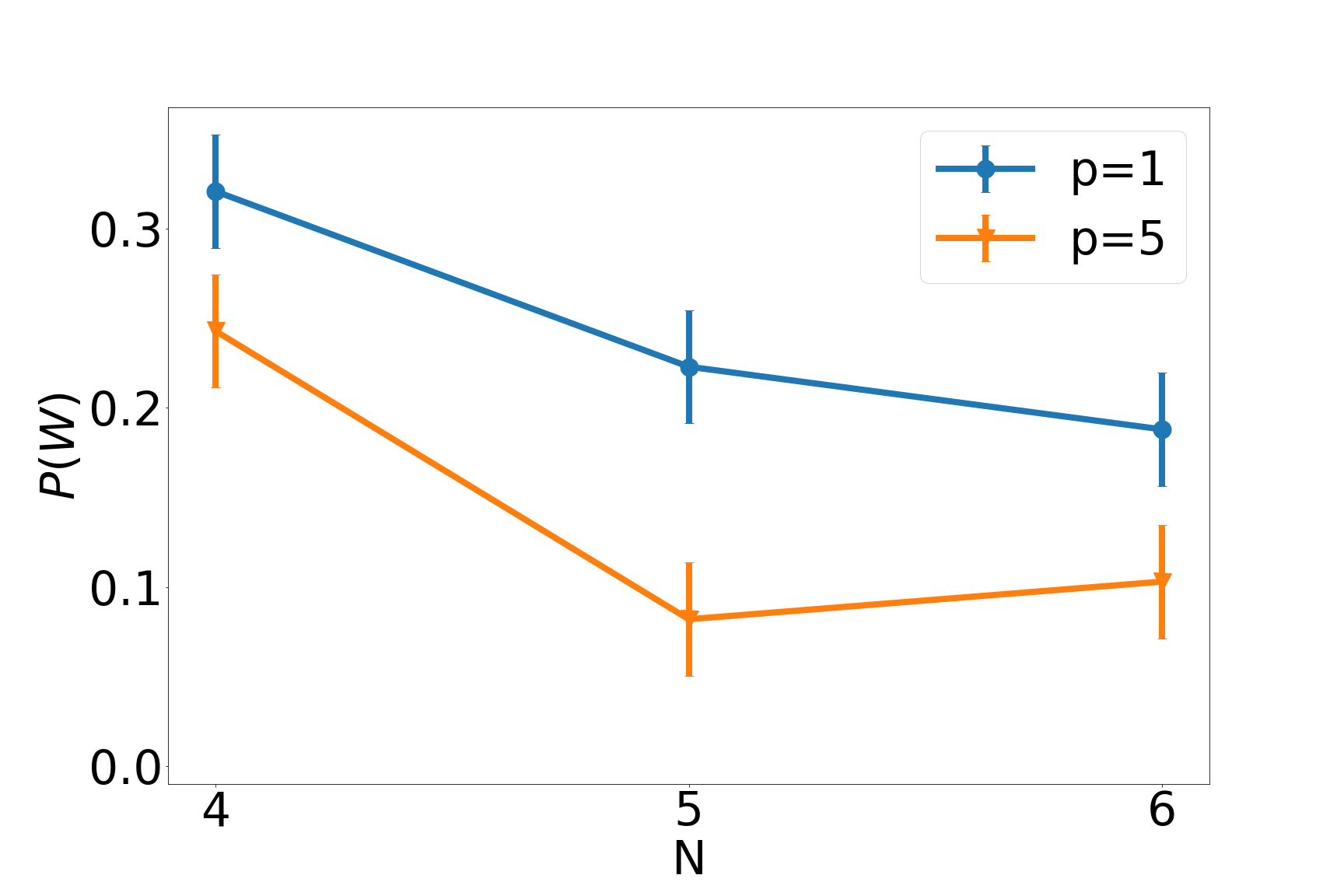}
    \includegraphics[width = 0.48\textwidth]{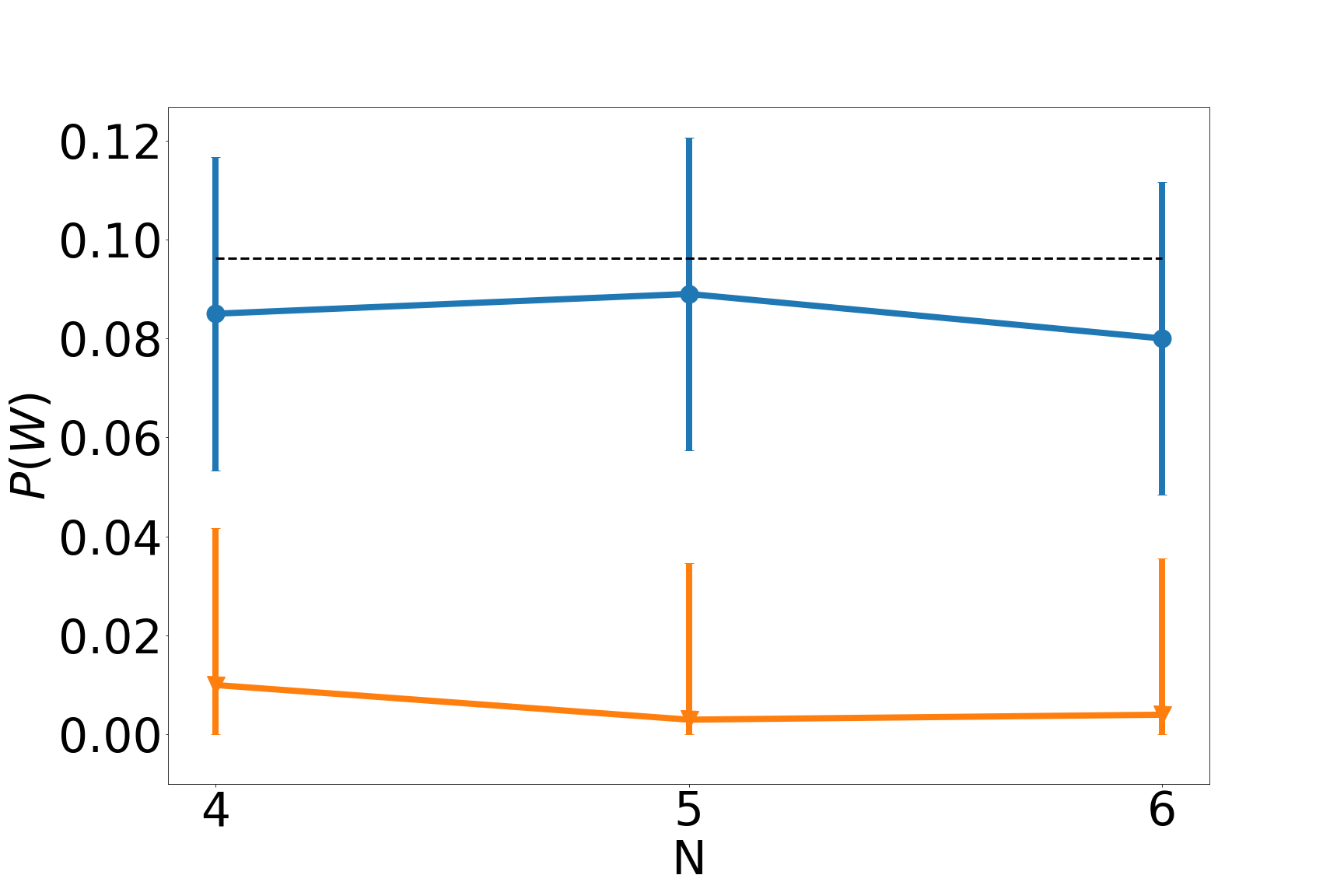}
    \caption{ Entanglement potencies for $W=W_{XZ}^{(2,N,\mathcal{G})}$ with both the witness observable and problem Hamiltonian corresponding to (top) fully connected $N$-qubit graph, for QAOA MaxCut problem, and (bottom) $N$-qubit ring graph, with $p=1$ (blue) and $p=5$ (orange) sampled numerically with 1000 uniform random $(\gamma,\beta)$ angles. Dashed line (bottom) shows numerical integration of Eq. \eqref{eq:heaviside_potency} at $\approx0.096$. Although we do not depict this in the figure, we similarly sampled 10,000 Haar random states (as opposed to only QAOA-MaxCut states), and found no violation of the separable threshold, so we consider the entanglement potency of both these observables for Haar random states to be zero for each of the investigated values of $N$.
    }
    \label{fig:MaxCut_potency}
\end{figure}
We repeat the MC analysis for fully connected MaxCut problem with the witness sharing the fully connected graph structure (see top figure in Fig.~\ref{fig:MaxCut_potency}). One notices that the potency of the witness decreases for $p=5$ QAOA-MaxCut states compared to $p=1$ states. This behavior is  unsurprising, since more layers of QAOA increases the algorithm's expressibility, allowing a larger fraction of non-detectable states to be explored,
so that the fraction of detectable entangled states should decrease.

We also numerically estimate the entanglement potency of $W_{XZ}^{(N,N,\mathcal{G})}$ with respect to $p=1$ QAOA states prepared via fully connected graphs with Hamiltonians
\begin{equation}
    H = \sum_{\langle i,j\rangle \in E_f}J_{i,j} Z_iZ_j,
\end{equation}
with $J_{i,j}\in\{-1,+1\}$.
In particular, we investigate the case for $N=4$, and employ the MC technique to estimate the potency of $W_{XZ}^{(4,4,\mathcal{G})}$ with respect to the set of $p=1$ QAOA states prepared according to each of 64 possible Hamiltonians (from which only 6 are non-isomorphic), with uniformly randomly drawn values of $\gamma$ and $\beta$ (see~Fig.~\ref{fig:Z2_potency}). Additionally, we check the scaling properties of randomly selected Hamiltonians that display $\mathbb{Z}_2$ symmetry (i.e. randomly sampling $J_{i,j}$ coefficients from $\{-1,1\}$) which constitute $\xi$. We report these entanglement potencies in Fig.~\ref{fig:Z2_potency} and in Table~\ref{table:Z2_potency}.
\begin{figure}
    \centering
    \includegraphics[width=0.48\textwidth]{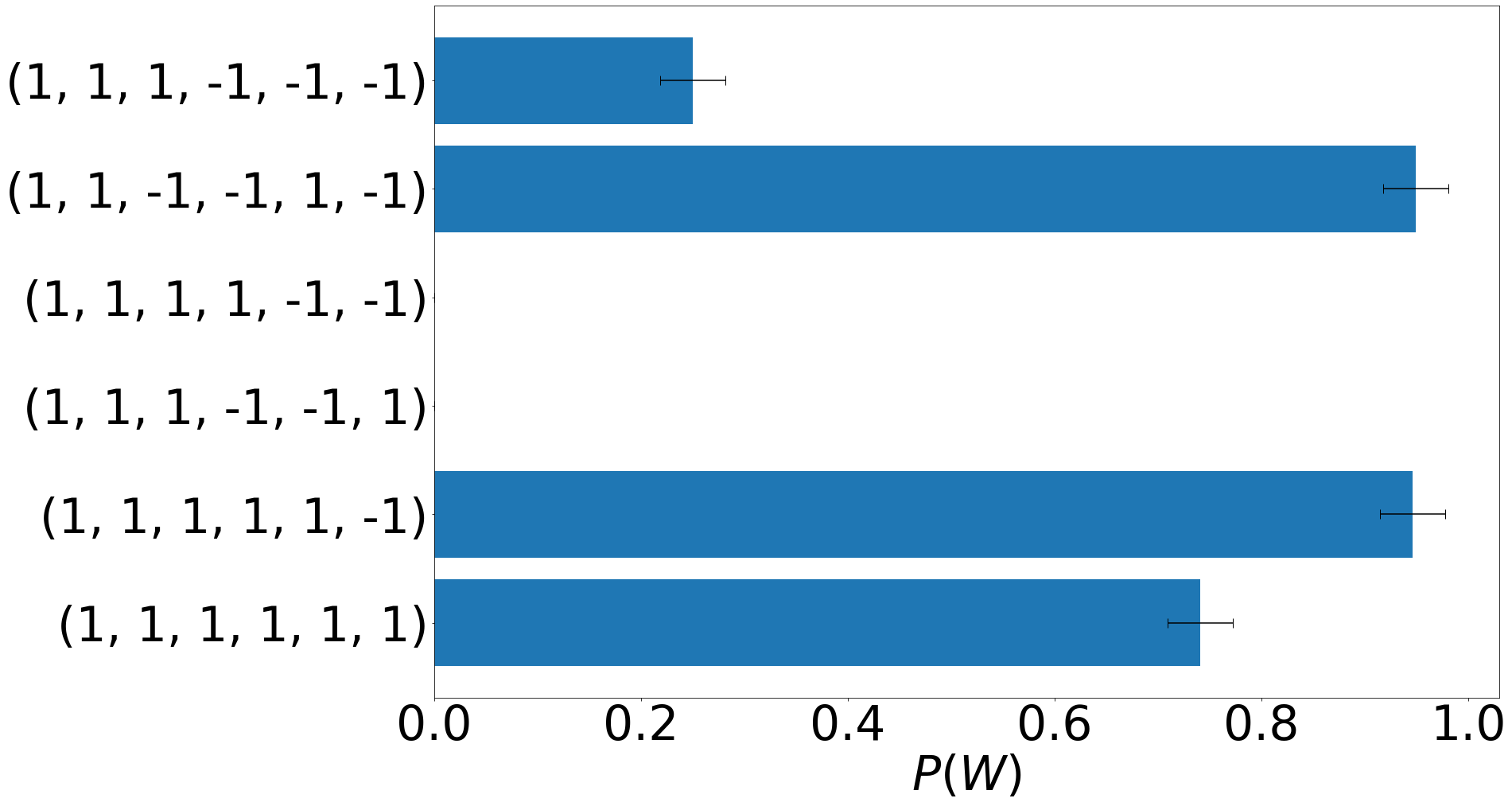}
    \caption{Entanglement potency of witness observables $W=W_{XZ}^{(N,N,\mathcal{G})}$ for random fully connected problem Hamiltonians of the form $\sum_{\langle i,j \rangle} J_{ij} Z_i \otimes Z_j$ with $J_{ij} \in \{-1,1\}$. (top) Depicts potency for all possible non-isomorphic fully connected Hamiltonians on 4 nodes (specified by $J_{ij}$ terms along the y-axis), sampled with 1000 random angles. Error bars represent sampling uncertainty $\sigma = \frac{1}{\sqrt{N}} \approx 0.03$.
    \label{fig:Z2_potency}}
\end{figure}

\begin{table*}[!htpb]
\setlength{\tabcolsep}{12pt}
\begin{tabular}{@{}*{8}{c}@{}*{8}{c}@{}*{10}{c}@{}*{10}{c}@{}*{8}{c}@{}*{8}{c}@{}*{10}{c}@{}*{10}{c}}\toprule
\multirow{2}{*}{N} & \multicolumn{3}{c}{QAOA $p=1$} & \multicolumn{3}{c}{QAOA $p=5$} & Random \\\noalign{\smallskip}
& mean & max & min & mean & max & min & mean\\
\midrule\addlinespace
4 & 0.48 & 0.96 & 0 & 0.48 & 0.85 & 0.16 & 0.12$\pm$0.01 \\\noalign{\smallskip}
5 & 0 & 0 & 0 & 0 & 0 & 0 &  0.09$\pm$0.01 \\\noalign{\smallskip}
6 & 0.47 & 0.52 & 0.42 & 0.50 & 0.53 & 0.46 & 0.05$\pm$0.01\\
\bottomrule
\end{tabular}
\caption{ Entanglement potency for a set of problem Hamiltonians with randomly chosen $J_{ij} \in \{-1,1\}$ (i.e. with $\mathbb{Z}_2$ symmetry), along with the minimum and maximum potency found in the set of problem Hamiltonians. Each potency was calculated using 1000 random angle samples. The data at $N=4,5,6$ were calculated using $60$,$80$, and $100$ random problem Hamiltonians, respectively - note that based on the histogram (Fig.~\ref{fig:Z2_potency} it is likely that for $N=6$ there can exist Hamiltonians for which QAOA $p=1$ has $P(W)=0$. The random state data represents the potency of the observable over 10,000 Haar random states.}\label{table:Z2_potency}
\end{table*}

\subsection{\label{sec:results}Results - Entanglement}

For various numbers of qubits, we report the experimental results of measuring the value of the entanglement witness $W_{XZ}^{(2,N,\mathcal{G}_\mathrm{line})}$ for linear chain $p=1$ QAOA-MaxCut states defined over the same set of edges as the respective cost Hamiltonian. We use randomized compiling \cite{randomized-compiling,beale_3945250} to twirl errors on the physical gates into stochastic errors. We fit a simple depolarizing model onto the results \cite{Xue_2021, local-noise-qaoa}, and obtain a reasonably good fit, as shown in Fig.~\ref{fig:expt_linear}. More details are provided in Appendix \ref{secn:experiment}.
\begin{figure}
    \centering
    \includegraphics[width=0.48\textwidth]{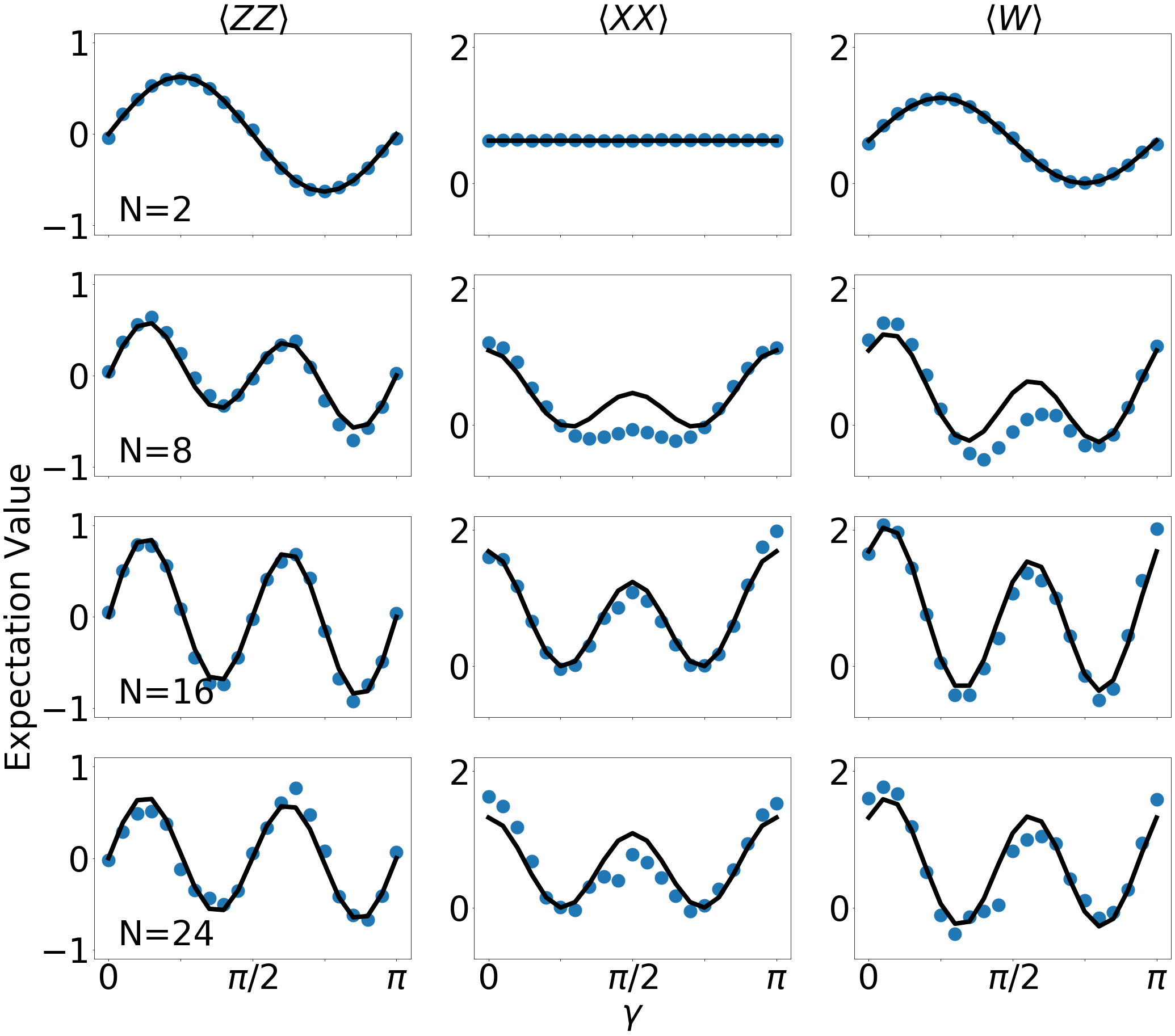}
    \caption{Experimental results for a linear chain $W = W_{XZ}^{(2,N,\mathcal{G}_{linear})} = \sum_{i}^{N-1} X_i X_{i+1} + Z_i Z_{i+1}$ witness observable from the Rigetti Aspen-9 QPU. For $N=2,8,16,24$ qubits (top-bottom), we plot the expectation values of the sum of all $ZZ$ terms (left-most column), $XX$ terms (middle column), and the witness observable $W$ (right-most column). We overlay the experimental data points with a fitted model that is given by some scale factor $(1 - p_{noise})$ times the analytical expression Eq. \eqref{eqn:W_XZ_2_linear_chain}.
    }
    \label{fig:expt_linear}
\end{figure}

In general, noise will reduce the chances of detecting entanglement. In the case of a global depolarizing noise channel
\begin{equation}
    \rho \rightarrow (1-p_{noise})\rho + p_{noise}\frac{\mathbbm{1}}{2^N}
\end{equation}
with noise parameter $p_{noise}$, the expectation value in the noisy mixed state scales down as $(1-p_{noise}) \langle W \rangle$, where $\langle W \rangle$ represents the expectation value in the ideal pure QAOA state.
Experimentally, we find a violation of the separable threshold at $N=2$ for the observables $W_{XZ}^{(2,2,\mathcal{G})}$ and $W_{XY}^{(2,2,\mathcal{G})}$ as depicted in Fig.~\ref{fig:W2_ent} (a similar violation was recently reported experimentally in \cite{gold2021entanglement}), we failed to find any such violations for $N \geq 3$. In Appendix \ref{secn:experiment}, we provide the values of the noise parameters that were obtained from fitting a global depolarizing channel to the data. It is found that for $N \geq 3$, these are above the critical threshold of the depolarizing noise parameter that one obtains from Eq. \eqref{eqn:Wmax-qaoa-linear-chain}
\begin{equation}
    p_{noise} \lesssim 1 - \frac{N-1}{1.207N - 1.019}
\label{eqn:p-critical}
\end{equation}
If the noise level on the hardware is below the threshold defined by Eq. \eqref{eqn:p-critical} for $N \geq 3$, then one would be able to certify entanglement in the linear chain $p=1$ QAOA-MaxCut state. For $N=2$, the noise on the hardware was sufficiently below the critical threshold $p_{noise} < \frac{1}{2}$ to enable the witness to detect entanglement in the QAOA state.

\begin{figure}
    \centering
    \includegraphics[width=0.48\textwidth]{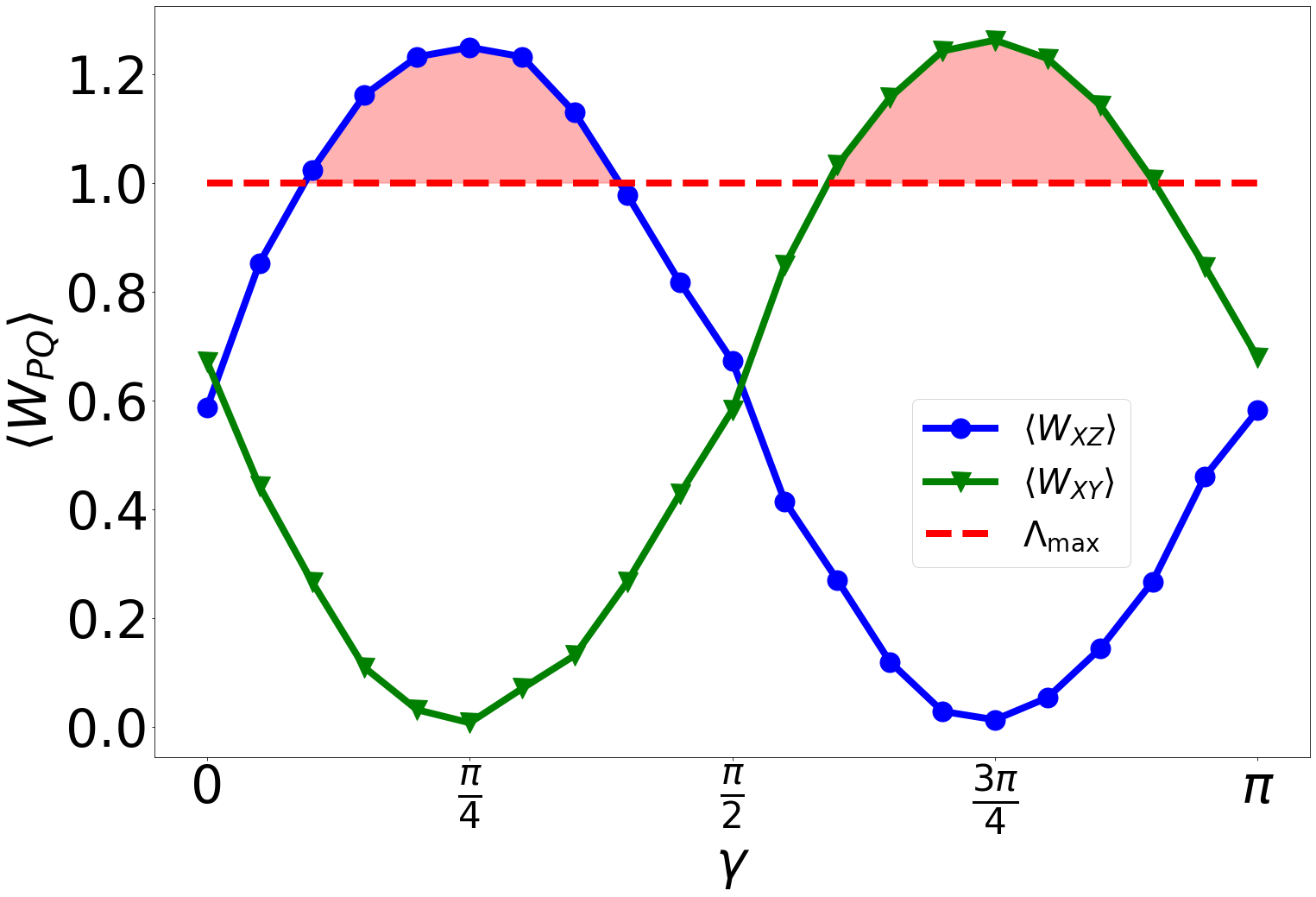}
    \caption{Expectation value of two observables: (green, wedges) $W_{XY}=W_{XY}^{(2,2,\mathcal{G}_{linear})}$ and (blue dots) $W_{XZ}=W_{XZ}^{(2,2,\mathcal{G}_{linear})}$ on Rigetti Aspen-9 chip for QAOA $p=1$ MaxCut problem (see Appendix~\ref{secn:experiment} for details of the experimental setup) as a function of $\gamma$ angle ($\beta=\frac{\pi}{8}$). Red area indicates region of violation of the separable bound $\Lambda_{\mathrm{max}}=1$ (red dashed line).}
    \label{fig:W2_ent}
\end{figure}

\section{\label{sec:Conclusions}Conclusions}

In this paper, we have introduced a practical method to verify key non-classical properties of a quantum algorithm's implementation on a physical device. Using measurements in only the three Pauli bases, we reconstruct the experimental single-qubit reduced density matrices (SQRDMs), and interpret their coherence as a basic measure of non-classicality, or quantumness. We identified a large family of obsevables, that can serve as entanglement witnesses and that could be measured using the same measurement data. Although our work focuses on QAOA-MaxCut, the same procecure could in principle be used to test the entanglement of any other state prepared on a (noisy) quantum device -- since Theorem~\ref{thm:sep-W_M_k} provides both upper and lower bounds for separable states, the experimentalist simply has to collect bitstrings in the $X$, $Y$ and $Z$ bases, compute the expectation value for a suitable observable and check if either the upper or lower bound is violated. Given the measurement data, estimating the expectation value of these observables is at most polynomial in the number of qubits, and therefore efficient.

Our work has also proposed a novel generalization of Bell-type observables, for which we established a non-trivial inequality. The generalized Cauchy-Schwarz inequality we used to prove one of our main results may have independent interest in other areas of quantum information, computer science, and related fields. We showed that entanglement witnesses for variational circuits can be constructed out of the cost Hamiltonian itself, without resorting to decomposing the projector of the entangled state into a possibly exponentially large set of measurable operators. This may inspire the construction of yet more witnesses in future work. In particular, we imagine that the techniques we have outlined in this paper could provide a foundation to the construction of similar witnesses in other parameteric families of circuits, such as those found in Variational Quantum Eigensolvers (VQE) or Quantum Machine Learning (QML), prevalent in the NISQ era. 

For QAOA-MaxCut in particular, we noted that the cost Hamiltonian and the corresponding 2-local witness defined on the same graph saturate in maximum expectation value at around the same $p$. Thus, even though further rounds of QAOA might generate more entanglement, the witness does not capture this extra entanglement. Instead, the observation that it tends to saturate around the same $p$ hints that it may provide a measure of the amount of useful entanglement that the algorithm employs in the optimization problem. In general, it is desirable to obtain measures of and verification procedures for the amount of entanglement that is algorithmically relevant, and our work provides a step in that direction.

As gate fidelities and qubit counts and connectivities on near-term devices improve, it will become increasingly important to verify that the hardware resources are employing genuinely non-classical resources in the execution of some algorithm, without which a quantum advantage would be impossible. Our work takes an important step in the direction of such algorithmic benchmarks. Additionally, identifying the entanglement properties in quantum circuits can help us in understanding the role of entanglement as a resource for quantum computing.

\begin{acknowledgments}
This material is based upon work supported by the Defense Advanced Research Projects Agency (DARPA) under agreement No. HR00112090058 and IAA 8839. F.W., J.S., Z.W., P.A.L. and D.V. also acknowledge USRA NASA Academic Mission Services (contract NNA16BD14C). 
The authors wish to thank Stuart Hadfield and Jeffrey Marshall for useful comments and feedback, as well as the Quantum Benchmark Inc. team for support on error modeling and mitigation. The experimental results were made possible by contributions across the Rigetti enterprise, and we especially acknowledge effort from Alex Hill, Nicholas Didier, Joseph Valery, and Greg Stiehl.
\end{acknowledgments}

\appendix
\section{Experimental details}\label{secn:experiment}
Experimental results were obtained on the 32 qubit Rigetti Aspen-9 system using a QAOA $p=1$ circuit ansatz shown in Fig.~\ref{fig-circuit}. Our implementation of the linear chain topology achieved constant circuit-depth by construction with parallel gates. Moreover, this approach enabled a systematic investigation, since adding new elements to the chain did not disrupt the prior circuit design. Median average gate fidelity for the CZ gate across the 24 qubit linear array under study was $F_{CZ}=94\%$ at the time of experiments, predominantly limited by decoherence. To mitigate residual coherent error, all circuits were randomly compiled 100 times under Pauli twirling \cite{randomized-compiling,beale_3945250}. Each circuit instance compiled to a random bit-flip pattern over the qubit register before measurement (later undone in post-processing) to remove readout bias. Each unique circuit was executed 300 times.

After fitting a global depolarizing model as described in Section \ref{sec:results}, we find that to 3 decimal places, $p_{noise} \approx 0.370$ for $N=2$, which is less than the critical threshold of $p_{noise} < \frac{1}{2}$, so that we observe a violation of the separable threshold at $N=2$. At $N=8$, $16$ and $24$, to 3 decimal places we obtain $p_{noise} \approx 0.844$, $0.888$ and $0.943$ respectively. These are all above the critical thresholds of $p_{critical} \approx 0.190$, $0.180$ and $0.177$ (to 3 decimal places, for $N=8$, $16$ and $24$ respectivelty) defined by Eq. \eqref{eqn:p-critical} so that for these values of $N$, we do not observe a violation of the separable threshold.

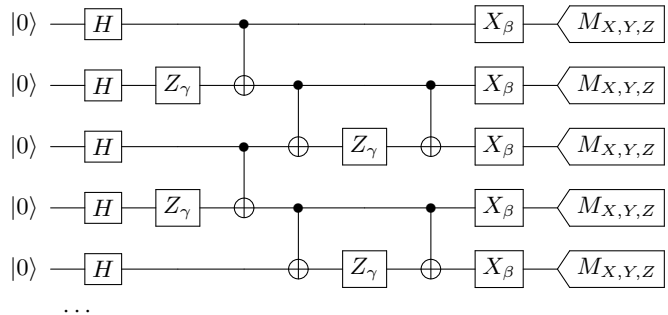
\begin{figure}[ht]
\centerline{
\Qcircuit @C=1.4em @R=1.0em {
\lstick{\ket{0}} & \gate{H} & \qw & \ctrl{1} & \qw & \qw & \qw & \gate{X_{\beta}} & \measuretab{M_{X,Y,Z}} \\
\lstick{\ket 0} & \gate{H}  & \gate{Z_{\gamma}} & \targ & \ctrl{1} & \qw & \ctrl{1} & \gate{X_{\beta}} & \measuretab{M_{X,Y,Z}} \\
\lstick{\ket 0} & \gate{H}  & \qw & \ctrl{1} & \targ & \gate{Z_{\gamma}} & \targ & \gate{X_{\beta}} & \measuretab{M_{X,Y,Z}} \\
\lstick{\ket 0} & \gate{H}  &  \gate{Z_{\gamma}} & \targ & \ctrl{1} & \qw & \ctrl{1} & \gate{X_{\beta}} & \measuretab{M_{X,Y,Z}} \\
\lstick{\ket 0} & \gate{H}  & \qw & \qw & \targ & \gate{Z_{\gamma}} & \targ & \gate{X_{\beta}} & \measuretab{M_{X,Y,Z}} \\
\rstick{\cdots}
}
}
\caption{Example circuit for QAOA, $p=1$ circuit ansatz (N=5), optimized for parallel execution on a linear chain topology at arbitrary size. CNOT gates are compiled to native CZ gates. Single-qubit rotations are compiled to continuous-angle $Z_{\phi}$ and fixed-angle $X_{90}$ gates. Measurements ($M_{X,Y,Z}$) are collected for three axes ($X,Y,Z$) using tomographic pre-rotations and $Z$-basis measurements.}\label{fig-circuit}
\end{figure}

In Fig. \ref{fig:coherences-rainbow}, we plot the coherences for some fixed value of $\beta=\frac{\pi}{8}$ {\it vs.} $\gamma$ ($\gamma = k\frac{\pi}{20}$ for $k=0 ,1,\ldots,20$). In the same figure, we additionally report average (middle semicircle), minimal (inner semicircle), and maximal (outer semicircle) coherences over all active qubits  for each $\gamma$.
\begin{figure*}
    \centering
\includegraphics[width = 0.98\textwidth]{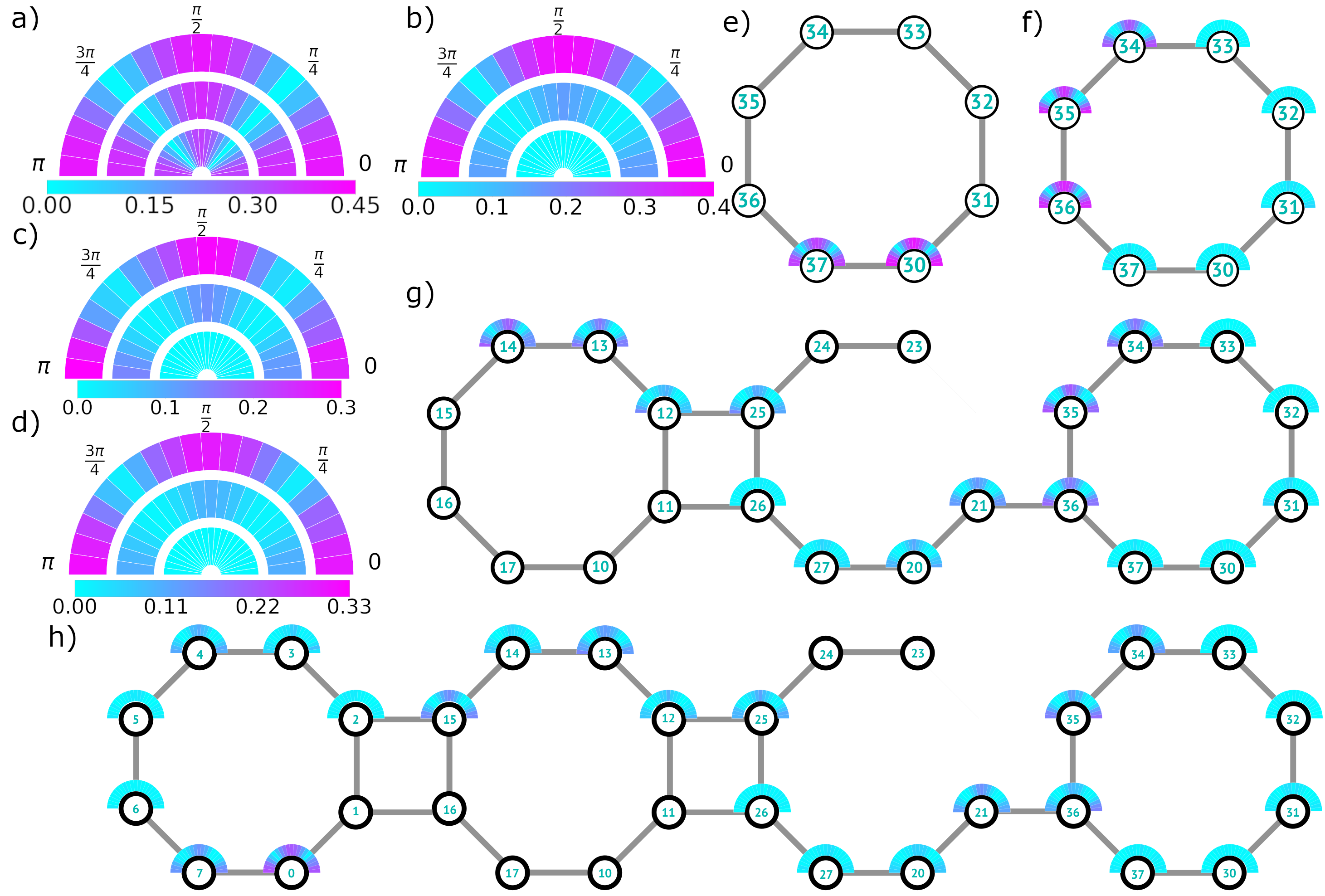}
    \caption{Coherences of SQRDMs for MaxCut $p=1$ QAOA run on Rigetti Aspen-9 chip (see text in Appendix~\ref{secn:experiment} for more details on the experiment) - a) and e): 2-qubit problem; b) and f): 8-qubit; c) and g): 16-qubit, d) and h): 24-qubit . Semicircles are divided into segments, each represents $\gamma$ angle ($\gamma = k\frac{\pi}{20}$ for $k=0,1,2\ldots,20$) with the coherence values color-coded (below a)-d) charts) for that QAOA setup (for all figures we used $\beta =\frac{\pi}{8}$). Figures a)-d) report maximal (outer semicircle), mean (middle semicircle) and minimal (inner semicircle) values of coherences across all used qubits. e)-h) specify active qubit's coherences and their position in Aspen-9 chip. }
    \label{fig:coherences-rainbow}
\end{figure*}
One can notice, that Rigetti Aspen-9 chip can build non-zero signatures of quantumness over all qubits across various problems of different size, subjected to non-negligible noise.

\section{General observables}
\label{secn:general-obs}
Here, we describe generalizations of the observables $W_{PQ}^{(k,N,\mathcal{G})}$ and $W_{XYZ}^{(k,N,\mathcal{G})}$ defined in Eqs. (\ref{eqn:W_PQ_k}) and (\ref{eqn:W_XYZ_k}) respectively, to which Theorem \ref{thm:sep-W_M_k} applies. We consider observables of the form
\begin{eqnarray}
W_{M}^{(k,N,\mathcal{G})} &=& \sum_{\langle i_1, \dots, i_k \rangle} \sum_{m=1}^{M} \alpha_m \bigotimes_{j=1}^{k}\sigma_{i_j}^{(a_{j,m})}
\label{eqn:W_M_kN}
\end{eqnarray}
where we restrict to $M \leq 3$, and
\begin{enumerate}
    \item $\alpha_m \in \{0, 1\}$ for all $1\leq m \leq M=3$,
    \item the indices $a_{j,m} \in \{1, 2, 3\}$ specify which single-qubit (non-identity) Pauli operator acts on the $j$-th qubit $i_j$ in the $m$-th term,
    \item $a_{j,1} \neq a_{j,2} \neq a_{j,3}$ for each $j$,
\end{enumerate}
and the superscript $\mathcal{G}$ refers to the generalized graph, i.e. the set of $k$-tuples $\langle i_1, \dots, i_k \rangle$ being summed over.
All observables of the form Eqs. (\ref{eqn:W_PQ_k}) and \eqref{eqn:W_XYZ_k} belong to this family. Note that in those expressions, we use the subscript to denote a Pauli string, not a numerical value for $M$. We choose this convention to simplify our expressions whenever the indices $a_{j,m}$ are fixed for all values of $j$ once $m$ is specified. The number of characters in the subscript string then specify the number of non-zero coefficients $\alpha_m$. Thus, $W_{XZ}^{(k,N,\mathcal{G})}$ is really an observable of the form $W_{2}^{(k,N,\mathcal{G})}$ where $\alpha_1 = \alpha_2 = 1$, $\alpha_3=0$ and we have fixed $a_{j,1} = 1$ and $a_{j,2} = 3$. Similarly, $W_{XYZ}^{(k,N,\mathcal{G})}$ is an observable of the form $W_{3}^{(k,N,\mathcal{G})}$ where $\alpha_m = 1$ and we have fixed $a_{j,m} = m$.

Using the same convention, we can also build other observables such as
\begin{equation}
    W_{XY}^{(k,N,\mathcal{G})} = \sum_{\langle i_1, \dots, i_k \rangle} \left( \bigotimes_{j=i_1}^{i_k} X_j + \bigotimes_{j=i_1}^{i_k} Y_j \right),
\label{eqn:W_XY_k}
\end{equation}
which is an observable of the form $W_{2}^{(k,N,\mathcal{G})}$ with $\alpha_1 = \alpha_2 = 1$, $\alpha_3 = 0$ and $a_{j,1}=1$, $a_{j,2}=2$ for all $j$. Another such example is
\begin{equation}
    W_{YZ}^{(k,N,\mathcal{G})} = \sum_{\langle i_1, \dots, i_k \rangle} \left( \otimes_{j=i_1}^{i_k} Y_j + \otimes_{j=i_1}^{i_k} Z_j \right)
\label{eqn:W_YZ_k}
\end{equation}
which is again an observable of the form $W_{2}^{(k,N,\mathcal{G})}$ with $\alpha_1 = \alpha_2 = 1$, $\alpha_3 = 0$ and $a_{j,1}=2$, $a_{j,2}=3$ for all $j$. We refer to any observable of the form $W_{XZ}^{(k,N,\mathcal{G})}$, $W_{XY}^{(k,N,\mathcal{G})}$ or $W_{YZ}^{(k,N,\mathcal{G})}$ as that of the form $W_{PQ}^{(k,N,\mathcal{G})}$. Note that $W_{ZYX}^{(k,N,\mathcal{G})}$ or $W_{YZX}^{(k,N,\mathcal{G})}$ etc. are operationally the same as $W_{XYZ}^{(k,N,\mathcal{G})}$.

However, none of these observables exploit the additional freedom allowed by Theorem \ref{thm:sep-W_M_k} to choose different $a_{j,m}$ for different values of $j$ as well as $m$. We can also build more complicated observables by allowing the choice of the indices $a_{j,m}$ to vary with the choice of $j$ as well as $m$, as long as condition 3 above is met. For example, with $M=2$, $k=6$ we could construct $a_{1,1}=1$, $a_{2,1}=2$, $a_{3,1}=3$, $a_{4,1}=3$, $a_{5,1}=2$, $a_{6,1}=1$ and $a_{1,2}=3$, $a_{2,2}=1$, $a_{3,2}=2$, $a_{4,2}=1$, $a_{5,2}=3$, $a_{6,2}=2$ to give
\begin{equation}
W = \sum_{\langle i,j,k,l,m,n \rangle} \left( X_i Y_j Z_k Z_l Y_m X_n + Z_i X_j Y_k X_l Z_m Y_n \right)
\end{equation}
As another example, with $M=3$, $k=5$ we could construct
\begin{eqnarray}
W &=& \sum_{\langle i, j, k, l, m \rangle} \left( X_i Y_j Z_k X_l Y_m + Z_i X_j Y_k Y_l X_m \right.\nonumber \\
&&\qquad\qquad\qquad\qquad \left. + Y_i Z_j X_k Z_l Z_m \right)
\end{eqnarray}
and so on. Theorem \ref{thm:sep-W_M_k} applies to this entire family of observables, which is much larger than those of the form in Eqs (\ref{eqn:W_PQ_k}) and  (\ref{eqn:W_XYZ_k}) alone.

\section{\label{secn:proofs}Proofs}

\textbf{Proof of Lemma \ref{thm:cauchy-schwarz}}
\begin{proof}
We will first prove the following identity involving the Hadamard product using induction
\begin{equation}
    \norm{\odot_{j=1}^{k} \vec{x}^{(j)}} \leq \prod_{j=1}^{k} \norm{\vec{x}^{(j)}}
\label{eqn:lemma-cs-hadamard-identity}
\end{equation}
where the Hadamard product is defined as
\begin{equation}
    \left( \odot_{j=1}^{k} \vec{x}^{(j)} \right)_i = \prod_{j=1}^{k} x_{i}^{(j)}.
\end{equation}
First, we prove the base case involving two vectors $\vec{a}$, $\vec{b} \in \mathbb{R}^{n}$ and their Hadamard product $\vec{v} = \vec{a} \odot \vec{b}$. Then,
\begin{eqnarray}
\norm{\vec{v}} &=& \sqrt{\left( a_1 b_1 \right)^2 + \dots + \left( a_n b_n \right)^2} \nonumber \\
&\leq& \sqrt{a_1^{2} \norm{\vec{b}}^2 + \dots + a_n^{2} \norm{\vec{b}}^2} \nonumber \\
&=& \norm{\vec{a}} \norm{\vec{b}}
\end{eqnarray}
where the inequality follows from noting that the absolute value of any component of a vector is bounded above by the (Euclidean) norm of that vector by definition, $\vert b_i \vert \leq \norm{\vec{b}} = \sqrt{\sum_{j=1}^{n} b_j^{2}}$ for any $\vec{b} \in \mathbb{R}^{n}$ and $i \in [n]$.
Next, we assume that Eq. \eqref{eqn:lemma-cs-hadamard-identity} holds true for some collection of $k-1$ vectors $\vec{x}^{(1)}, \dots, \vec{x}^{(k-1)} \in \mathbb{R}^{n}$, i.e. defining $\vec{v} = \odot_{j=1}^{k-1} \vec{x}^{(j)}$, we have $\norm{\odot_{j=1}^{(k-1)} \vec{x}^{(j)}} = \norm{\vec{v}} \leq \prod_{j=1}^{k-1} \norm{\vec{x}^{(j)}}$.
Then,
\begin{eqnarray}
\norm{\odot_{j=1}^{k} \vec{x}^{(j)}} &=& \norm{\vec{v} \odot \vec{x}^{(k)}} \nonumber \\
&=& \sqrt{v_1^{2} \left( x_1^{(k)}\right)^2 + \dots + v_n^{2} \left( x_n^{(k)}\right)^2} \nonumber \\
&\leq& \sqrt{\norm{\vec{v}}^2 \left( x_1^{(k)}\right)^2 + \dots + \norm{\vec{v}}^2 \left( x_n^{(k)}\right)^2} \nonumber \\
&=& \norm{\vec{v}} \norm{\vec{x}^{(k)}} = \prod_{j=1}^{k} \norm{\vec{x}^{(j)}}
\end{eqnarray}
and so Eq. \eqref{eqn:lemma-cs-hadamard-identity} follows by induction. We can now prove the main statement of our Lemma. Defining $\vec{w} = \odot_{j=1}^{m-1} \vec{x}^{(j)}$, we now have
\begin{eqnarray}
\left\vert \sum_{i=1}^{n} \left( \odot_{j=1}^{m} \vec{x}^{(j)}\right)_i \right\vert &=& \left\vert \sum_{i=1}^{n} \left( \vec{w}\odot \vec{x}^{(m)} \right)_i \right\vert \nonumber \\
&=& \vert \vec{w} \cdot \vec{x}^{(m)} \vert \nonumber \\
&\leq& \norm{\vec{w}} \norm{\vec{x}^{(m)}} \nonumber \\
&\leq& \prod_{j=1}^{m-1} \norm{\vec{x}^{(j)}} \norm{\vec{x}^{(m)}} \nonumber \\
&=& \prod_{j=1}^{m} \norm{\vec{x}^{(j)}}
\end{eqnarray}
which is the statement of the Lemma. In the proof above, the first inequality follows from the standard Cauchy-Schwarz inequality, while the second inequality follows from Eq. \eqref{eqn:lemma-cs-hadamard-identity}.

\end{proof}

\textbf{Proof of Theorem \ref{thm:sep-W_M_k}}
\begin{proof}
Given a pure state $\vert \psi \rangle = \cos{\frac{\theta}{2}} \vert 0 \rangle + e^{i\phi} \sin{\frac{\theta}{2}} \vert 1 \rangle$, we have
\begin{eqnarray}
\langle X \rangle_{\psi} &=& \sin{\theta} \cos{\phi} \nonumber \\
\langle Y \rangle_{\psi} &=& \sin{\theta} \sin{\phi} \nonumber \\
\langle Z \rangle_{\psi} &=& \cos{\theta}
\end{eqnarray}
A fully separable pure state is given by $\vert \Psi \rangle = \otimes_{j=1}^{N} \left( \cos{\frac{\theta_j}{2}} \vert 0 \rangle + e^{i\phi_j} \sin{\frac{\theta_j}{2}} \vert 1 \rangle\right)$. For $M=1$ alone, it is easy to prove the the upper bound of the absolute expectation values in such states, as follows
\begin{eqnarray}
\left\vert \langle W_{1}^{(k,N,\mathcal{G})} \rangle\right\vert &=& \left\vert \sum_{\langle i_1, \dots, i_k \rangle} \prod_{j=1}^{k} \langle \sigma_{i_j}^{(a_{j,1})} \rangle\right\vert \nonumber \\
&\leq& \left\vert \sum_{\langle i_1, \dots, i_k \rangle} \prod_{j=1}^{k} (1) \right\vert = \vert E_k \vert
\end{eqnarray}
For $1\leq M\leq 3$, we note that $\langle W_{M}^{(k,N,\mathcal{G})} \rangle$ can be written as a Hadamard product of $k$ vectors
\begin{eqnarray}
\langle W_{M}^{(k,N,\mathcal{G})} \rangle &=& \sum_{\langle i_1, \dots, i_k \rangle} \sum_{m=1}^{M} \prod_{j=1}^{k} \langle \sigma_{i_j}^{a_{j,m}} \rangle \nonumber \\
&=& \sum_{\langle i_1, \dots, i_k \rangle} \sum_{m=1}^{M} \left( \odot_{j=1}^{k} \vec{v}_{i_j}^{(M)} \right)_m
\end{eqnarray}
where $\vec{v}_{i_j}^{(M)} = \left( \langle \sigma_{i_j}^{(a_{j,1})} \rangle, \dots, \langle \sigma_{i_j}^{(a_{j,M})} \rangle\right) \in \mathbb{R}^M$ (since expectations of Pauli operators are real) where $j$ labels some index in a $k$-tuple.
Note that $\norm{\vec{v}_{i_j}^{(1)}} \leq 1$ since the maximum (absolute) expectation value of any Pauli operator is 1. Next, since $\forall j$, $a_{j,1} \neq a_{j,2} \neq a_{j,3}$, we have that $\norm{\vec{v}_{i_j}^{(2)}}$ is the square root of the sum of squares of $\langle X\rangle$ and $\langle Y \rangle$, $\langle X \rangle$ and $\langle Z\rangle$, or $\langle Y\rangle$ and $\langle Z\rangle$. All of these cases are bounded above by $\norm{\vec{v}_{i_j}^{(3)}}$.
Furthermore,
\begin{eqnarray}
\norm{\vec{v}_{i_j}^{(3)}} &=& \sqrt{\sin^{2}{\theta_{i_j}}\cos^{2}{\phi_{i_j}} + \sin^{2}{\theta_{i_j}}\sin^{2}{\phi_{i_j}} + \cos^{2}{\theta_{i_j}}} \nonumber \\
&=& 1
\end{eqnarray}
so that we have $\norm{\vec{v}_{i_j}^{(M)}} \leq 1$ for any $M \in \{1,2,3\}$. Therefore, we obtain
\begin{eqnarray}
\left\vert \langle W_{M}^{(k,N,\mathcal{G})} \rangle \right\vert &=& \left\vert \sum_{\langle i_1, \dots, i_k \rangle} \sum_{m=1}^{M} \left( \odot_{j=1}^{k} \vec{v}_{i_j}^{(M)} \right)_m \right\vert \nonumber \\
&\leq& \sum_{\langle i_1, \dots, i_k \rangle} \left\vert \sum_{m=1}^{M} \left( \odot_{j=1}^{k} \vec{v}_{i_j}^{(M)} \right)_m \right\vert \nonumber \\
&\leq& \sum_{\langle i_1, \dots, i_k \rangle} \prod_{j=1}^{k} \norm{\vec{v}_{i_j}^{(M)}} \nonumber \\
&\leq& \sum_{\langle i_1, \dots, i_k \rangle} \prod_{j=1}^{k} (1) = \vert E_k \vert
\label{eqn:thm1-pure-states-bound}
\end{eqnarray}
where the first inequality follows from repeated use of the triangle inequality $\vert a + b\vert \leq \vert a \vert + \vert b \vert$ for any $a,b \in \mathbb{R}$, the second inequality follows from Lemma \ref{thm:cauchy-schwarz} and the last inequality follows from the bound $\norm{\vec{v}_{i_j}^{(M)}} \leq 1$ as noted above.
Next, suppose we are given some fully separable mixed state
\begin{equation}
\rho = \sum_j \alpha_j \rho_{j_1} \otimes \dots \otimes \rho_{j_N}
\end{equation}
where each $\alpha_j > 0$ and $\sum_j \alpha_j = 1$. Each of the density operators in the tensor product can be expressed as
\begin{equation}
\rho_{j_{k}} = \sum_{l} c_{j_{k},l} \vert \psi_{j_{k}, l} \rangle \langle \psi_{j_{k}, l} \vert 
\end{equation}
with $c_{j_{k},l} > 0$ and $\sum_{l} c_{j_{k},l} = 1$. Therefore, by absorbing these constants into the definition of the $\alpha_j$'s, we have
\begin{equation}
\rho = \sum_j \alpha_j \vert \psi_{j_1} \rangle \langle \psi_{j_1} \vert \otimes \dots \otimes \vert \psi_{j_N} \rangle \langle \psi_{j_N} \vert
\end{equation}
and defining the separable pure states $\vert \Psi_j \rangle \langle \Psi_j \vert = \otimes_{m=j_1}^{j_n} \vert \psi_{m} \rangle \langle \psi_{m} \vert$, we have
\begin{eqnarray}
\left\vert \tr{\left( \rho W_{M}^{(k,N,\mathcal{G})}\right)} \right\vert &=& \left\vert \tr{\left( \sum_j \alpha_j \vert \Psi_j \rangle \langle \Psi_j \vert \cdot W_M^{(k,N,\mathcal{G})}\right)} \right\vert \nonumber \\
&=& \left\vert \sum_j \alpha_j \tr{\left( \ket{\Psi_j} \bra{\Psi_j} \cdot W_M^{(k,N,\mathcal{G})}\right)}\right\vert \nonumber \\
&\leq& \sum_j \alpha_j \left\vert \tr{\left( \ket{\Psi_j} \bra{\Psi_j} \cdot W_M^{(k,N,\mathcal{G})} \right)} \right\vert \nonumber \\
&\leq& \sum_j \alpha_j \cdot \vert E_k \vert = \vert E_k \vert
\end{eqnarray}
where the first inequality follows from repeated use of the triangle inequality, and the last inequality follows from Eq. \eqref{eqn:thm1-pure-states-bound} for pure states.
\end{proof}

\textbf{Proof of Theorem \ref{thm:max_ev}}
\begin{proof}
We will first prove the theorem for the case of $W_{XZ}^{(k,N,\mathcal{G})}$. One way to lower bound the largest eigenvalue of $W = \sum_{\langle i_1, \dots, i_k \rangle} \bigotimes_{j=i_1}^{i_k} X_j + \bigotimes_{j=i_1}^{i_k} Z_j$ is to simply compute its expectation value in some state. A convenient choice is $\vert + \rangle^{\otimes N}$. In this state, we find $\langle W \rangle = \vert E_k \vert$, therefore $\lambda_{max} \geq \vert E_k \vert$. However, we would like to increase this lower bound, since $\vert E_k \vert$ is met by separable states. In \cite{van_meighem1}, a sharper lower bound for the largest eigenvalue of any $m \times m$  symmetric matrix is provided as
\begin{equation}
\lambda_{max} \geq \frac{N_1}{m} + 2 \left( \frac{N_3}{2m} - \frac{N_1 N_2}{m^2} + \frac{N_{1}^{3}}{2m^3} \right) \lambda_{0}^{-2} + O(t^{-4})
\label{eqn:lower_bound_symm_matrix}
\end{equation}
where $u = [ 1\, 1 \dots 1]^{T}$ is the $m \times 1$ all-ones vector, $N_{k} = u^{T} A^{k} u$, $B = \frac{1}{\sqrt{m}} \text{max}_{1\leq j \leq m} (W_{jj} + \sum_{i=1;i\neq j}^{m} \vert W_{ij} \vert)$, $\lambda_0 = t \sqrt{m}$ and $t$ is any real number satisfying $t \geq B$. In our case, we have $m = 2^N$.

We already noted that $N_1/m = \vert E_k \vert$. Now,
\begin{eqnarray}
W^2 &=& \left( \sum_{\langle i_1, \dots, i_k \rangle} ( \bigotimes_{m=i_1}^{i_k} X_m + \bigotimes_{m=i_1}^{i_k} Z_m)  \right)^2 \nonumber \\
&=& \sum_{\langle i_1, \dots, i_k \rangle} \sum_{\langle j_1, \dots, j_k \rangle} \left( X_{i_1} \dots X_{i_k} X_{j_1} \dots X_{j_k}  + \right. \nonumber \\
&& \qquad X_{i_1} \dots X_{i_k} Z_{j_1} \dots Z_{j_k} + Z_{i_1} \dots Z_{i_k} X_{j_1} \dots X_{j_k}\nonumber \\
&& \left.\qquad + Z_{i_1} \dots Z_{i_k} Z_{j_1} \dots Z_{j_k}\right)
\end{eqnarray}
Under expectation in the $\vert + \rangle^{\otimes N}$ state, the first term becomes $\left( \sum_{\langle i_1,\dots,i_k \rangle} \langle X_{i_1} \rangle \dots \langle X_{i_k} \rangle \right)^2 = \vert E_k \vert^2$. Both terms involving a product of $X\dots X$ and $Z\dots Z$ terms are zero under expectation, since either the $Z_i$ terms will survive under the product, or multiply with an $X_i$ to produce a $Y_i$, and since $\langle + \vert Z \vert + \rangle = \langle + \vert Y \vert + \rangle = 0$, the entire term vanishes in expectation. The last term $Z_{i_1}\dots Z_{i_k} Z_{j_1}\dots Z_{j_k}$ can only contribute whenever $\langle i_1,\dots,i_k \rangle$ and $\langle j_1, \dots, j_k \rangle$ specify the same $k$-tuple, so that $\left\langle \sum_{\langle i_1, \dots, i_k \rangle} \sum_{\langle j_1, \dots, j_k \rangle} Z_{i_1}\dots Z_{i_k} Z_{j_1} \dots Z_{j_k} \right\rangle = \vert E_k \vert$. Altogether, this gives us
\begin{eqnarray}
\frac{N_2}{m} &=& \langle W^2 \rangle = \vert E_k \vert^{2} + \vert E_k \vert
\end{eqnarray}
To compute the $N_3$ term, we observe that $W^3$ consists of four types of terms:
\begin{enumerate}
    \item One term of the form $X_{i_1} \dots X_{i_k} X_{j_1}\dots X_{j_k} X_{l_1}\dots X_{l_k}$, which under expectation give a total contribution of $\vert E_k \vert^2$, following the same reasoning as in the computation of $N_2 / m$ above,
    \item Three terms of the form $\left( X_{i_1}\dots X_{i_k}\right)^2 \left( Z_{i_1}\dots Z_{i_k}\right)$, each of which give a contribution of 0 under expectation, since the $Z_m$'s have no choice but to either persist or multiply with an $X_m$ to give $Y_m$, and $\langle Z_m \rangle = \langle Y_m \rangle = 0$,
    \item Three terms of the form $\left( X_{i_1}\dots X_{i_k}\right)\left(Z_{i_1}\dots Z_{i_k} \right)^2$, each of which give a contribution of $\vert E_k \vert^2$ since $\sum_{i_1, \dots, i_k} \langle X_{i_1} \dots X_{i_k} \rangle = \vert E_k \vert$, and $\langle \left( \sum_{\langle i_1, \dots, i_k} Z_{i_1} \dots Z_{i_k} \right)^2 \rangle = \vert E_k \vert$ for the same reason we noted earlier in the computation of $N_2 / m$,
    \item One term of the form $Z_{i_1} \dots Z_{i_k} Z_{j_1}\dots Z_{j_k} Z_{l_1}\dots Z_{l_k}$, which has non-trivial contributions from triangles within a graph, and is generally tedious to compute.
\end{enumerate}
Even without explicitly calculating this last term, we have
\begin{eqnarray}
\frac{N_3}{m} &=& \langle W^3 \rangle \geq \vert E_k \vert^{3} + 3 \vert E_k\vert^{2}
\end{eqnarray}
Combining these results, we have
\begin{eqnarray}
&& \frac{N_3}{2m} - \frac{N_1 N_2}{m^2} + \frac{N_{1}^{3}}{2m^3} \nonumber \\
&\geq& \frac{\vert E_k\vert^{3} + 3 \vert E_k\vert^2}{2} - \vert E_k\vert^3 - \vert E_k\vert^2 + \frac{\vert E_k\vert^3}{2} \nonumber \\
&=& \frac{\vert E_k\vert^2}{2} > 0
\end{eqnarray}
Letting $t=\frac{\sqrt{m}}{2}B = \vert E_k \vert$, we have $t \geq B$ for all $N \geq 2$, and $\lambda_0 = t \sqrt{m} = \vert E_k \vert \sqrt{2^N}$. Plugging these values into Eq. (\ref{eqn:lower_bound_symm_matrix}), we find that
\begin{eqnarray}
2 \left( \frac{N_3}{2m} - \frac{N_1 N_2}{m^2} + \frac{N_{1}^{3}}{2m^3} \right) \lambda_{0}^{-2} \nonumber &\geq& \frac{1}{2^N} > 0
\label{eqn:lower_bound_max_eigval}
\end{eqnarray}
for any finite $N$, so that
\begin{eqnarray}
\lambda_{max} &\geq& \vert E_k \vert + \frac{1}{2^N} + O\left( \frac{1}{\vert E_k \vert^4}\right) \nonumber \\
&>& \vert E_k \vert + O\left( \frac{1}{\vert E_k \vert^4} \right)
\end{eqnarray}
which shows that the theorem holds for $W_{XZ}^{(k,N,\mathcal{G})}$. Noting that the eigenspectrum of any observable $A$ remains invariant under a unitary transformation $U^{\dagger} A U$, we can perform single-qubit rotations $U = \bigotimes_{q=1}^{N} R_X^{(q)}(\pi/2)$ (where $R_X^{(q)}(\alpha) = \exp(-i\alpha X_q)$) at every qubit to have Eq. (\ref{eqn:lower_bound_max_eigval}) apply to observables of the form $W_{XY}^{(k,N,\mathcal{G})}$ as well. Similarly, choosing $U=\bigotimes_{q=1}^{N} R_Z^{(q)}(\pi/2)$ yields the same bound for observables of the form $W_{YZ}^{(k,N,\mathcal{G})}$. This exhausts all the possibilities for $W_{PQ}^{(k,N,\mathcal{G})}$, and the theorem holds.

\end{proof}

\textbf{Proof of Lemma \ref{lemma:xx-zz-expects}}
\begin{proof}
These expressions can be derived using essentially the procedure described in \cite{jw-qaoa}. In that paper, the authors derive the expression for $\langle C_{uv} \rangle$, where $C_{uv} = \frac{1}{2} \left( \mathbbm{1} - Z_u Z_v \right) \rightarrow \langle Z_u Z_v \rangle = 1 - 2 \langle C_{uv} \rangle$, and where the QAOA state is built from the cost Hamiltonian $C^{\prime} = \frac{1}{2} \sum_{\langle u,v \rangle} \left( \mathbbm{1} - Z_u Z_v \right)$. The expression they derive for $\langle Z_u Z_v \rangle$ matches our expression for $\langle Z_u Z_v \rangle$ as long as one adjusts for the difference in convention by replacing $\gamma \rightarrow 2\gamma$ and $\langle Z_u Z_v \rangle \rightarrow 2 \langle Z_u Z_v \rangle$ in going from the notation of \cite{jw-qaoa} to our expression.

The expression for $\langle X_u X_v \rangle$ is derived similarly in the Heisenberg picture. We first note that $e^{i\beta B} X_u X_v e^{-i\beta B} = X_u X_v$ for $B = \sum_{i=1}^{N} X_i$, so that
\begin{eqnarray}
\langle X_u X_v \rangle &=& \tr{\left( e^{i\gamma C} e^{i\beta B} X_u X_v e^{-i\beta B} e^{-i\gamma C} \rho_0 \right)} \nonumber \\
&=& \tr{\left( e^{i\gamma C} X_u X_v e^{-i\gamma C} \rho_0 \right)}
\end{eqnarray}
where $\rho_0 = \prod_{q=1}^{N} \frac{1}{2} \left( \mathbbm{1} + X_q\right)$ is the initial state. Further, since $[Z_u Z_v, X_u X_v] = 0$ and $[Z_i Z_j, X_u X_v] = 0$ whenever $i \neq u$, $j \neq v$, the only terms that survive are those involving $u$ ($v$) and its neighbors excluding $v$ ($u$). Denoting the number of such neighbors as $d_u$ and $d_v$ respectively for $u$ and $v$, and defining $C_u = \sum_{j=1}^{d_u} Z_u Z_{a_j}$ and $C_v = \sum_{i=1}^{d_v} Z_v Z_{b_i}$, we have
\begin{eqnarray}
&& \tr{\left( e^{i\gamma C} X_u X_v e^{-i\gamma C} \rho_0 \right)} \nonumber \\
&=& \tr{\left( e^{i\gamma C_v} e^{i\gamma C_u} X_u X_v e^{-i\gamma C_u} e^{-i\gamma C_v} \rho_0 \right)} \nonumber \\
&=& \tr{\left( e^{i 2\gamma C_v} e^{i2\gamma C_u} X_u X_v\rho_0 \right)}
\end{eqnarray}
where in the last line we have used the fact that $X_m Z_m = -Z_m X_m$ to move the $e^{-i\gamma C_u}$ and $e^{-i\gamma C_v}$ terms past $X_u X_v$. Now, $\tr{\left( A_u B_v \rho_0 \right)} = 1$ only when $A, B \in \{\mathbbm{1}, X\}$, so that the only terms in the product $e^{i 2\gamma C_v} e^{i2\gamma C_u}$ that can contribute non-trivially are those that are proportional to the identity. Therefore, using $e^{i\alpha Z_a Z_b} = \cos{\alpha} \mathbbm{1} + i\sin{\alpha} Z_a Z_b$, we see that for triangle-free edges,
\begin{equation}
\langle X_u X_v \rangle = \cos^{d_u + d_v}{2\gamma}
\end{equation}
which agrees with the expression in Lemma \ref{lemma:xx-zz-expects} for the case $f=0$. In the more general case, only terms in the expansion that are proportional to even powers of $Z_u$ and $Z_v$ (and therefore the identity) contribute non-trivially. We then have
\begin{eqnarray}
&& \langle X_u X_v \rangle \nonumber \\
&=& \sum_{j=0,2,4,\dots}^{f} {f \choose j} \left( \cos{2\gamma}\right)^{d_u + d_v - 2j} (i\sin{2\gamma})^{2j} \nonumber \\
&=& \cos^{d_u + d_v - 2f}{2\gamma} \nonumber \\
&& \; \times \sum_{j=0,2,4,\dots}^{f} {f \choose j} (\cos^{2}{2\gamma})^{f-j} (-\sin^{2}{2\gamma})^{j}
\end{eqnarray}
and using the identity
\begin{equation}
    \sum_{j=0,2,4,\dots}^{f} {f \choose j} a^{f-j} b^{j} = \frac{1}{2} \left[ (a + b)^{f} + (a-b)^{f} \right]
\label{eqn:identity-xx-zz-lemma-sum-even}
\end{equation}
we finally obtain
\begin{equation}
    \langle X_u X_v \rangle = \frac{1}{2} \cos^{d_u + d_v - 2f}{2\gamma} \left(1 + \cos^{f}{4\gamma} \right)
\end{equation}
as stated in the Lemma.

To derive the expression for $\langle Y_uY_v \rangle$, we make use of the relation
$SWAP_{uv} = \frac{1}{2}\left(X_uX_v+Y_uY_v+Z_uZ_v+\mathbbm{1}\right)$ and derive the expectation value for $SWAP_{uv}$ first.
Notice that $[Y_u Y_v, B] = - [Z_u Z_v, B]$ and $[X_u X_v, B] = [\mathbbm{I}, B] = 0$, so that $[SWAP_{uv},B]=0$, and therefore the mixing unitaries do not contribute to the expectation value:
\begin{eqnarray}
\langle SWAP_{uv}\rangle &=& 
\tr{\left( e^{i\gamma C} e^{i\beta B} X_u X_v e^{-i\beta B} e^{-i\gamma C} \rho_0 \right)} \nonumber \\
&=& \tr{\left( e^{i\gamma C} SWAP_{uv} e^{-i\gamma C} \rho_0 \right)}
\end{eqnarray}
Inside the trace, only terms involving $u$ or $v$ contribute, and by making use of $SWAP_{uv} e^{i\gamma Z_uZ_{u'}} SWAP_{uv}= e^{i\gamma Z_vZ_{u'}}$, we have
\begin{eqnarray}
&& \langle SWAP_{uv}\rangle \nonumber \\
&=& 
\tr\Big( 
e^{i\gamma \sum_{u'\in\mathcal{N}(u)}Z_uZ_{u'}} 
e^{i\gamma \sum_{v'\in\mathcal{N}(v)}Z_vZ_{v'}}
\nonumber\\
&& \;SWAP_{uv}
e^{-i\gamma \sum_{u'\in\mathcal{N}(u)}Z_uZ_{u'}} 
e^{-i\gamma \sum_{v'\in\mathcal{N}(v)}Z_vZ_{v'}}
\rho_0 \Big)\nonumber\\
&=&
\tr{\left( 
e^{i\gamma (Z_u-Z_v)\left(\sum_{u'\in\mathcal{N}'(u)}Z_{u'}-\sum_{v'\in\mathcal{N}'(v)}Z_{v'}\right)} 
SWAP_{uv} \rho_0 \right)}\nonumber\\
&=&
\tr{\left( 
e^{i\gamma (Z_u-Z_v)\left(\sum_{u'\in\mathcal{N}'(u)}Z_{u'}-\sum_{v'\in\mathcal{N}'(v)}Z_{v'}\right)} 
\rho_0 \right)}
\end{eqnarray}
where 
$\mathcal{N}(u)$ is the set of neighbors of $u$, $\mathcal{N}'(u)$ is the set of nodes that are neighbors of $u$ but not $v$ or neighbors of $v$, with $\mathcal{N}'(v)$ defined similarly,
and the last step is due to $SWAP_{uv}\rho_0=\rho_0$. In the second equality above, we have made use of the fact that $\sum_{u'\in\mathcal{N}(u)}Z_{u'}-\sum_{v'\in\mathcal{N}(v)}Z_{v'} = \sum_{u'\in\mathcal{N}'(u)}Z_{u'}-\sum_{v'\in\mathcal{N}'(v)}Z_{v'}$. Since the exponential contributes either identity or $Z$'s, terms in $\rho_0$ that have $X$ trace to zero, and we further have
\begin{eqnarray}
&& \langle SWAP_{uv}\rangle \nonumber \\
&=& \frac{1}{2^N}\tr{\left( e^{i\gamma (Z_u-Z_v)\left(\sum_{u'\in\mathcal{N}'(u)}Z_{u'}-\sum_{v'\in\mathcal{N}'(v)}Z_{v'}\right)}\right)}\nonumber\\
&=& \frac{1}{2^N} \times \nonumber \\
&& \tr\left(
\prod_{u'\in\mathcal{N}'(u)}
\left(\cos\gamma+i\sin\gamma Z_uZ_{u'}\right)
\left(\cos\gamma-i\sin\gamma Z_vZ_{u'}\right) \right.
\nonumber\\
&&
\left.\prod_{v'\in\mathcal{N}'(v)}
\left(\cos\gamma-i\sin\gamma Z_uZ_{v'}\right)
\left(\cos\gamma+i\sin\gamma Z_vZ_{v'}\right)
\right) \nonumber \\
\end{eqnarray}
There are two types of products that come from the above expansion. The first kind are those in which terms from both the $\mathcal{N}'(u)$ and $\mathcal{N}'(v)$ products are proportional to the identity, which give the following contribution
\begin{eqnarray}
&& \sum_{i=0,2,4,\dots}^{d'_{u}} {d'_u \choose i} \left( \cos^{2}{\gamma}\right)^{d'_u - i} \left( \sin^{2}{\gamma}\right)^{i} \nonumber \\
&& \; \times \sum_{j=0,2,4,\dots}^{d'_{v}} {d'_v \choose v} \left( \cos^{2}{\gamma}\right)^{d'_v - j} \left( \sin^{2}{\gamma}\right)^{j} \nonumber \\
&=& \frac{1}{4} \left( 1 + \cos^{d'_u}{2\gamma}\right) \left( 1 + \cos^{d'_v}{2\gamma}\right)
\label{eqn:lemma-xx-zz-swap-even-terms}
\end{eqnarray}
where $d'_u = \vert \mathcal{N}'(u) \vert = d_u - f$, and $d'_v$ is defined similarly, and we have used Eq. \eqref{eqn:identity-xx-zz-lemma-sum-even}.
The second kind of terms are those in which terms from both the $\mathcal{N}'(u)$ and $\mathcal{N}'(v)$ products are proportional to $Z_u Z_v$, so that the overall product is still proportional to the identity. Terms like these give the following overall contribution
\begin{eqnarray}
&& \sum_{i=1,3,5,\dots}^{d'_{u}} {d'_u \choose i} \left( \cos^{2}{\gamma}\right)^{d'_u - i} \left( \sin^{2}{\gamma}\right)^{i} \nonumber \\
&& \; \times \sum_{j=1,3,5,\dots}^{d'_{v}} {d'_v \choose v} \left( \cos^{2}{\gamma}\right)^{d'_v - j} \left( \sin^{2}{\gamma}\right)^{j} \nonumber \\
&=& \frac{1}{4} \left( 1 - \cos^{d'_u}{2\gamma}\right) \left( 1 - \cos^{d'_v}{2\gamma}\right)
\label{eqn:lemma-xx-zz-swap-odd-terms}
\end{eqnarray}
where we have used the identity
\begin{equation}
    \sum_{j=1,3,5,\dots}^{f} {f \choose j} a^{f-j} b^{j} = \frac{1}{2} \left[ (a + b)^{f} - (a-b)^{f} \right]
\label{eqn:identity-xx-zz-lemma-sum-odd}
\end{equation}
Adding Eqs. \eqref{eqn:lemma-xx-zz-swap-even-terms} and \eqref{eqn:lemma-xx-zz-swap-odd-terms}, we get the total contribution
\begin{equation}
    \langle SWAP_{uv}\rangle = \frac{1}{2}\left(1+\cos^{d_u+d_v-2f}{2\gamma} \right)
\end{equation}
Using the earlier expressions for $\langle X_u X_v \rangle$ and $\langle Z_u Z_v \rangle$, it follows that the expression for $\langle Y_u Y_v\rangle$ is as given in the Lemma.

\end{proof}

\section{Properties of $W_{XYZ}^{(N,N,\mathcal{G})}$ and $W_{PQ}^{(N,N,\mathcal{G})}$}\label{secn:properties-W_XYZ_W_PQ}
Here we present properties of the $W_{XYZ}^{(N,N,\mathcal{G})}$ observable (which can later be used to infer those of $W_{PQ}^{(N,N,\mathcal{G})}$). In particular, we show that some of the GHZ-type states yield expectation values of $\pm3$ for an even number of qubits, and that they are eigenstates of the considered observable. Additionally, it follows from Theorem~\ref{thm:sep-W_M_k} that separable states yield expectation values that are bounded by $\pm 1$.

First, let's calculate the expectation value with respect to the {\it standard} incarnation of an $N$-qubit GHZ state
\begin{equation}
    \ket{\psi} = \frac{\ket{00\ldots0} + \ket{11\ldots 1}}{\sqrt{2}} = \frac{\ket{\bar{0}}+\ket{\bar{1}}}{\sqrt{2}}, 
\end{equation}
where we used $\ket{\bar{k}}=\ket{kk\ldots k}$ for brevity of notation. The expectation value of $W_{XYZ}^{(N,N,\mathcal{G})}$ with respect to $\ket{\psi}$ is
\begin{equation}
\bra{\psi}W_{XYZ}^{(N,N,\mathcal{G})} \ket{\psi} = \frac{3}{2} +\frac{(-1)^N}{2} + \frac{1}{2}\big((-i)^N+(i)^N\big),
\label{eq:exp_Wxyz}
\end{equation}
where we used
\begin{align}
&\bra{\bar{0}}\Zn\ket{\bar{0}}  =  1,\quad\bra{\bar{1}}\Zn\ket{\bar{1}}=  (-1)^N,\quad \bra{\bar{0}}\Zn\ket{\bar{1}} =  0,\nonumber\\
&\bra{\bar{1}}\Zn\ket{\bar{0}}  =  0,\quad \bra{\bar{0}}\Xn\ket{\bar{0}}  =  0,\quad \bra{\bar{1}}\Xn\ket{\bar{1}}  =  0,\nonumber\\
&\bra{\bar{0}}\Xn\ket{\bar{1}}  =  1,\quad\bra{\bar{1}}\Xn\ket{\bar{0}}  =  1,\quad\bra{\bar{0}}\Yn\ket{\bar{0}}  =  0,\nonumber\\
&\bra{\bar{1}}\Yn\ket{\bar{1}}  =  0,\quad\bra{\bar{0}}\Yn\ket{\bar{1}}= (-i)^N,\quad \bra{\bar{1}}\bar{Y}\ket{\bar{0}}  =  (i)^N.\nonumber
\end{align}
For $N=4k$, Eq. \eqref{eq:exp_Wxyz} gives $\bra{\psi}W_{XYZ}^{(N,N,\mathcal{G})} \ket{\psi}=+3$. It is easy to generate other $N$-qubit GHZ states that also give an expectation value of $+3$ by appropriately flipping an even number of qubits with the $X$ operator. Such a transformation leaves the expectation value unchanged, since $XY X = -Y$ and $XZX = -Z$. A further odd number of transformations will cancel the negative sign. 
There are a total of $2^{N-2}$ such GHZ-like states that produce this expectation value, which can be obtained from the following counting formula
\begin{equation}
 \sum_{i=1}^{k-1} {4k \choose 2i} + \frac{1}{2}{4k \choose 2k}+1 =    2^{N-2}.
\end{equation}
The first term counts the number of allowed flips $2,4,\ldots N/2-1$ on all possible combinations of qubits, noting that by the $\mathbb{Z}_2$ symmetry of the GHZ state, no more than half of the qubits need be flipped.
The next term is the number of ways that we can flip exactly half of the qubits, with the factor $\frac{1}{2}$ eliminating $\mathbb{Z}_2$ symmetric duplicates, and the final addition of 1 just counts the original $\ket{\psi}$ state.

Similarly, one can construct GHZ-type states for systems composed of $N=4k+2$ qubits to give the expectation value $-3$.
Starting with the state
\begin{equation}
    \ket{\phi} = \frac{\ket{00\ldots 01} - \ket{11\ldots 10}}{\sqrt{2}},
\end{equation}
and by applying similar reasoning as for the case above, one can show that $\bra{\phi}W_{XYZ}^{(N,N,\mathcal{G})}\ket{\phi} = -3$.
The number of states obtainable from $\ket{\phi}$ by an even number of flips, and that therefore produce the same expectation value $-3$ is again $2^{N-2}$ as above.

Now, we show that the states identified above are also eigenstates of $W_{XYZ}^{(N,N,\mathcal{G})}$. Knowing that 
\begin{eqnarray}
X\ket{0} =  \ket{1} & \qquad & X\ket{1}  =  \ket{0},\\
Y\ket{0} =  i\ket{1} & \qquad & Y\ket{1}  =  (-i)\ket{0},\\
Z\ket{0} =  \ket{0} & \qquad & Z\ket{1}  =  -\ket{1},
\end{eqnarray}
we see that 
\begin{eqnarray}
X^{\otimes N}\left(\ket{\bar{0}} + \ket{\bar{1}} \right) & = & \left(\ket{\bar{1}} + \ket{\bar{0}} \right),\\
Y^{\otimes N}\left(\ket{\bar{0}} + \ket{\bar{1}} \right) & = & \left((i)^N\ket{\bar{1}} +(-i)^N \ket{\bar{0}} \right),\\
Z^{\otimes N}\left(\ket{\bar{0}} + \ket{\bar{1}} \right) & = & \left(\ket{\bar{0}} + (-1)^N \ket{\bar{1}} \right),
\end{eqnarray}
so that $\ket{\psi}$ is an eigenstate of $W_{XYZ}^{(N,N,\mathcal{G})}$ for $N=4k$, and the sum of the above terms gives its corresponding eigenvalue of $+3$. Similarly, other GHZ states that are generated by flipping an even number of qubits in $\ket{\psi}$ are also eigenvectors with the same $+3$ eigenvalue. One can incorporate the bit flip ($X$) operator to all $X^{\otimes N}, Y^{\otimes N}$ or $Z^{\otimes N}$ terms and use Pauli algebra, e.g. for flipping the first two qubits we will have terms like: $(XX)\otimes (XX) = \mathbbm{1}\otimes \mathbbm{1}$, $(YX)\otimes (YX) = -Z\otimes Z$ and $ZX\otimes ZX = - Y\otimes Y$. Thus, for example, the state
\begin{equation}
    \ket{\psi_{1,2}}= \ket{1100\ldots0} + \ket{0011\ldots1} = (X_1X_2)(\ket{\bar{0}} + \ket{\bar{1}}
\end{equation}
is an eigenstate of $W_{XYZ}^{(N,N,\mathcal{G})}$, since $X^{\otimes N} \ket{\psi_{1,2}} = Y^{\otimes N}\ket{\psi_{1,2}} = \ket{\psi_{1,2}}$, and perhaps less obviously
\begin{align}
   & \Yn \ket{\psi_{1,2}} = -Z_1Z_2 Y^{\otimes (N-2)}\left(\ket{\bar{0}} + \ket{\bar{1}} \right) \nonumber\\
   &= - (i)^{N-2}\ket{0011\ldots1}- (-1)^{N-2}\ket{1100\ldots0}\nonumber \\
   &= - (i)^{N-2}X_1X_2 \ket{1} - (-i)^{N-2} X_1X_2\ket{0} = \ket{\psi_{1,2}},
\end{align}
and similarly for all other states obtained by flipping an even number of qubits in $\ket{\psi}$.
For eigenstates with the $-3$ eigenvalue, the reasoning follows analogous steps.

Based on Theorem~\ref{thm:sep-W_M_k}, the upper and lower bounds of $W_{XYZ}^{(N,N,\mathcal{G})}$ are $+1$ and $-1$ respectively, since we have only a single generalized edge. Therefore, this observable serves as a provable entanglement witness \cite{Bae_2020} for an even number of qubits. For odd numbers of qubits, we numerically found the expectation value to be $\pm\sqrt{3}$. Similarly, if we restrict to only two terms (e.g. $XZ$, $XY$ and $YZ$), one can follow essentially the same logic and demonstrate that an even number of qubits give an expectation value $\pm2$ in GHZ states, while an odd number of qubits are numerically found to give an expectation value of $\pm\sqrt{2}$. Once again, these expectation values violate the separable thresholds of $\pm |E|=\pm 1$ from Theorem~\ref{thm:sep-W_M_k}.
\bibliography{refs}
\end{document}